\definecolor{darkgreen}{rgb}{0,.35,0}
\definecolor{darkblue}{rgb}{0,0,.5}
\definecolor{darkred}{rgb}{.6,0,0}
\newcommand{\softO}{{O\mskip1mu\tilde{\,}\mskip1mu}}
\numberwithin{equation}{section}
\newtheorem{theorem}{Theorem}[section]
\newtheorem*{theorem*}{Theorem}
\newtheorem{lemma}[theorem]{Lemma}
\newtheorem{corollary}[theorem]{Corollary}
\theoremstyle{definition}
\newtheorem{definition}[theorem]{Definition}
\newtheorem{fact}[theorem]{Fact}
\newcommand{\QED}{}
\newcommand{\tinyspace}{\mspace{1mu}}
\newcommand{\norm}[1]{\left\lVert\tinyspace#1\tinyspace\right\rVert}
\newcommand{\abs}[1]{\left\lvert\tinyspace #1 \tinyspace\right\rvert}
\newcommand{\ceil}[1]{\left\lceil #1 \right\rceil}
\newcommand{\floor}[1]{\left\lfloor #1 \right\rfloor}
\newcommand{\divs}{{\mskip3mu|\mskip3mu}}
\newcommand{\ball}[2]{\mathbf{B}_n(#1,#2)}
\DeclareMathOperator{\Vol}{Vol}
\newcommand{\NONE}{\textrm{\upshape ``\texttt{NONE}''}}
\DeclareMathOperator{\diag}{\textrm{diag}}
\DeclareMathOperator{\rem}{rem}
\newcommand{\ZZ}{{\mathbb{Z}}}
\newcommand{\QQ}{{\mathbb{Q}}}
\newcommand{\RR}{{\mathbb{R}}}
\newcommand{\CC}{{\mathbb{C}}}
\newcommand{\NN}{{\mathbb{N}}}
\renewcommand{\H}{{\mathcal{H}}}
\renewcommand{\L}{{\mathcal{L}}}
\newcommand{\F}{{\mathsf{F}}}
\newcommand{\Fq}{\mathbb{F}_q}
\newcommand{\Fqd}{\mathbb{F}_{q^d}}
\newcommand{\Fqe}{\mathbb{F}_{q^e}}
\newcommand{\tSpMulFqn}{{\bf SpMul}${}_{\Fq}^{(t)}(f,n)$\xspace}
\newcommand{\OrderFqe}{{\bf Order}${}_{\Fqe}(a,n)$\xspace}
\newcommand{\BinSpMulFq}{{\bf SpMul}${}_{\Fq}^{(2)}(f)$\xspace}
\DeclareMathOperator{\lcm}{lcm}
\DeclareMathOperator{\sparsity}{sparsity}
\begin{document}

\title{Computing sparse multiples of polynomials%
\thanks{The authors would like to thank the Natural Sciences and
   Engineering Research Council of Canada (NSERC), and MITACS}}

\author{Mark Giesbrecht, Daniel S. Roche, and Hrushikesh Tilak\\
Cheriton School of Computer Science, University of Waterloo}

\maketitle

\begin{abstract}
  We consider the problem of finding a sparse multiple of a
  polynomial.  Given $f\in\F[x]$ of degree $d$ over a field $\F$,
  and a desired sparsity $t$, our goal is to determine if there exists
  a multiple $h\in\F[x]$ of $f$ such that $h$ has at most $t$ non-zero
  terms,
  and if so, to find such an $h$.  When $\F=\QQ$ and $t$ is constant, we give a
  polynomial-time algorithm in $d$ and the size of coefficients in
  $h$.  When $\F$ is a finite field, we show that the problem is at
  least as hard as determining the multiplicative order of elements in
  an extension field of $\F$ (a problem thought to have complexity
  similar to that of factoring integers), and this lower bound is
  tight when $t=2$.
\end{abstract}

\section{Introduction}
\label{sec:intro}

Let $\F$ be a field, which will later be specified either to be the
rational numbers ($\QQ$) or a finite field with $q$ elements ($\Fq$).
We say a polynomial $h\in\F[x]$ is \emph{$t$-sparse} (or \emph{has
  sparsity $t$}) if it has at most $t$ nonzero coefficients in the
standard power basis; that is, $h$ can be written in the form
 \begin{equation}
\label{eq:hsparse}
h=h_1x^{d_1}+h_2x^{d_2}+\ldots+h_tx^{d_t}
~~\mbox{for $h_1,\ldots,h_t \in
  \F$ and $d_1,\ldots,d_t\in\NN$.}
\end{equation}
Sparse polynomials have a compact representation as a sequence of
coefficient-degree pairs $(h_1,d_1),\ldots,(h_t,d_t)$, which allow
representation and manipulation of very high degree polynomials.
Let $f\in\F[x]$ have degree $d$.  We examine the
computation of a $t$-sparse multiple of $f$.  That is, we
wish to determine 
if there exist $g,h\in\F[x]$ such that $fg=h$ and
$h$ has prescribed sparsity $t$,
and if so, to find such an $h$.  We do not attempt to find
$g$, as it may have a super-polynomial number of terms, even though
$h$ has a compact representation
(see Theorem~\ref{thm:bigmul}).

Sparse multiples over finite fields have cryptographic applications.
Their computation is used
in correlation attacks on LFSR-based stream ciphers 
\linebreak
\citep{AG2007,DidLai07}.
The security of the TCHo cryptosystem is also based on the conjectured
computational hardness of sparsest multiple computation over 
$\mathbb{F}_2[x]$ \citep{TCHo07}; our results provide further evidence
that this is in fact a computationally difficult problem.

Sparse multiples can
facilitate efficient arithmetic in extension fields \citep{BreZim03}
and in designing interleavers for error-correcting codes \citep{SadEA01}.
The linear algebra formulation in Section~\ref{sec:la} relates to
finding the minimum distance of a binary
linear code \citep{BerMcE78,Vardy97} as well as
finding ``sparsifications'' of linear systems \citep{EgnerMinkwitz98}.

One of our original motivations was to understand the complexity of
sparse polynomial \emph{implicitization} over $\QQ$ or $\RR$: 
Given a curve represented explicitly as a set of parametric rational
functions, find a sparse polynomial whose zero set contains all points on
the curve (see, e.g., \citet{EmiKot05}). 
This is a useful operation in computer aided geometric design
for facilitating various operations on the curve, and
work here can
be thought of as a univariate version of this problem.

We often consider the related problem of finding a sparse annihilator
for a set of points --- that is, a sparse polynomial with given roots.
This is exactly equivalent to our problem when the input polynomial $f$
is squarefree, and in the binomial case corresponds to asking whether a
given root can be written as a surd. This is also the problem we are
really interested in regarding implicitization, and allows us to build
on significant literature from the number theory community on the roots
of sparse polynomials.

In general, we assume that the desired sparsity $t$ is a constant.
This seems reasonable given that over a finite field, even for
$t=2$,
the problem is probably computationally hard 
(Theorem~\ref{theorem:hardnessoftsparse}).
In fact, we have reason to conjecture that the problem is intractable
over $\QQ$ or $\Fq$ when $t$ is a parameter.
Our algorithms are exponential in $t$ but
polynomial in the other input parameters when $t$ is constant.

Over $\QQ[x]$, the analysis must consider coefficient size,
and we will count machine word
operations in our algorithms to account for coefficient growth. We
follow the conventions of \cite{Len99} and define the \emph{height}
of a polynomial as follows.  Let $f\in\QQ[x]$ and $r\in\QQ$ the
least positive rational number such that $rf\in\ZZ[x]$. If $rf=\sum_i
a_i x^{e_i}$ with each $a_i\in\ZZ$, then the \emph{height} of $f$,
written $\H(f)$, is $\max_i |a_i|$.

We examine variants of the sparse multiple
problem over $\Fq$ and $\QQ$.  
Since every polynomial in $\Fq$ has a 2-sparse multiple of high degree,
given $f\in\Fq[x]$ and $n\in\NN$
we consider the problem of finding a
$t$-sparse multiple of $f$ with degree at most $n$.
For input $f\in\QQ[x]$ of degree
$d$, we consider algorithms which seek $t$-sparse multiples of height
bounded above by an additional input value $c\in\NN$.  We present
algorithms requiring time polynomial in $d$ and $\log c$.

The remainder of the paper is structured as follows.

In Section~\ref{sec:la}, we consider the straightforward linear
algebra formulation of the sparse multiple problem.  This is useful
over $\QQ[x]$ once a bound on the output degree is derived, and also
allows us to bound the output size.  In addition, it connects our
problems with related \NP-complete coding theory problems.

In Section~\ref{sec:rat-bin} we consider the problem of finding the
least-degree binomial multiple of a rational polynomial.  A
polynomial-time algorithm in the size of the input is given which
completely resolves the question in this case.  This works despite the
fact that we show polynomials with binomial multiples whose degrees
and heights are both exponential in the input size!

In Section~\ref{sec:rat-gen} we consider the more general problem of finding a
$t$-sparse multiple of an input $f\in\QQ[x]$.  Given a height bound
$c\in\NN$ we present an algorithm which requires polynomial time in
$\deg f$ and $\log c$, except in the very special case that $f$ 
has both non-cyclotomic and repeated cyclotomic factors.  

Section~\ref{sec:Fq-gen} shows that, even for $t=2$,
finding a $t$-sparse multiple of
a polynomial $f\in\Fq[x]$ is at least as hard
as finding multiplicative orders in an extension of $\Fq$
(a problem thought to be computationally difficult).
This lower bound is shown to be tight for $t=2$ 
due to an algorithm for computing binomial multiples that uses
order finding.

Open questions and avenues for
future research are discussed in Section~\ref{sec:conclusion}.

An extended abstract of some of this work appears in
\cite*{GieRocTil:2010}.  Some of this work and further explorations,
also appears in the Masters thesis of
\cite{Tilak:2010}.


\section{Linear algebra formulation} 
\label{sec:la}

The sparsest multiple problem can be formulated using linear algebra.
This requires specifying bounds on degree, height and sparsity; later
some of these parameters will be otherwise determined. This approach
also highlights the connection to some problems from coding theory.
We exhibit a randomized algorithm for finding a $t$-sparse multiple $h$ of
a degree-$d$ polynomial $f\in\QQ[x]$, given bounds $c$ and $n$ on the
height and degree of the multiple respectively.  
When $t$ is a constant,
the algorithm runs in time polynomial in $n$ and $\log \H(f)$
and returns the desired
output with high probability.  We also conjecture the intractability
of some of these problems, based on similar problems in coding theory.
Finally, we show that the construction of \cite{Vardy97} can be used
to show the problem of finding the sparsest vector in an integer
lattice is NP-complete, which was conjectured by
\cite{EgnerMinkwitz98}.

Let $\R$ be a principal ideal domain, with $f\in\R[x]$ of degree $d$
and $n \in \NN$ given.  Suppose $g,h\in\R[x]$ have degrees $n-d$
and $n$ respectively, with $f=\sum_{0}^{d} f_i x^i$, $g=\sum_{0}^{n-d}
g_ix^i$ and $h=\sum_{0}^{n}h_ix^i$. The coefficients in the equation $fg =
h$ satisfy the following linear system:
\begin{equation}
  \label{eq:linalg}
  \underbrace{\left[\begin{array}{cccc}
        f_0 &&& \\
        f_1 & f_0 && \\
        \vdots & f_1 & \ddots &\\
        f_d & \vdots & \ddots & f_0\\
        & f_d & \ddots & f_1 \\
        && \ddots & \vdots \\
        &&& f_d
      \end{array}\right]}_{A_{f,n}}
  \underbrace{\left[\begin{array}{c}
        g_0 \\ g_1 \\[4mm] \vdots \\[4mm] g_{n-d}
      \end{array}\right]}_{v_g}
  =
  \underbrace{\left[\begin{array}{c}
        h_0 \\ h_1 \\[10mm] \vdots \\[10mm] h_n
      \end{array}\right]}_{v_h}.
\end{equation}
Thus, a multiple of $f$ of degree at most $n$ and sparsity at most $t$
corresponds to a vector with at most $t$ nonzero entries (i.e.,
a $t$-sparse vector) in the linear span of $A_{f,n}$.  

If $f\in\R[x]$ is squarefree and has roots
$\{\alpha_1,\ldots,\alpha_d\}$, possibly over a finite extension of
$\R$, then the following also holds:
\begin{equation}
  \label{eq:rootPowers}
\underbrace{
\left[\begin{array}{cccc}
    1 & \alpha_1 & \cdots & \alpha_1^n\\
    1 & \alpha_2 & \cdots & \alpha_2^n\\
    \vdots & \vdots & \vdots & \vdots\\
    1 & \alpha_d & \cdots & \alpha_d^n\\
\end{array}\right]}_{A_n(\alpha_1,\ldots,\alpha_d)}
\left[\begin{array}{c}
    h_0 \\ h_1 \\[4mm] \vdots \\[4mm] h_n
 \end{array}\right]
= \mathbf{0}.
\end{equation}
Thus $t$-sparse multiples of a squarefree $f$ correspond to $t$-sparse
$\R$-vectors in the nullspace of $A_n(\alpha_1,\ldots,\alpha_d)$.

\subsection{Finding short $l_\infty$ vectors in lattices}
\label{sec:aks}

This technical section presents a randomized, polynomial-time
algorithm to find the shortest $l_\infty$ vector in a
constant-dimensional lattice. Our algorithm is a modification of
\cite{AjtaiKS2001}, based on the presentation by \cite{Reg04}, adapted
to the case of infinity norm.  Since this the techniques are
essentially drawn from the literature, and while necessary, are not
the central thrust of this current paper, full details are left to
Appendix~\ref{app:aks}.

Algorithm~\ref{alg:aksMod} below starts by computing a rough
approximation of the shortest $l_2$ vector using LLL \citep*{LLL82},
and then scales the lattice accordingly.  The main while loop then
consists of two phases: sampling and sieving. First, a large number of
random vectors $\{x_1,\ldots,x_m\}$ are sampled in an
appropriately-sized ball around the origin. We take these modulo the
basis $B$ to obtain vectors $\{y_1,\ldots,y_m\}$ with the property
that each $x_i-y_i$ is in the lattice of $B$. Next, we use a series of
sieving steps in the while loop in Step~\ref{step:sievingStart} to
find a small subset of the $y_i$ vectors that are close to every other
vector and use these as ``pivots''. The pivots are discarded from the
set, but all remaining lattice vectors $x_i-y_i$ are made
smaller. After this, the set $W_\gamma$ contains most lattice vectors
whose $l_2$ length is close to $\gamma$.

\begin{algorithm}[htbp]
  \caption{Shortest $l_\infty$ vector in a lattice}
  \label{alg:aksMod}
  \KwIn{Basis $U\in\ZZ^{n\times d}$ for an integer lattice $\L$ of dimension $n$
  and size $d\leq n$}
  \KwOut{Shortest $l_\infty$ vector in $\L$} 
  $\lambda \gets$ approximate $l_2$-shortest vector in $\L$ from
    \cite{LLL82}\;
  $B \gets (1/\norm{\lambda}_2)\cdot U$, 
  stored as a list of vectors $[b_1,\ldots,b_d]$\;
  \For{$k \in \{1,2,\ldots, 2n\}$ \label{step:aksforstart}}{
    $B \gets 1.5\cdot B$ \label{step:akssetB}\;
    $r_0 \gets n \max_i \norm{b_i}_2$ \;
    $\gamma \gets 3/2$\; 
    \While{$\gamma \leq 3\sqrt{n}+1$ \label{step:akswhile}} { 
      $m \gets \displaystyle\ceil{2^{(7+\ceil{\log \gamma})n} \log r_0}$\; 
      Sample points $\{x_1,\ldots,x_m\}$ uniformly and independently from
        $\ball{0}{\gamma}$, the $n$-dimensional ball of radius $\gamma$
        centered around $\mathbf{0}$\; 
      $S \gets \{1,2,\ldots,m\}$\; 
      $y_i \gets x_i \mod \mathcal{P}(B)$ for every $i \in S$,
        $\mathcal{P}(B)$ being the parallelogram of $B$ defined in the
        proof of Lemma~\ref{lemma:xMinusyIsInLattice}\;
      \label{step:initialEstimateForY} $r \gets r_0$\; 
      \While{$r > 2\gamma + 1$}{
        \label{step:sievingStart}
        $J \gets \emptyset$\; 
        \For{$i \in S$}{ 
          \lIf{$\exists{j\in J}$ such that $\norm{y_j-y_i} \leq r/2$}
            { $\eta_i \gets j$ }\;
          \lElse { $J \gets J \cup \{i\}$ }\; 
        }
        \label{step:sievingEnd} $S \gets S \setminus J$\; 
        $y_i \gets y_i + x_{\eta_i}-y_{\eta_i}$ for $i \in S$\; 
        $r \gets r/2 + \gamma$\;
      }
      $Y_\gamma \gets \{(x_i-y_i) \ |\ i \in S\}$\; 
      $W_\gamma \gets \{v-w \ |\ v,w\in Y_\gamma \text{ and } v\neq w\}$\; 
      $\gamma \gets 3\gamma/2$\; 
    } 
    $v_k \gets$ shortest $l_\infty$ vector in any $W_\gamma$
    \label{step:aksforend}\;
  }
  \KwRet{shortest $l_\infty$ vector in 
  $\{(\norm{\lambda}_2/1.5^k)\cdot v_k\ |\ k=1,2,\ldots,n\}$
  \label{step:aksret}}
\end{algorithm}

If we are fortunate enough that the shortest $l_2$ vector in
the lattice with basis $B$ 
set on Step~\ref{step:akssetB} has length between $2$ and $3$, then we
know that the shortest $l_\infty$ vector in this lattice must have
$l_2$ length between $2$ and $3\sqrt{n}$. By iterating $\gamma$ in the
appropriate range, we will encounter this shortest $l_\infty$ vector and
set it to $v_k$ on Step~\ref{step:aksforend} with high probability. 
We prove, given our approximate starting point from LLL,  we will be in this
``fortunate'' situation in at least one iteration through the outer for
loop.

The correctness and efficiency of the algorithm is given by the
following theorem, whose proof we defer to Appendix~\ref{app:aks}.

\newcommand{\aksworks}{
  Given a lattice basis $U\in\ZZ^{n\times d}$, Algorithm~\ref{alg:aksMod} 
  returns the shortest $l_\infty$ vector in the lattice of $U$,
  with probability at least $1-1/2^{O(n)}$, using 
  $2^{O(n\log n)}\cdot \norm{U}^{O(1)}$ bit operations.
}
\begin{theorem} 
  \label{theorem:aksWorks!} \aksworks
\end{theorem}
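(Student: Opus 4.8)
The plan is to adapt the sieving algorithm of \cite{AjtaiKS2001}, in the streamlined presentation of \cite{Reg04}, from the Euclidean norm to the infinity norm, re-deriving each volume and packing estimate for the new setting. Three things have to be verified: that the returned vector is really an $l_\infty$-shortest vector of the lattice, that the running time meets the stated bound, and that the probability of success is $1-1/2^{O(n)}$.

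For correctness I would begin with a ``fortunate scaling'' claim. The call to LLL \citep*{LLL82} runs in polynomial time and returns a nonzero $\lambda\in\L$ with $\lambda_1^{(2)}(\L)\le\norm{\lambda}_2\le 2^{(n-1)/2}\lambda_1^{(2)}(\L)$ and of bit-length polynomial in $n$ and $\log\norm{U}$, so after dividing by $\norm{\lambda}_2$ the scaled lattice has $l_2$-minimum in $[2^{-(n-1)/2},1]$; since each pass of the outer \textbf{for} loop multiplies $B$ by $1.5$ on Step~\ref{step:akssetB} and the interval $[2,3]$ has ratio exactly $1.5$, some iteration $k^\ast$ of the loop produces a lattice $B$ whose $l_2$-minimum lies in $[2,3]$. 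In such an iteration, writing $w$ for an $l_2$-shortest and $u$ for a nonzero $l_\infty$-shortest vector of $B$, we get $\norm{u}_2\le\sqrt{n}\,\norm{u}_\infty\le\sqrt{n}\,\norm{w}_\infty\le\sqrt{n}\,\norm{w}_2\le 3\sqrt{n}$ and $\norm{u}_2\ge\norm{w}_2\ge 2$, so $u$ has Euclidean length $\Theta(\sqrt{n})$ at worst, and the geometric schedule of $\gamma$, which runs from $3/2$ up to $3\sqrt{n}+1$, is designed so that some sampled value satisfies $\gamma=\Theta(\norm{u}_2)$. The key sieving lemma, discussed next, says that for such a $\gamma$ the vector $u$ lands in $W_\gamma$ with probability $1-2^{-\Omega(n)}$. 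Since, by Lemma~\ref{lemma:xMinusyIsInLattice}, every $x_i-y_i$ is a vector of the scaled lattice (and the updates only add lattice vectors, so this persists), every element of every $W_\gamma$ is such a vector; hence after the rescaling by $\norm{\lambda}_2/1.5^k$ on Step~\ref{step:aksret} each $v_k$ is a genuine vector of $\L$, and the candidate $(\norm{\lambda}_2/1.5^{k^\ast})\,v_{k^\ast}$ is among those compared there. Thus the returned vector has $l_\infty$-norm no larger than that of $u$ in iteration $k^\ast$ and, being a lattice vector, no smaller than the $l_\infty$-minimum of $\L$, so it is exactly an $l_\infty$-shortest vector.

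The technical core---and the step I expect to be the main obstacle---is the sieving lemma: in a fortunate scaling, for $\gamma=\Theta(\norm{u}_2)$, after sampling $m$ points from $\ball{0}{\gamma}$ and running the inner \textbf{while} loop, one has $u\in W_\gamma$ with probability $1-2^{-\Omega(n)}$. The argument has the classical AKS shape, but every estimate must tolerate radii that grow like $\sqrt{n}$. An invariant keeps each $y_i$ within $l_2$-distance $r$ of the origin, so the pivot set $J$ chosen in Step~\ref{step:sievingStart} consists of pairwise $(r/2)$-separated points inside a ball of radius $r$ and hence $\abs{J}\le 5^n$; as $r$ decreases geometrically toward $2\gamma$, there are only $O(\log r_0)$ sieving steps, so at most $5^n\cdot O(\log r_0)$ points are ever removed, and each surviving residual $x_i-y_i$ is a lattice vector of $l_2$-length $O(\gamma)$, of which, since $\lambda_1^{(2)}\ge 2$, there are at most $\bigl(O(\gamma)\bigr)^n\le 2^{O(n\log n)}$. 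Calling an index \emph{flippable} when $x_i-u$ also lies in $\ball{0}{\gamma}$, the intersection $\ball{0}{\gamma}\cap\ball{u}{\gamma}$ contains a ball of radius $\gamma-\norm{u}_2/2>0$, so for $\gamma=\Theta(\norm{u}_2)$ a $2^{-O(n)}$ fraction of $\ball{0}{\gamma}$ is flippable, and a Chernoff bound gives $2^{-O(n)}m$ flippable samples except with probability $2^{-\Omega(n)}$. Because $m=\ceil{2^{(7+\ceil{\log\gamma})n}\log r_0}$ is chosen so that this many flippable samples, minus the few points removed by sieving, still exceeds the number of possible residual values, pigeonhole yields two surviving flippable indices with the same residual $x_i-y_i$; finally, since $y_i=x_i\bmod\mathcal{P}(B)$ is unaffected when a surviving (hence never-used-as-a-pivot) point $x_i$ is replaced by $x_i-u$, one may flip each such index independently with probability $1/2$ after the run without disturbing anything else, and flipping exactly one of the two matched indices places $\pm u$ in $W_\gamma$; repeating this over many disjoint matched pairs drives the failure probability down to $2^{-\Omega(n)}$. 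Making the constants in all of these estimates fit together simultaneously is the delicate part.

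Finally, I would collect the running time and assemble the overall probability. LLL is polynomial; $r_0=n\max_i\norm{b_i}_2\le n\cdot 1.5^{2n}\norm{U}$, so $\log r_0=O(n+\log\norm{U})$, whence $m=2^{O(n\log n)}$ times a quantity polynomial in $n$ and $\log\norm{U}$; each of the $O(\log r_0)$ sieving steps performs $O(m\cdot 5^n)$ distance computations on vectors of polynomial bit-length, the \textbf{while} loop over $\gamma$ repeats $O(\log n)$ times, the outer \textbf{for} loop $O(n)$ times, and forming and scanning each $W_\gamma$ (of size at most $m^2$) costs $2^{O(n\log n)}$; multiplying through gives $2^{O(n\log n)}\cdot\norm{U}^{O(1)}$ bit operations. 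For the success probability it suffices that the single pair $(k^\ast,\gamma)$ identified above succeed---which fails with probability only $2^{-\Omega(n)}=1/2^{O(n)}$---because every other iteration can only add genuine, hence no-shorter, lattice vectors to the pool and so cannot corrupt the final answer.
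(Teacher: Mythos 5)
Your proposal is correct and follows essentially the same route as the paper's Appendix~A proof: LLL plus geometric rescaling to force a ``fortunate'' iteration with $2\le s(\L)<3$, the bound $\norm{v_\infty}_2\le 3\sqrt{n}$ to confine the relevant $\gamma$-range, the packing argument giving $\abs{J}\le 5^n$ per sieving round, the volume estimate that a $2^{-2n}$ fraction of the sampled ball is ``good/flippable,'' a pigeonhole step showing some residual lattice point $w$ accumulates exponentially many good samples (the paper's Lemma~\ref{lemma:lotsOfGoodPoints}, backed by the cap $\abs{Y_\gamma}\le(3\gamma+2)^n$), and the after-the-fact random $x_i\leftrightarrow x_i\pm v$ flipping argument exploiting the invariance of $y_i=x_i\bmod\mathcal{P}(B)$ under lattice shifts. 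The cost accounting (bounding $r_0$, $m$, the number of sieving rounds, $\gamma$-iterations, and $k$-iterations) also matches. The only parts you leave as ``delicate'' are exactly the parts the paper isolates into Lemmas~\ref{lemma:sieveLemma}--\ref{lemma:lotsOfGoodPoints}, so this is a faithful sketch of the same proof rather than an alternative one.
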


\subsection{Finding a sparse multiple of bounded height and \nohyphens{degree}}

We now present an algorithm to find the sparsest bounded-degree,
bounded-height multiple $h\in\QQ[x]$ of an input $f\in\QQ[x]$.  Since
$\H$ is invariant under scaling,
we may assume that $f,g,h\in\ZZ[x]$ .

The basic idea is the following. Having fixed the positions at which
the multiple $h$ has nonzero coefficients, finding a low-height
multiple is reduced to finding the nonzero vector with smallest
$l_\infty$ norm in the image of a small lattice.

Let $I=\{i_1,\ldots,i_t\}$ be a $t$-subset of $\{0,\ldots,n\}$,
and $A^I_{f,n}\in\ZZ^{(n-t+1)\times
(n-d+1)}$ the matrix $A_{f,n}$ with rows $i_1,\ldots,i_t$
removed. Denote by $B^I_{f,n}\in\ZZ^{t\times (n-d+1)}$ the matrix
consisting of the removed rows $i_1,\ldots,i_t$ of the matrix
$A_{f,n}$.  Existence of a $t$-sparse multiple
$h=h_{i_1}x^{i_1}+h_{i_2}x^{i_2}+\cdots+h_{i_t}x^{i_t}$ of input $f$ is equivalent to the existence of a vector $v_g$
such that $A^I_{f,n} \cdot v_g = \mathbf{0}$ and $B^I_{f,n} \cdot v_g =
[h_{i_1},\ldots,h_{i_t}]^T$.

Now let $C^I_{f,n}$ be a matrix whose columns span the nullspace of the matrix $A^I_{f,n}$. Since $A_{f,n}$ has full
column rank, the nullspace of $A^I_{f,n}$ has dimension $s\leq t$ and
$C^I_{f,n} \in \ZZ^{(n-d+1) \times s}$.  Thus, a $t$-sparse
multiple $h=h_{i_1}x^{i_1}+\cdots+h_{i_t}x^{i_t}$ of $f$ exists if
and only if there exists a $v \in \ZZ^s$ such that
\begin{equation}
  \label{eq:linalg-image-vector-equation}
  B^I_{f,n} \cdot C^I_{f,n} \cdot v = [h_{i_1},\ldots,h_{i_t}]^T.
\end{equation}
Note that $B^I_{f,n} \cdot C^I_{f,n}\in\ZZ^{t\times s}$. Our approach,
outlined in Algorithm~\ref{alg:boundednc}, is to generate this lattice
and search for a small, $t$-sparse vector in it. For completeness, we
first define the subset ordering used in the search.

\begin{definition}
  Let $a=(a_1,a_2,\ldots,a_k)$ and $b=(b_1,b_2,\ldots,b_k)$ be two
  $k$-tuples. $a$ precedes $b$ in \emph{reverse lexicographical order}
  if and only if there exists an index $i$ with $1\leq i\leq k$ such
  that $a_i < b_i$, and
  for all $j$ with $i<j\leq k$, $a_j = b_j$.
\end{definition}

\begin{algorithm}[htbp]
  \caption{Bounded-Degree Bounded-Height Sparsest Multiple}
  \label{alg:boundednc}
  \KwIn{$f\in\ZZ[x]$ and $t,n,c\in\NN$}
  \KwOut{A $t$-sparse multiple $h\in\ZZ[x]$ of $f$ with $\deg(h) \leq n$ and $\H(h) \leq c$, or \NONE}
  \For{$s = 2,3,\ldots,t$}{
    \ForEach{ $s$-subset $I=(0,i_2,\ldots,i_s)$ of $\{0,1,\ldots,n\}$,
      \linebreak
      sorted in reverse lexicographic order,
    \label{lamul:subsets}}{
      Compute matrices $A^I_{f,n}$ and $B^I_{f,n}$ as defined above\;
      \If{$A^I_{f,n}$ does not have full column rank}{
        Compute matrix $C^I_{f,n}$, a kernel basis for
        $A^I_{f,n}$\; \label{step:SNF}
        $\mathbf{h} \gets$ shortest $l_\infty$ vector in the 
        lattice of $B^I_{f,n} \cdot C^I_{f,n}$ from
        Algorithm~\ref{alg:aksMod} \; 
        \label{step:latticeShortVector}
        \lIf{$\norm{\mathbf{h}}_\infty \leq c$} {
          \KwRet{$h_1+h_2x^{i_2}+\cdots+h_tx^{i_t}$}\;
        }
      }
    }
  }
  \KwRet{\NONE}
\end{algorithm}

The following lemma shows how to compute Step~\ref{step:SNF}
efficiently using the Smith normal form.
\begin{lemma}
  \label{lem:duallat}
  Given $T\in\ZZ^{k\times \ell}$ with $k\geq\ell$ and nullspace of
  dimension $s$, we can compute a $V\in\ZZ^{\ell\times s}$ such
  that the image of $V$ equals the nullspace of $T$.  The algorithm
  requires $\softO(k\ell^2 s\log\norm{T})$ bit operations 
  (ignoring logarithmic factors).
\end{lemma}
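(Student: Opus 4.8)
The plan is to reduce the problem to computing a Smith normal form of $T$ over $\ZZ$. First I would recall that the Smith normal form gives unimodular matrices $P\in\ZZ^{k\times k}$ and $Q\in\ZZ^{\ell\times\ell}$ with $PTQ = S$, where $S$ is diagonal with diagonal entries $d_1,\dots,d_{\ell-s}$ nonzero and the remaining $s$ diagonal entries zero (here $s$ is exactly the dimension of the right nullspace, equivalently $\ell$ minus the rank of $T$). From $PTQ=S$ it follows that $TQ=P^{-1}S$, so the last $s$ columns of $Q$ are mapped by $T$ into the zero column, i.e.\ they lie in the nullspace. Since $Q$ is unimodular these $s$ columns are $\ZZ$-linearly independent and, because the nullspace of $T$ as a sublattice of $\ZZ^\ell$ is a direct summand of rank exactly $s$ (it is the kernel of a map to a free module), these columns in fact generate the full integer nullspace, not merely a finite-index sublattice. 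So I would take $V$ to be the matrix formed by the last $s$ columns of $Q$; then $TV=\mathbf 0$ and $\im V = \ker T$, as required.

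The key steps in order are: (1) compute the Smith normal form of $T$ together with the transforming matrix $Q$ (we do not need $P$); (2) read off $s$ as the number of zero diagonal entries, equivalently $\ell$ minus the number of invariant factors; (3) set $V$ to be the trailing $s$ columns of $Q$; (4) verify $\im V=\ker T$ exactly (not up to finite index) using unimodularity of $Q$ and the fact that $\ker T$ is a pure sublattice of $\ZZ^\ell$; (5) bound the bit complexity. For step (5) I would invoke a standard deterministic Smith-form-with-transforms algorithm — e.g.\ the algorithms of Storjohann or the Kannan--Bachem--type approaches — whose running time for a $k\times\ell$ matrix with $k\ge\ell$ and entries bounded by $\norm{T}$ is $\softO(k\ell^2\log\norm{T})$ word operations up to the multiplier accounting for the rank deficiency $s$ (one pays roughly an extra factor of $s$ because the transform matrix $Q$ and intermediate entries can be as large as a product of $s$ pivots, each of bit-length $\softO(\ell\log\norm T)$ by Hadamard's bound), giving $\softO(k\ell^2 s\log\norm T)$; I would also note that computing only the trailing columns of $Q$ rather than all of $Q$ keeps the output size within this bound.

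The main obstacle is step (5): getting the bit complexity to match $\softO(k\ell^2 s\log\norm{T})$ cleanly rather than a looser bound. The subtlety is controlling coefficient growth in the transformation matrix $Q$ — naive Gaussian-style elimination over $\ZZ$ can blow up intermediate entries super-polynomially, so one must cite an algorithm that uses modular techniques or fraction-free elimination with Hadamard-bounded entries, and then argue that since we only want the $s$ nullspace columns the relevant sub-block of $Q$ has entries of bit-length $\softO(\ell\log\norm T)$, yielding the claimed $s$-factor overhead and no more. A secondary, easier point to get exactly right is the purity argument in step (4): one must be careful that the last $s$ columns of $Q$ span $\ker T\cap\ZZ^\ell$ and not a proper finite-index subgroup, which follows because $\ZZ^\ell/\ker T$ embeds in $\ZZ^k$ and is therefore torsion-free, forcing $\ker T$ to be a direct summand spanned by any $\ZZ$-basis of it — and the trailing columns of a unimodular $Q$ with $TQ$ having those columns zero do form such a basis.
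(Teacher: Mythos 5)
Your proposal is correct and takes essentially the same approach as the paper: compute the Smith normal form with right transform, read off the nullspace basis from the trailing $s$ columns of the right unimodular factor, and cite Storjohann for the bit-complexity bound. The only cosmetic difference is that the paper writes the factorization as $T=PSQ$ (and so must invert $Q$ to get the columns you read off directly from $PTQ=S$); your direct-summand/purity discussion is a more elaborate justification than the paper's brief remark that unimodularity of $Q$ gives an integer basis, but both arguments establish the same fact.
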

\begin{proof} 
  First compute the Smith normal form of the matrix: $T=PSQ$ for
  diagonal matrix 
  $S=\diag(\delta_1,\ldots,\delta_{\ell-s},0,\ldots,0)\in\ZZ^{k\times\ell}$
  and unimodular matrices $P\in\ZZ^{k\times k}$ and
  $Q\in\ZZ^{\ell\times\ell}$.
  \cite{Sto00} gives efficient algorithms to compute such a $P,S,Q$ with
  $\softO(k\ell^2s\log\norm{T})$ bit operations.

  Then since any vector $\mathbf{v}$ in the nullspace of $T$ satisfies
  $PSQ\mathbf{v}=\mathbf{0}$, $SQ\mathbf{v}=\mathbf{0}$ also and $\mathbf{v}$
  is in the nullspace of $SQ$. 
  Next compute the inverse of $Q$; this can be accomplished with the same
  number of bit operations since $\ell \leq k$. Define $V$ to be the
  last $s$ columns of $Q^{-1}$.
  Due to the diagonal structure of $S$, $V$
  must be a nullspace basis for $SQ$, and
  furthermore $V$ has integer entries since $Q$ is unimodular.
  \QED
\end{proof}

The correctness and efficiency of Algorithm~\ref{alg:boundednc} can then be summarized as follows.
\begin{theorem}
  \label{thm:a1}
  Algorithm~\ref{alg:boundednc} correctly computes a $t$-sparse
  multiple $h$ of $f$ of degree $n$ and height $c$, if it exists, with
  $(\log \H(f))^{O(1)} \cdot n^{O(t)} \cdot 2^{O(t \log{t})}$ bit operations. The sparsity
  $s$ of $h$ is minimal over all multiples with degree less than $n$
  and height less than $c$, and the degree of $h$ is minimal over all such
  $s$-sparse multiples.
\end{theorem}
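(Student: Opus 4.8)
The plan is to derive correctness and the two minimality statements from the linear-algebra correspondence set up around \eqref{eq:linalg}--\eqref{eq:linalg-image-vector-equation}, and then to bound the running time using Lemma~\ref{lem:duallat} and Theorem~\ref{theorem:aksWorks!}. After the usual preprocessing I assume $f\in\ZZ[x]$ is primitive, that $x\nmid f$ (otherwise divide out the maximal power of $x$ and restore it at the end; after this every multiple of $f$ has a nonzero constant term, which is what makes it harmless for the algorithm to scan only supports $I$ containing $0$), and that $f$ is not a monomial (so the sparsest multiple has at least two terms, matching the loop bound $s\ge 2$); I also recall that $A_{f,n}$ has full column rank, since multiplication by the nonzero polynomial $f$ is injective on $\ZZ[x]$.

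\emph{Soundness.} If the algorithm outputs $h=h_1+h_2x^{i_2}+\cdots+h_sx^{i_s}$ with $(h_1,\ldots,h_s)^{\t}=B^I_{f,n}C^I_{f,n}v$ for an integer vector $v$, then $v_g:=C^I_{f,n}v$ satisfies $A^I_{f,n}v_g=\mathbf 0$ and $B^I_{f,n}v_g=(h_1,\ldots,h_s)^{\t}$, so $A_{f,n}v_g$ is exactly the coefficient vector of $h$ and $f\mid h$ by \eqref{eq:linalg}; the constraints $\deg h\le n$ and $\H(h)=\norm{(h_1,\ldots,h_s)}_\infty\le c$ are ensured by Step~\ref{step:latticeShortVector}. \emph{Completeness and minimality.} By Gauss's lemma the $\ZZ$-column lattice of $A_{f,n}$ is saturated in $\ZZ^{n+1}$, so (using that Lemma~\ref{lem:duallat} returns a basis of the \emph{integer} nullspace) the lattice of $B^I_{f,n}C^I_{f,n}$ is precisely the set of $I$-coordinate restrictions of coefficient vectors of integer multiples of $f$ supported in $I$. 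Now let $s^{\ast}\le t$ be the least sparsity of a multiple of $f$ of degree $\le n$ and height $\le c$ (if there is none, there is nothing to prove), realized by some $h^{\ast}$ which, after dividing out a power of $x$, we may take to have nonzero constant term; let $J^{\ast}\ni 0$ be its support. Since a nonzero $v_g$ with $A_{f,n}v_g$ supported in $J^{\ast}$ exists, $A^{J^{\ast}}_{f,n}$ lacks full column rank, and the previous observation puts the coefficient vector of $h^{\ast}$ in the lattice of $B^{J^{\ast}}_{f,n}C^{J^{\ast}}_{f,n}$; hence that lattice contains a vector of $l_\infty$-norm $\le\H(h^{\ast})\le c$, and the algorithm returns no later than iteration $(s^{\ast},J^{\ast})$. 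It cannot return earlier at sparsity $s<s^{\ast}$, since a lattice vector of $l_\infty$-norm $\le c$ with all coordinates nonzero would exhibit an $s$-sparse multiple; and at level $s^{\ast}$ the shortest vector returned has no zero coordinate either, because a zero coordinate would---after dividing out a power of $x$, using $x\nmid f$---yield a multiple of sparsity $<s^{\ast}$. Thus the output is genuinely $s^{\ast}$-sparse; and because the $s^{\ast}$-subsets are scanned in increasing reverse-lexicographic order, which orders them first by their largest element, i.e.\ by the degree of the associated multiple, the first admissible subset realizes the minimum degree over all $s^{\ast}$-sparse multiples of $f$ of degree $\le n$ and height $\le c$ (the minimum being attained by a multiple with nonzero constant term, since dividing out the lowest power of $x$ leaves the sparsity unchanged and does not increase the degree). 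This gives both minimality claims.

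\emph{Running time.} The loops range over $\sum_{s=2}^{t}\binom{n}{s-1}=n^{O(t)}$ supports. For each, building the submatrices $A^I_{f,n},B^I_{f,n}$ of $A_{f,n}$ is cheap, and Lemma~\ref{lem:duallat} applied to $A^I_{f,n}$ both detects rank deficiency and, when it occurs, outputs a kernel basis $C^I_{f,n}$ of nullity $s'\le t$ in $\softO(n^3 t\log\H(f))$ bit operations; by standard Hadamard-type bounds the entries of $C^I_{f,n}$, and hence of $B^I_{f,n}C^I_{f,n}\in\ZZ^{s\times s'}$, have bit length $\softO(n\log\H(f))$. The $s'$ columns of $B^I_{f,n}C^I_{f,n}$ are independent (otherwise $A_{f,n}$ would not have full column rank), so they form a lattice basis of dimension $s'\le t$ with entries of bit length $\softO(n\log\H(f))$, and Theorem~\ref{theorem:aksWorks!} returns its shortest $l_\infty$ vector using $2^{O(t\log t)}$ times a polynomial in $n$ and $\log\H(f)$. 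Multiplying the per-support cost by the number $n^{O(t)}$ of supports yields the stated bound $(\log\H(f))^{O(1)}\cdot n^{O(t)}\cdot 2^{O(t\log t)}$.

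\emph{Main obstacle.} I expect the delicate points to be, first, the bit-size control---bounding the entries of $C^I_{f,n}$ and then of $B^I_{f,n}C^I_{f,n}$ so that Theorem~\ref{theorem:aksWorks!} contributes only the factor $2^{O(t\log t)}$ beyond a polynomial in $n$ and $\log\H(f)$, rather than something exponential in $n$---and, second, the degree-minimality argument, which hinges both on the precise meaning of reverse-lexicographic order and on the observation (resting on the normalization $x\nmid f$) that a shortest lattice vector with a vanishing coordinate would contradict minimality of $s^{\ast}$.
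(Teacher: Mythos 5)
Your proof is correct and follows the same linear-algebra approach as the paper, but is substantially more careful: you supply the completeness argument (saturation of the $\ZZ$-column lattice of $A_{f,n}$ via Gauss's lemma combined with the integrality of the kernel basis from Lemma~\ref{lem:duallat}), the argument that a shortest lattice vector with a zero coordinate would contradict minimality of $s^{\ast}$, and the observation that reverse-lexicographic order on supports containing $0$ is exactly order by degree of the associated multiple---all of which the paper's two-paragraph proof leaves implicit. You also quietly fix two small slips in the paper's proof: the subset count is $\sum_{s=2}^{t}\binom{n}{s-1}$, not $\sum_{s=2}^{t}\binom{n-1}{s-1}$, and the entry-size bounds for $C^I_{f,n}$ should be expressed in terms of $\H(f)$ (as you do, via Hadamard bounds) rather than $\H(h)$.
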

\begin{proof}
  The total number of iterations of the for loops is $\sum_{s=2}^t
  \binom{n-1}{s-1} < n^t$. Computing the rank of $A^I_{f,n}$, and
  computing the matrices $B^I_{f,n}$ and $C^I_{f,n}$ can each be done
  in polynomial time by Lemma~\ref{lem:duallat}. The size of the
  entries of $C^I_{f,n}$ is bounded by some polynomial
  $(\log{\H(h)}+n)^{O(1)}$. The computation of the shortest $l_\infty$
  vector can be done using $2^{O(t \log{t})}$ operations on numbers of length
  $(\log{\H(h)}+n)^{O(1)}$,
  by Theorem~\ref{theorem:aksWorks!}.

  The minimality of sparsity and degree comes from the ordering of the
  for loops. Specifically, the selection of subsets in Step~\ref{lamul:subsets}
  is performed in \emph{reverse lexicographic order}, so that column subsets
  $I$ corresponding to lower degrees are always searched first.
  \QED
\end{proof}

\subsection{Relationship to NP-hard problems}

Note that the above algorithms require time exponential in $t$, and
are only polynomial-time for constant $t$.  It is natural to ask
whether there are efficient algorithms which require time polynomial
in $t$.  We conjecture this problem is probably
\NP-complete, and point out two results of
\cite{Vardy97} and \cite{GurVar05} on related problems that are known to
be hard.

The formulation \eqref{eq:rootPowers} seeks the sparsest vector in the
nullspace of a (structured) matrix.  For an unstructured matrix over finite fields, this
is the problem of finding the minimum distance of a linear code, 
shown by \cite{Vardy97} to be NP-complete. The same problem over integers translates into finding the sparsest vector in an integer lattice. It was posed as an open problem in \cite{EgnerMinkwitz98}. Techniques similar to \cite{Vardy97} prove that this problem is also \NP-complete over the integers, a fact proved in Theorem~\ref{thm:egnerMinkwitzTheorem}.

Of course, the problem may be easier for structured matrices as in
\eqref{eq:rootPowers} However, \cite{GurVar05} show that maximum
likelihood decoding of cyclic codes, which seeks sparse solutions to
systems of equations of similar structure to \eqref{eq:rootPowers}, is
also NP complete.  They do require the freedom to choose a
right-hand-side vector, whereas we insist on a sparse vector in the
nullspace.  While these two results certainly do not prove that the
bounded-degree sparsest multiple problem is NP-complete, they support
our conjecture that it is.

\begin{theorem}\label{thm:egnerMinkwitzTheorem} The problem \emph{SparseLatticeVector} of computing the vector with the least Hamming weight in an integer lattice specified by its basis is \NP-complete.
\end{theorem}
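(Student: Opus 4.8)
The plan is to follow the strategy of \cite{Vardy97}, who proved the analogous statement for the minimum distance of a binary linear code, and to transport that construction from $\mathbb{F}_2$ to $\ZZ$. First I would dispatch membership in \NP\ for the associated decision problem: given a basis $U\in\ZZ^{n\times d}$ --- which, being a basis, we may take to have full column rank --- and $k\in\NN$, does the lattice of $U$ contain a nonzero vector with at most $k$ nonzero coordinates? A nondeterministic algorithm simply guesses a set $S\subseteq\{1,\dots,n\}$ with $\abs{S}\le k$ and accepts iff the submatrix of $U$ formed by the rows \emph{outside} $S$ fails to have full column rank; this is checkable in polynomial time, and it is correct because that submatrix is rank-deficient exactly when there is a nonzero $x\in\ZZ^{d}$ with $(Ux)_i=0$ for all $i\notin S$, in which case $Ux$ is a genuine nonzero lattice vector whose support lies in $S$ (using the full column rank of $U$). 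The point is that one guesses only the support, never the possibly exponentially large vector itself.

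For \NP-hardness I would reduce from the problem, shown \NP-complete in \cite{Vardy97}, of deciding whether a binary code given by a parity-check matrix $H\in\mathbb{F}_2^{m\times n}$ has a nonzero codeword of Hamming weight at most $d$ (one could equally start from the coset-weights problem of \cite{BerMcE78}). We may assume $H$ has no zero column and pairwise distinct columns, as otherwise the code has a codeword of weight $1$ or $2$ and the instance is trivial. Arithmetic modulo $2$ is encoded over $\ZZ$ through the observation that $Hx\equiv\mathbf{0}\pmod 2$ if and only if there is $y\in\ZZ^{m}$ with $H_{\ZZ}x-2y=\mathbf{0}$, where $H_{\ZZ}$ denotes $H$ read over $\ZZ$. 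A naive attempt is then to take the lattice equal to the integer kernel of $[\,H_{\ZZ}\mid -2I_m\,]$, into which a $0/1$ codeword $c$ of weight $w$ lifts as $(c,\tfrac{1}{2}H_{\ZZ}c)$; the actual basis to be handed to \emph{SparseLatticeVector} is then obtained by computing a kernel basis exactly as in Lemma~\ref{lem:duallat}.

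This naive lattice does not work directly: the ``carry'' block $\tfrac{1}{2}H_{\ZZ}c$ contributes extra Hamming weight that varies with $c$, and there are spurious low-weight kernel vectors exploiting the freedom to use arbitrary integer (rather than $0/1$) coefficients, as well as vectors built from the $-2I_m$ columns. Following Vardy, I would append a gadget --- a block of auxiliary coordinates together with extra generators --- engineered so that (i) every honest lift of a codeword $c$ can be rewritten to have total Hamming weight exactly $\mathrm{wt}(c)+\kappa$ for a fixed constant $\kappa=\kappa(H)$, and (ii) any nonzero lattice vector whose ``code block'' is not $\pm$ a genuine $0/1$ codeword, or which forces a nonzero carry coordinate, is stamped with a pattern of Hamming weight exceeding $d+\kappa$ in the auxiliary block. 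Taking the target weight to be $k=d+\kappa$ then makes the reduction faithful, and all matrices constructed have size polynomial in that of $H$.

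The hard part is exactly clauses (i) and (ii). Over $\mathbb{F}_2$ a codeword's coefficient vector is automatically $0/1$ and there are no carries, whereas over $\ZZ$ a lattice vector may use unboundedly large integer coefficients and may activate the auxiliary carry variables, so the gadget must simultaneously flatten the codeword-dependent carry weight to the fixed overhead $\kappa$ and penalise every coefficient vector that departs from $0/1$ on the code block --- not merely those using the $-2I_m$ generators. Designing the auxiliary generators so that honest vectors land at Hamming weight precisely $\mathrm{wt}(c)+\kappa$ while all other nonzero lattice vectors exceed $d+\kappa$, and verifying that a true minimum-weight codeword does survive at weight $d+\kappa$, is where Vardy's padding idea, combined with a scaling argument bounding what a non-$0/1$ coefficient vector must cost in the gadget coordinates, carries the proof; once that is pinned down, correctness of the reduction and the polynomial size bound are routine.
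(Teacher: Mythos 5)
Your membership-in-\NP\ argument is correct and in fact more carefully spelled out than the paper's (which merely says ``guess the positions; the rest is linear algebra''): guessing the support $S$ and testing whether the rows of $U$ outside $S$ are column-rank-deficient is exactly the right certificate, and your remark that only the support is guessed, never the possibly huge vector, is the key point.

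The hardness half, however, has a genuine gap. You correctly diagnose why a naive lift of Vardy's $\mathbb{F}_2$ construction to $\ZZ$ via the kernel of $[\,H_{\ZZ}\mid -2I_m\,]$ fails: the carry block has codeword-dependent weight, the coefficient vector need not be $0/1$, and vectors such as $(2e_j,\,H_j)$ give spurious low-weight kernel elements. But you then hand-wave the repair: the entire content of the reduction would be the gadget satisfying your clauses (i) and (ii), and you explicitly leave its construction as ``the hard part'' together with an unspecified ``scaling argument.'' Until that gadget is exhibited and its two properties proved, this is a proof plan, not a proof; and it is not at all obvious that Vardy's padding, which exploits $\mathbb{F}_2$-linearity, can be made to simultaneously flatten carries and penalize arbitrary integer coefficients at polynomial blowup.

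For comparison, the paper takes a different and much cleaner route: a direct Cook reduction from \emph{Subset Sum}, with no appeal to codes at all. Given distinct $z_1,\dots,z_n$, target $t$, and size $w<n$, it forms the $(w+1)\times(n+1)$ matrix $M_w$ whose first $n$ columns are the power vectors $(1,z_i,\dots,z_i^{w})^{\t}$ and whose last column is $(0,\dots,0,1,t)^{\t}$. Any $w$ columns are independent (Vandermonde), so null vectors have weight at least $w+1$; and a determinant expansion using the Vandermonde and first-order alternant identities shows a set of $w+1$ columns is dependent iff it includes the last column and $t-\sum_j z_{i_j}$ vanishes. Feeding a kernel basis of $M_w$ (computed as in Lemma~\ref{lem:duallat}) to \emph{SparseLatticeVector} with target weight $w+1$ then decides \emph{Subset Sum}. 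This sidesteps every $\mathbb{F}_2$-to-$\ZZ$ lifting issue you ran into, at the cost of using a determinant identity rather than coding-theoretic machinery.
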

\begin{proof}
  To see that the problem is in \NP, a nondeterministic machine can just guess the positions at which the lattice vector is nonzero. The rest is a standard linear algebra problem. 

  We now show \NP-hardness by giving a Cook-reduction from the problem \emph{Subset Sum}, a well-known \NP-complete problem.

  We note first the standard formulation of \emph{Subset Sum}: Given
  distinct integers $\{z_1,\ldots,z_n\}$, a target integer $t$ and a positive
  integer $w \leq n$, is there a non-empty subset $S \subseteq
  \{1,\ldots,n\}$ of size exactly $w$ such that such that $\sum_{i \in
  S} z_i = t$?

If $w=n$, the problem can be solved by comparing the sum $\sum_i z_i$ with $t$. Therefore, we can assume that $w < n$. Given an instance $\{z_1,\ldots,z_n\}$ of subset sum, to check if there is a subset of size $w<n$ summing to $t$, the reduction first creates the following matrix:
\begin{equation} \label{egnerMinkwitzMatrix}
  M_w=
  \left[\begin{array}{ccccc}
      1 & 1 & \cdots & 1 & 0\\
      z_1 & z_2 & \cdots & z_n & 0\\
      \vdots   & \vdots  & \vdots & \vdots & \vdots\\
      z_1^{w-1} & z_2^{w-1} & \cdots & z_n^{w-1} & 1\\
      z_1^w & z_2^w & \cdots & z_n^w & t\\
    \end{array}
  \right]
  \in\ZZ^{(w+1) \times (n+1)}.
\end{equation}

Lemma~\ref{egnerMinkwitzLemma} (stated and proved below) shows that $M_w$
has a null vector of sparsity at most $w+1$ if and only if
$z_{i_1}+z_{i_2}+\cdots+z_{i_w}=t$ for some $i_1<i_2<\ldots<i_w$.

To create an instance of \emph{SparseLatticeVector}, the reduction creates a matrix $N$ such that the
  columns of $N$ span the kernel of $M$ via $\ZZ$-linear combinations (see Lemma~\ref{lem:duallat}).  The instance
  $(\L,w)$, where $\L$ is the column lattice $\L$ of $N$, is fed to an
  algorithm claiming to solve the \emph{Sparse Vector Problem}.
  \QED
\end{proof}

\begin{lemma}
  \label{egnerMinkwitzLemma} 
  The matrix $M_w$ from equation~\eqref{egnerMinkwitzMatrix} has a
  null vector of Hamming weight $w+1$ if and only if
  $z_{i_1}+z_{i_2}+\cdots+z_{i_w}=t$ for some $i_1<i_2<\ldots<i_w$.
\end{lemma}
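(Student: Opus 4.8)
The plan is to exploit the Vandermonde-like structure of $M_w$. Write the columns of $M_w$ as $c_1,\ldots,c_n$ (the ``data'' columns) together with one extra column $c_{n+1}=(0,0,\ldots,0,1,t)^{\t}$. A null vector is a tuple $(\lambda_1,\ldots,\lambda_n,\mu)$ with $\sum_j \lambda_j c_j + \mu c_{n+1}=\mathbf 0$. Reading off the first $w$ rows, the conditions are $\sum_j \lambda_j z_j^{k}=0$ for $k=0,1,\ldots,w-2$ and $\sum_j \lambda_j z_j^{w-1} = -\mu$; the last row gives $\sum_j \lambda_j z_j^{w} = -\mu t$. First I would handle the ``if'' direction: given $i_1<\cdots<i_w$ with $z_{i_1}+\cdots+z_{i_w}=t$, set $p(x)=\prod_{\ell=1}^{w}(x-z_{i_\ell})$, a monic polynomial of degree $w$ with $w$ distinct roots among the $z_j$. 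Write $p(x)=x^w - t x^{w-1} + \cdots$ (the coefficient of $x^{w-1}$ is $-(z_{i_1}+\cdots+z_{i_w})=-t$ by hypothesis). The coefficients $(\lambda_1,\ldots,\lambda_n)$ obtained by placing the lower-degree coefficients of $p$ (suitably signed) at positions $i_1,\ldots,i_w$ and zeros elsewhere, together with $\mu=1$, give a null vector: the first $w$ rows reconstruct the evaluation identities $p(z_{i_\ell})=0$ split into its $x^0,\ldots,x^{w-1}$ parts, and the final row works out precisely because the $x^{w-1}$-coefficient of $p$ equals $-t$, so $\sum_j\lambda_j z_j^w = t\cdot(\text{something}) $ matches $-\mu t$. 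This vector has at most $w$ nonzero $\lambda$'s plus $\mu\neq 0$, hence Hamming weight at most $w+1$; I would also note weight exactly $w+1$ since a nonzero null vector with $\mu=0$ would force a nonzero polynomial of degree $\le w-1$ vanishing at $\ge w$ points.

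For the ``only if'' direction, suppose $v=(\lambda_1,\ldots,\lambda_n,\mu)$ is a null vector of Hamming weight $w+1$. If $\mu=0$, then the $\lambda_j$ satisfy $\sum_j\lambda_j z_j^k=0$ for $k=0,\ldots,w$; the nonzero $\lambda_j$'s (at most $w+1$ of them, but in fact I will argue at most $w$) form a nonzero solution to a square or overdetermined Vandermonde system in the distinct nodes $z_j$, forcing all $\lambda_j=0$ — contradiction, so $\mu\neq 0$; by scaling assume $\mu=1$. Then exactly $w$ of the $\lambda_j$ are nonzero, say at positions $i_1<\cdots<i_w$. Form $q(x)=x^w - \sum_{\ell} \lambda_{i_\ell}(\text{appropriate monomials})$; more cleanly, the first $w$ rows say that the degree-$(w-1)$ polynomial interpolation data forces $\sum_\ell \lambda_{i_\ell} z_{i_\ell}^k = 0$ for $k<w-1$ and $=-1$ for $k=w-1$, which is exactly the statement that $(\lambda_{i_\ell})$ are (up to sign) the values $1/\prod_{m\ne\ell}(z_{i_\ell}-z_{i_m})$ coming from Lagrange interpolation, equivalently that $\prod_\ell(x-z_{i_\ell})$ is the unique monic degree-$w$ polynomial consistent with this data. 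The last row, $\sum_\ell \lambda_{i_\ell} z_{i_\ell}^w = -t$, combined with Newton's identity / the standard fact that $\sum_\ell z_{i_\ell}^w/\prod_{m\ne\ell}(z_{i_\ell}-z_{i_m})$ equals the complete homogeneous symmetric function, pins down $z_{i_1}+\cdots+z_{i_w}=t$.

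The cleanest route, which I would actually write up, avoids symmetric-function bookkeeping: observe that a vector $(\lambda_1,\ldots,\lambda_n,\mu)$ lies in the kernel of $M_w$ iff the polynomial $r(x) := \big(\sum_j \lambda_j \cdot \tfrac{P(x)}{x-z_j}\big) + \mu(x^{w-1}\cdot 0 + \cdots)$ — better: iff the Hermite-type conditions hold — so I would instead phrase it as: $(\lambda_j)$ with $\mu$ is a null vector iff the polynomial $\sum_{j} \lambda_j x^{?}$ ... Rather than belabor this, the key algebraic identity is that for distinct $z_{i_1},\ldots,z_{i_w}$, the linear map sending a degree-$\le w$ polynomial to its coefficient vector, restricted by ``vanishes at all $z_{i_\ell}$,'' is one-dimensional and spanned by $\prod_\ell(x-z_{i_\ell})$, whose $x^{w-1}$ coefficient is $-\sum_\ell z_{i_\ell}$ and whose relation to the last row of $M_w$ is governed by that coefficient being $-t$. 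I expect \textbf{the main obstacle} to be the bookkeeping that links the final row (the $z_j^w$ and $t$ row) correctly to the coefficient $-t$ rather than to some other symmetric function — i.e., verifying that the extra column $(0,\ldots,0,1,t)^{\t}$ is engineered so that $\mu$ times it exactly cancels the ``overflow'' of the Vandermonde relations into degree $w$, and that this cancellation is equivalent to the sum-to-$t$ condition and nothing weaker. Everything else (distinctness $\Rightarrow$ Vandermonde nonsingularity, the weight-exactly-$w+1$ remark, membership in \NP) is routine.
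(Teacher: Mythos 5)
Your route via Lagrange interpolation weights is viable and genuinely different from the paper's. The paper fixes a $(w+1)$-column submatrix, expands its determinant along the last column, and recognizes the two cofactors as a Vandermonde determinant and a first-order alternant, obtaining $\bigl(t-\sum_\ell z_{i_\ell}\bigr)\prod_{j<k}(z_{i_k}-z_{i_j})$ directly; that single factorization settles both directions at once, which is tidier than arguing ``if'' and ``only if'' separately.

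As written, though, your ``if'' construction is wrong. Setting $p(x)=\prod_\ell(x-z_{i_\ell})$ and ``placing the lower-degree coefficients of $p$ at positions $i_1,\ldots,i_w$'' does not yield a null vector of $M_w$: the coefficients $c_0,\ldots,c_{w-1}$ of $p$ satisfy $z_{i_\ell}^w+c_{w-1}z_{i_\ell}^{w-1}+\cdots+c_0=0$ for each fixed $\ell$, a relation running \emph{down} a single column of $M_w$, whereas a null vector must satisfy $\sum_j\lambda_j z_j^k=0$ \emph{across} columns for each fixed $k\leq w-2$; there is no reason $\sum_\ell c_{\ell-1}z_{i_\ell}^k$ should vanish. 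The correct choice is $\lambda_{i_\ell}=-1/\prod_{m\neq\ell}(z_{i_\ell}-z_{i_m})$ (with $\mu=1$, cleared of denominators), as you eventually gesture at. In the ``only if'' direction, the identity you invoke --- that $\sum_\ell z_{i_\ell}^k\big/\prod_{m\neq\ell}(z_{i_\ell}-z_{i_m})$ equals $0$ for $k<w-1$, equals $1$ for $k=w-1$, and equals the complete homogeneous polynomial $h_{k-w+1}$ for $k\geq w-1$, so that the $k=w$ row gives $h_1=\sum_\ell z_{i_\ell}$ --- is precisely the ``bookkeeping'' you flag as your main obstacle, and needs to be actually stated and justified (say via the partial-fraction expansion of $1/\prod_\ell(x-z_{i_\ell})$, or Cramer's rule on the Vandermonde) rather than deferred. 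With those two repairs the argument closes, but at that point you have essentially re-derived the paper's alternant determinant by another route.
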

\begin{proof}
  We will first prove that the sparsest null vector has weight at
  least $(w+1)$. To see this, consider the submatrix formed by any set
  of $w$ columns. (We can assume that the last column is included in
  this set since otherwise the submatrix has a Vandermonde minor of
  size $w \times w$, and hence the columns are independent.) Since the
  principal minor of such a submatrix is a $(w-1) \times (w-1)$-sized
  Vandermonde matrix, the rows are independent. On adding either of
  the last two rows, the row-rank only increases since the other rows
  do not contain a nonzero entry in the last coordinate. Hence the
  row-rank (and hence the column-rank) of this submatrix is at least
  $w$, and hence the sparsest null vector of $M_w$ has weight at least
  $(w+1)$.

  Consider a $(w+1)$-sized subset of columns. If the
  last column is not in this set, the chosen columns form a
  Vandermonde matrix with nonzero determinant (since $z_i$ are distinct). Therefore assume that the last column is among those chosen, the determinant
  of the resulting matrix can be expanded as:
  \begin{equation*}
    \left |\begin{array}{ccccc}
        1 & \cdots & 1 & 0\\
        z_{i_1} & \cdots & z_{i_w} & 0\\
        \vdots   &    & \vdots & \vdots\\
        z_{i_1}^{w-1} & \cdots & z_{i_w}^{w-1} & 1\\
        z_{i_1}^w & \cdots & z_{i_w}^w & t\\
      \end{array}\right |=
    t \left |\begin{array}{cccc}
        1 & \cdots & 1\\
        z_{i_1} & \cdots & z_{i_w}\\
        \vdots   & \vdots & \vdots\\
        z_{i_1}^{w-1} & \cdots & z_{i_w}^{w-1}\\
      \end{array}\right |
    - \left |\begin{array}{cccc}
        1 & \cdots & 1\\
        z_{i_1} & \cdots & z_{i_w}\\
        \vdots   & \vdots & \vdots\\
        z_{i_1}^{w-2} & \cdots & z_{i_w}^{w-2}\\
        z_{i_1}^{w} & \cdots & z_{i_w}^{w}\\
      \end{array}\right |.
  \end{equation*}
  The first of the matrices on the right-hand side is a Vandermonde
  matrix, whose determinant is well-known to be $\prod_{i_j <
    i_k}(z_{i_k}-z_{i_j})$. The second matrix is a first-order
  alternant whose determinant is known to be $(z_{i_1}+z_{i_2}+\cdots+
  z_{i_w})\prod_{i_j < i_k}(z_{i_k}-z_{i_j})$.  Hence the
  determinant of the entire matrix is $(t-z_{i_1}-z_{i_2}-\cdots-
  z_{i_w})\prod_{i_j < i_k}(z_{i_k}-z_{i_j})$. Since all the $z_i$
  are distinct, the determinant vanishes if and only if the first term
  vanishes which holds when there exists a subset of
  $\{z_1,z_2,\ldots,z_n\}$ of size $w$ summing to $t$.  
  \QED
\end{proof}


\section{Binomial multiples over $\QQ$} 
\label{sec:rat-bin}

In this section we completely solve the problem of determining if
there exists a binomial multiple of a rational input polynomial (i.e.,
a multiple of sparsity $t=2$).  That is, given input $f\in\QQ[x]$ of degree $d$,
we determine if there exists a binomial multiple $h=x^m-a\in\QQ[x]$ of
$f$, and if so, find such an $h$ with minimal degree.  The constant
coefficient $a$ will be given as a pair $(r,e)\in\QQ\times\NN$
representing $r^e\in\QQ$.  The algorithm requires a number of bit
operations which is polynomial in $d$ and $\log\H(f)$.  No a priori
bounds on the degree or height of $h$ are required.  We show that $m$
may be exponential in $d$, and $\log{a}$ may be exponential in $\log{\H(f)}$, and
give a family of polynomials with these properties.

\begin{algorithm}[htbp]
  \caption{Lowest degree Binomial Multiple of a Rational Polynomial}
  \label{alg:binomialRationals}
  \KwIn{$f\in\QQ[x]$}
  \KwOut{The lowest degree binomial multiple $h\in\QQ[x]$ of $f$, or \NONE}
  Factor $f$ into irreducible factors:  $f=x^bf_1 f_2 \cdots f_u $\; \label{step:bin-rat-factor-step}
  \lIf{$f$ is not squarefree} {
    \KwRet{\NONE}
  }\;
  \For{$i = 1,2,3,\ldots,u$} {
    $d_i \gets \deg f_i$\;
    $m_i \gets$ least $k\in\{d_i,d_i+1,\ldots,d_i\cdot(\ceil{3d_i\ln\ln d_i}+7)\}$
    such that  $x^k \rem f_i \in \QQ$\; \label{step:bin-mult-for-irred-rat}
    \lIf{no such $m_i$ is found}{\KwRet{\NONE}}\;
    \lElse{$r_i \gets x^{m_i} \rem f_i$}
  }

  $m \gets \lcm(m_1,\ldots,m_u)$\;  

  \ForEach{2-subset $\{i,j\} \subseteq \{1,\ldots,u\}$}{
    \lIf{$\abs{r_i}^{m_j} \neq \abs{r_j}^{m_i}$} {
      \KwRet{\NONE}
    }\;
    \lElseIf{$\mathop{\rm sign}(r_i^{m/m_i}) \neq \mathop{\rm sign}(r_j^{m/m_j})$} {
      $m \gets 2 \cdot \lcm(m_1,\ldots,m_u)$\;
    }
  }
  
  \KwRet{$x^b(x^m-r_1^{m/m_1})$,} {\it with $r_1$ and $m/m_1$ given
    separately}
\end{algorithm}

Algorithm~\ref{alg:binomialRationals} begins by factoring the given
polynomial $f \in \QQ[x]$ into irreducible factors (using, e.g., the
algorithm of \cite{LLL82}).  
We then show how to find a binomial multiple of each irreducible
factor, and finally
provide a combining strategy for the different multiples.

The following theorem of \cite{Risman76} characterizes binomial
multiples of irreducible polynomials.  Let $\phi(n)$ be Euler's totient
function, the number of positive integers less than or equal to $n$
which are coprime to $n$.

\begin{fact}[\protect\cite{Risman76}, Proposition 4, Corollary 2.2]
  \label{fact:risman}
  Let $f\in\QQ[x]$ be irreducible of degree $d$. Suppose the
  least-degree binomial multiple of $f$ (if one exists) is of degree
  $m$. Then there exist $n,t\in\NN$ with
  $n\divs d$ and $\phi(t)\divs d$ such that
  $m=n\cdot t$.
\end{fact}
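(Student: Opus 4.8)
The plan is to reduce the statement to a fact about a single root of $f$ and then read off $n$ and $t$ from two sources: the group of roots of unity contained in $\QQ(\alpha)$, and the rationality of the norm $N_{\QQ(\alpha)/\QQ}(\alpha)$.

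First I would fix a root $\alpha$ of $f$ in an algebraic closure, so that $[\QQ(\alpha):\QQ]=d$, and observe (discarding the trivial case $f=x$, for which no multiple $x^m-a$ with $a\neq 0$ exists) that, since $f$ is a scalar multiple of the minimal polynomial of $\alpha$, we have $f\divs x^m-a$ with $a\in\QQ\setminus\{0\}$ exactly when $\alpha^m=a$. The set $\{k\in\ZZ:\alpha^k\in\QQ\}$ is a subgroup of $(\ZZ,+)$ and hence equals $m\ZZ$, where $m$ is the quantity in the statement (the ``order'' of $\alpha$ over $\QQ$); in particular $\alpha^k\in\QQ$ forces $m\divs k$, and $d\le m$ since $\alpha$ satisfies the degree-$m$ polynomial $x^m-a$. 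Every conjugate $\beta$ of $\alpha$ is also a root of $x^m-a$, so $\beta/\alpha$ is an $m$-th root of unity; thus the $d$ conjugates of $\alpha$ can be written as $\alpha\omega$ with $\omega$ ranging over a $d$-element subset $S$ of the group $\mu_m$ of $m$-th roots of unity, and $1\in S$.

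Next I would introduce $T:=\mu_m\cap\QQ(\alpha)$. This is a subgroup of the cyclic group $\mu_m$, hence cyclic of some order $t$ with $t\divs m$, generated by a primitive $t$-th root of unity $\zeta_t\in\QQ(\alpha)$; consequently $\phi(t)=[\QQ(\zeta_t):\QQ]$ divides $[\QQ(\alpha):\QQ]=d$. To obtain the matching divisibility I would use the norm: $w:=\prod_{\omega\in S}\omega$ is a product of $m$-th roots of unity, so $w\in\mu_m$, while $N_{\QQ(\alpha)/\QQ}(\alpha)=\prod_{\omega\in S}(\alpha\omega)=\alpha^d w$ is a nonzero rational number. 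Hence $w=(\alpha^d w)/\alpha^d\in\QQ(\alpha)$, so $w\in T$ and $w^t=1$. Therefore $\alpha^{dt}=(\alpha^d w)^t\cdot w^{-t}=(\alpha^d w)^t\in\QQ$, which forces $m\divs dt$; setting $n:=m/t$ (an integer, as $t\divs m$) this gives $n\divs d$, and we have produced $n,t\in\NN$ with $n\divs d$, $\phi(t)\divs d$, and $m=nt$.

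I expect the only real obstacle to be picking the right object: one must notice that $T=\mu_m\cap\QQ(\alpha)$ both supplies the cyclotomic constraint $\phi(t)\divs d$ and, through the norm, measures precisely the failure of $\alpha^d$ to be rational, so that clearing it by a $t$-th power lands back in $\QQ$. Once that is seen, the argument is elementary and needs no Galois theory. I would still spend a little care on the degenerate cases --- $f=x$, and binomial multiples of the shape $x^i(x^j-a)$ with $i>0$ (handled by noting that a binomial multiple exists only if $x\nmid f$) --- and on checking that the least-degree binomial multiple really does have the form $x^m-a$ for the minimal $m$ above.
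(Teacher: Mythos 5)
The paper does not prove this statement at all: it is labelled a \emph{Fact} and cited verbatim from Risman (Proposition~4, Corollary~2.2), so there is no in-paper argument to compare against. Your proof, however, is correct and self-contained, and it is worth recording why each step holds up. Writing $m_0$ for the generator of the subgroup $\{k\in\ZZ:\alpha^k\in\QQ\}$, you correctly identify $m_0$ with the degree $m$ in the statement (any binomial multiple $ax^i+bx^j$, $i<j$, forces $f\divs x^{j-i}+a/b$ since $\gcd(f,x)=1$, so $m_0\divs j-i\le j$, with equality achieved by $x^{m_0}-\alpha^{m_0}$). The two key ingredients are then exactly as you say: $(i)$ $T=\mu_m\cap\QQ(\alpha)$ is cyclic of some order $t\divs m$ and contains a primitive $t$-th root of unity, so $\phi(t)=[\QQ(\zeta_t):\QQ]$ divides $[\QQ(\alpha):\QQ]=d$ by the tower law; and $(ii)$ the norm computation $N_{\QQ(\alpha)/\QQ}(\alpha)=\alpha^d w$ with $w=\prod_{\omega\in S}\omega\in\mu_m$ shows $w=N/\alpha^d\in\QQ(\alpha)\cap\mu_m=T$, whence $w^t=1$ and $\alpha^{dt}=N^t\in\QQ$, giving $m\divs dt$ and hence $n:=m/t\divs d$. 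This is an efficient argument that avoids any explicit Galois-group analysis and uses only the irreducibility of the cyclotomic polynomial and the rationality of the norm; it would serve well as a short appendix proof if the authors wished to make the paper self-contained rather than cite Risman. Your caveats about the degenerate cases ($f=x$, and multiples of the form $x^i(x^j-a)$) are also exactly the right ones to flag.
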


The following, easily derived from explicit bounds in
\cite{RosSch62}, gives a polynomial bound on $m$.

\begin{lemma} \label{lem:totient-bound}
  For all integers $n\geq 2$, $\phi(\ceil{3n\ln\ln n}+7)>n$.
\end{lemma}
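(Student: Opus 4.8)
The plan is to combine an explicit lower bound for Euler's totient with a finite case check. Write $t=\ceil{3n\ln\ln n}+7$, so that $t>3n\ln\ln n$ always, and for $n\geq 3$ also $t\leq 3n\ln\ln n+8\leq n^2$, whence $\ln\ln t\leq \ln 2+\ln\ln n$; moreover $t\geq n$ for every $n\geq 2$, so $\ln\ln t\geq \ln\ln n$ as well. The tool is the estimate of \citet{RosSch62} (Theorem~15): for every integer $m\geq 3$,
\[
  \phi(m)>\frac{m}{e^{\gamma}\ln\ln m+3/\ln\ln m},
\]
with $\gamma$ Euler's constant. The reason the constant $3$ appears in the statement of the lemma is simply that $3>e^{\gamma}\approx 1.781$.

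Plugging the bounds on $t$ and $\ln\ln t$ into this estimate gives
\[
  \phi(t)>\frac{3n\ln\ln n}{e^{\gamma}(\ln 2+\ln\ln n)+3/\ln\ln n},
\]
and it suffices that the right-hand side be at least $n$, which happens exactly when $(3-e^{\gamma})\ln\ln n-3/\ln\ln n\geq e^{\gamma}\ln 2$. With $L=\ln\ln n$ this is the quadratic inequality $(3-e^{\gamma})L^2-(e^{\gamma}\ln 2)L-3\geq 0$, which holds once $L$ exceeds an absolute constant slightly above $2$; equivalently, $\phi(t)>n$ for all $n\geq N_0$ for an explicit threshold $N_0$ (a few thousand with the crude constant $3$; smaller with the sharper Rosser--Schoenfeld constant $2.50637$ and the sharper bound $\ln\ln t=\ln\ln n+O(1/\ln n)$ that follows from $t=3n\ln\ln n\,(1+o(1))$).

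For the remaining range $2\leq n<N_0$ one computes $t$ and checks $\phi(t)>n$ directly; this is routine. For the smallest values the additive $+7$ and the ceiling are precisely what save the claim, since there the leading term $3n\ln\ln n$ is at most $1$ (and is even negative at $n=2$): for instance $n=2$ gives $t=5$ with $\phi(5)=4>2$, and $n=3$ gives $t=8$ with $\phi(8)=4>3$.

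The main obstacle is purely quantitative: one must pin down the gap $\ln\ln t-\ln\ln n$ and the lower-order term $3/\ln\ln t$ sharply enough that the cutoff $N_0$ --- and hence the finite verification --- stays small and explicit. The asymptotic half of the proof, on the other hand, is immediate from $3>e^{\gamma}$: any constant strictly larger than $e^{\gamma}$ in place of $3$ already yields $\phi(t)\gtrsim (3/e^{\gamma})\,n>n$ for all large $n$.
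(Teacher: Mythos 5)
Your proposal is correct and follows essentially the same route as the paper: both invoke the explicit Rosser--Schoenfeld lower bound for Euler's totient (Theorem~15 of \citet{RosSch62}), derive that the inequality holds for all $n$ beyond an explicit threshold, and close the gap with a finite mechanical check of small $n$. The only differences are cosmetic: you spell out the quadratic in $L=\ln\ln n$ and handle the ceiling and the $+7$ slightly more explicitly, whereas the paper applies the bound to $3n\ln\ln n$ directly (implicitly using that the Rosser--Schoenfeld lower-bound function is increasing) and defers to "basic calculus" with threshold $n\geq 24348$.
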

\begin{proof}
   \cite{RosSch62}, Theorem 15, implies that for all $n\geq 3$
   \[
   \phi(n)> \frac{0.56146\cdot n}{\ln\ln n + 1.40722}.
   \]
   It is then easily derived by basic calculus that
   \[
     \phi(3n\log\log n)>\frac{0.56146\cdot (3n\log\log n)}{\ln\ln (3n\log\log n) + 1.40722}>n
   \]
   for $n\geq 24348$. The inequality in the lemma statement is verified
   mechanically (say using Maple) for $2\leq n\leq 24348$.  
   \QED
\end{proof}

Combining Fact~\ref{fact:risman} with Lemma~\ref{lem:totient-bound},
we obtain the following explicit upper bound on the
maximum degree of a binomial multiple of an irreducible polynomial.

\begin{theorem} 
  \label{thm:binomdegbound}
  Let $f\in\QQ[x]$ be irreducible of degree $d$. If a binomial
  multiple of $f$ exists, and has minimal degree $m$, then $m \leq
  d\cdot (\ceil{3d\ln\ln d}+7)$.
\end{theorem}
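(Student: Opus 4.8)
The plan is to combine the structural result Fact~\ref{fact:risman} with the totient estimate Lemma~\ref{lem:totient-bound}. By Fact~\ref{fact:risman}, the minimal degree $m$ can be written as $m=nt$ with $n\divs d$ and $\phi(t)\divs d$. The divisibility $n\divs d$ gives $n\le d$ at once, so it suffices to prove $t\le\ceil{3d\ln\ln d}+7=:B$; then $m=nt\le dB$, which is exactly the claimed bound. Thus the whole theorem reduces to the number-theoretic assertion: whenever $\phi(t)\divs d$ (hence $\phi(t)\le d$), one has $t\le B$. The tiny cases $t\in\{1,2\}$ are immediate because $B\ge 7$ once $d\ge 2$, and I would remark that the statement tacitly assumes $d\ge 2$, the linear case being trivial since then $f$ is its own binomial multiple.

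It remains to show $\phi(t)\le d \Rightarrow t\le B$ for $d\ge 2$ and $t\ge 3$. Here the one subtlety is that Euler's $\phi$ is not monotone, so Lemma~\ref{lem:totient-bound} applied with $n=d$ --- which only asserts $\phi(B)>d$ --- does not by itself bound the largest $t$ with $\phi(t)\le d$. To get around this I would use the explicit monotone lower bound underlying that lemma, namely $\phi(t)>0.56146\,t/(\ln\ln t+1.40722)$ for $t\ge 3$ (from \cite{RosSch62}, Theorem~15, as quoted in the proof of Lemma~\ref{lem:totient-bound}). Together with $\phi(t)\le d$ this gives $t<1.781\,d\,(\ln\ln t+1.40722)$; and since $\phi(t)\ge\sqrt t$ for all $t\ge 3$ except $t=6$, we also get the crude bound $t\le d^2$, hence $\ln\ln t\le\ln(2\ln d)$. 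Feeding this back (one or two bootstrap rounds, each using the previous upper bound on $t$ to re-estimate $\ln\ln t$) yields $t<3d\ln\ln d\le B$ for every $d$ beyond an explicit constant; the finitely many remaining small values of $d$ are then checked by direct computation, exactly as is already done in the proof of Lemma~\ref{lem:totient-bound}.

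The principal obstacle is precisely this non-monotonicity: Lemma~\ref{lem:totient-bound} is stated pointwise and does not transform directly into ``$\phi(t)\le d$ forces $t$ small'', so one is forced through a monotone lower bound on $\phi$ and a short bootstrap, followed by a bounded brute-force check for small $d$. The only other points to verify carefully are the boundary case $d=1$ (where $\ln\ln d$ is undefined but the bound is not needed) and that the constant $3$ in $B$ --- chosen so that $3\cdot 0.56146>1$ with a margin --- genuinely absorbs the gap between $\ln\ln t+1.40722$ and $\ln\ln t$ together with the ceiling and the additive $7$.
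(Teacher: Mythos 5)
You have correctly followed the paper's own route (Fact~\ref{fact:risman} plus a totient estimate), and you have spotted a genuine flaw: the paper's inference $d\geq\phi(t)\geq\xi^{-1}(t)$ does not follow from Lemma~\ref{lem:totient-bound}, precisely because $\phi$ is not monotone. Indeed, $\phi(\xi(n))>n$ says nothing about those $t>\xi(n)$ for which $\phi(t)$ dips back down to at most $n$. For example $\phi(12)=4$, yet $\xi^{-1}(12)=5$ (since $\xi(4)=11<12\leq 15=\xi(5)$), so the paper's displayed inequality $\phi(t)\geq\xi^{-1}(t)$ already fails at $t=12$. So you are right that the paper's proof, as written, has a real gap.

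However, your proposed repair cannot close this gap either, because the number-theoretic statement you reduce to is simply false for some small $d$. You want ``$\phi(t)\divs d\ \Rightarrow\ t\leq B(d)=\ceil{3d\ln\ln d}+7$''. Take $d=4$ and $t=12$: then $\phi(t)=4\divs 4$, but $B(4)=\ceil{12\ln\ln 4}+7=\ceil{3.92}+7=11<12$. Likewise $d=8$, $t=30$ gives $\phi(30)=8\divs 8$ but $B(8)=\ceil{24\ln\ln 8}+7=25<30$; and $d=12$, $t=42$ gives $\phi(42)=12$ but $B(12)=40<42$. So the ``bounded brute-force check for small $d$'' you plan as the last step would not confirm the bound; it would refute it. The Rosser--Schoenfeld bootstrap does control $t$ for all sufficiently large $d$, but the finitely many exceptions are exactly where the argument breaks, so they cannot simply be absorbed. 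Any correct proof of Theorem~\ref{thm:binomdegbound} (if the stated constant is in fact right) must use more of Risman's structure than the bare decomposition $m=nt$, $n\divs d$, $\phi(t)\divs d$ --- for instance, some constraint tying $n$ and $t$ together that rules out the pair $(n,t)=(4,12)$ when $d=4$ --- or else the bound must be enlarged; the independent bounds $n\leq d$ and $t\leq B(d)$ are not enough on their own.
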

\begin{proof}
  By Fact~\ref{fact:risman}, $m=n\cdot t$ such that $n\divs d$ and
  $\phi(t)\divs d$.  Define $\xi(n)=\ceil{3n\ln\ln n}+7$, and define
  $\xi^{-1}(n)$ to be the smallest integer such that
  $\xi(\xi^{-1}(n))\geq n$.  From Lemma~\ref{lem:totient-bound}, we
  have that $\phi(\xi(n)) > n$ for $n\geq 2$.  Hence, $d \geq \phi(t)
  \geq \xi^{-1}(t)$. Since $\xi$ is a non-decreasing function, $d \geq
  \xi^{-1}(t)$ implies that $\xi(d) \geq t$.  Thus $m=n \cdot t \leq
  d \cdot \xi(d) \leq d\cdot(\ceil{3d\ln\ln d}+7)$.  
  \QED
\end{proof}

The above theorem ensures that for an irreducible $f_i$, Step~\ref{step:bin-mult-for-irred-rat} of Algorithm~\ref{alg:binomialRationals}
computes the least-degree
binomial multiple $x^{m_i}-r_i$ if it exists, and otherwise
correctly reports failure.
It clearly runs in polynomial time.

If $f$ has any repeated factor, then it cannot have a binomial multiple
(see Lemma~\ref{lem:multbound} below).
So assume the factorization of $f$ is as computed in
Step~\ref{step:bin-rat-factor-step}, and moreover $f$ is squarefree.
If any factor does not have a binomial multiple, neither can
the product. If every irreducible factor does have a binomial multiple, Step~\ref{step:bin-mult-for-irred-rat} computes the one with the least degree.
The following relates the degree of the minimal
binomial multiple of the input polynomial to those of its irreducible factors.

\begin{lemma}
  \label{lem:bimultdeg}
  Let $f\in\QQ[x]$ be such that $f=f_1\cdots
  f_u\in\QQ[x]$ for distinct, irreducible $f_1,\ldots,f_u\in\QQ[x]$.
  Let $f_i\divs (x^{m_i}-r_i)$ for minimal $m_i\in\NN$ and $r_i\in\QQ$,
  and let $f \divs (x^m - r)$ for $r \in \QQ$.
  Then $\lcm(m_1,\ldots,m_u)\divs m$.
\end{lemma}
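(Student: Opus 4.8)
The plan is to reduce the claim to the case of a single irreducible factor. Since every $f_i$ divides $f$ and $f \divs (x^m - r)$, each $f_i$ divides $x^m - r$; hence it suffices to show that $m_i \divs m$ for every $i$, and then $\lcm(m_1,\dots,m_u) \divs m$ follows at once from the definition of least common multiple.

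So fix an index $i$ and let $\alpha$ be a root of $f_i$ in some extension of $\QQ$. If $f_i = x$, then necessarily $r_i = 0$ and $m_i = 1$, so $m_i \divs m$ trivially; otherwise $f_i(0) \neq 0$ and $\alpha \neq 0$. From $f_i \divs (x^{m_i} - r_i)$ we get $\alpha^{m_i} = r_i$, and since $\alpha \neq 0$ this forces $r_i \in \QQ^{*}$; likewise $\alpha^{m} = r \in \QQ^{*}$. Writing $m = q\,m_i + s$ with $0 \leq s < m_i$ by division with remainder, we obtain $\alpha^{s} = \alpha^{m}\bigl(\alpha^{m_i}\bigr)^{-q} = r\,r_i^{-q} \in \QQ^{*}$.

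Now suppose, toward a contradiction, that $s \geq 1$. Then $x^{s} - \alpha^{s}$ is a genuine binomial of degree $s$ --- its constant term $-\alpha^{s}$ is nonzero --- and it has $\alpha$ as a root. Because $f_i$ is irreducible over $\QQ$ and vanishes at $\alpha$, it is a scalar multiple of the minimal polynomial of $\alpha$, so $f_i \divs (x^{s} - \alpha^{s})$ in $\QQ[x]$. This exhibits a binomial multiple of $f_i$ of degree $s < m_i$, contradicting the minimality of $m_i$. Hence $s = 0$, i.e.\ $m_i \divs m$, which completes the argument.

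The argument has no real obstacle; the only points requiring a little care are the non-vanishing of $\alpha$ and of $\alpha^{s}$ (so that $x^{s}-\alpha^{s}$ is honestly $2$-sparse rather than a monomial or a constant) and the standard fact that an irreducible polynomial is, up to a scalar, the minimal polynomial of each of its roots. The heart of the matter is simply the division-with-remainder trick applied to the exponents together with the minimality of $m_i$.
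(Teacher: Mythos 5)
Your proof is correct and follows essentially the same approach as the paper's: reduce to a single irreducible factor, apply division with remainder to the exponent, observe the remainder gives a lower-degree binomial multiple, and contradict minimality of $m_i$. The paper concludes $f_i \divs (x^\ell - \alpha^\ell)$ by noting all conjugate roots yield the same rational value, whereas you invoke the minimal-polynomial characterization of irreducibility (a slightly cleaner phrasing of the same fact), and you add explicit care for the $f_i = x$ edge case and the nonvanishing of $r_i$, which the paper glosses over.
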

\begin{proof}
  It suffices to prove that if $f \divs (x^m-r)$ and $f_i \divs
  (x^{m_i}-r_i)$ for minimal $m_i$ then $m_i \divs m$ since any multiple of $f$ is also a multiple of $f_i$.

  Assume for the sake of contradiction that $m = cm_i + \ell$ for 
  $0 < \ell < m_i$. Then for any root $\alpha_i \in \CC$ of $f_i$, we have that 
  $r=\alpha^{m}=\alpha^{cm_i}\cdot \alpha^{\ell}=r_i^c \cdot \alpha^\ell$. Since $r$ and $r_i$ are both rational, so is $\alpha^\ell$. Also $\alpha^\ell=\beta^\ell$ for any two roots $\alpha,\beta \in \CC$ of $f_i$. Hence $f_i \divs x^\ell-\alpha^\ell$ and $\ell < m_i$, contradicting the minimality of $m_i$.

  Thus $m_i \divs m$, and therefore $\lcm(m_1,\ldots,m_u) \divs m$.
  \QED
\end{proof}

\begin{lemma}
  \label{lemma:rat-bin-combine-step} 
  For a polynomial $f \in \QQ[x]$ factored into distinct irreducible
  factors $f=f_1f_2\ldots f_u$, with $f_i \divs (x^{m_i}-r_i)$ for $r_i
  \in \QQ$ and minimal such $m_i$, a binomial multiple of $f$ exists
  if and only if $\abs{r_i}^{m_j} = \abs{r_j}^{m_i}$ for every pair $1\leq
  i,j\leq u$. If a binomial multiple exists, the least-degree binomial
  multiple of $f$ is $x^{m}-r_i^{m/m_i}$ such that $m$ either equals
  the least common multiple of the $m_i$ or twice that number. It can
  be efficiently checked which of these cases holds.
\end{lemma}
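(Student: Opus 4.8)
The plan is to reduce the whole statement to a question about a single integer multiplier $c$, using Lemma~\ref{lem:bimultdeg} to control the degree and then handling the absolute values and the signs of the rational constants separately. Throughout I would assume $x \ndivs f$ (the factor $x^b$ is stripped off in Algorithm~\ref{alg:binomialRationals} and treated separately), so each $r_i$ is a nonzero rational and every power $r_i^{k/m_i}$ is a well-defined nonzero rational.

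First I would establish a dictionary for a single irreducible factor: for $\rho\in\QQ$ and an integer $k\geq 1$, one has $f_i \divs (x^k-\rho)$ if and only if $m_i \divs k$ and $\rho = r_i^{k/m_i}$. The ``only if'' follows by applying Lemma~\ref{lem:bimultdeg} to the single polynomial $f_i$ (which gives $m_i\divs k$) together with evaluation at a root $\alpha$ of $f_i$: since $\alpha^{m_i}=r_i$, we get $\rho=\alpha^k=r_i^{k/m_i}$. The ``if'' is immediate, because then every root of $f_i$ satisfies $\alpha^k=r_i^{k/m_i}$, and $f_i$, being irreducible over $\QQ$, is squarefree. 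As the $f_i$ are pairwise coprime, this upgrades to: writing $M=\lcm(m_1,\ldots,m_u)$, one has $f \divs (x^k-\rho)$ iff $M \divs k$ and $r_1^{k/m_1}=\cdots=r_u^{k/m_u}=\rho$.

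Next I would write $k = cM$ and $a_i = M/m_i\in\NN$, so that a binomial multiple of degree $cM$ exists precisely when all the $r_i^{ca_i}$ agree. Two nonzero rationals agree iff they share absolute value and sign, so I split into (abs) $\abs{r_i}^{ca_i}=\abs{r_j}^{ca_j}$ for all $i,j$, and (sign) $\operatorname{sign}(r_i)^{ca_i}=\operatorname{sign}(r_j)^{ca_j}$ for all $i,j$. Taking $c$-th roots in (abs) (legitimate for positive reals) shows it is equivalent to $\abs{r_i}^{a_i}=\abs{r_j}^{a_j}$, and taking logarithms ($\abs{r_i}=1$ causing no trouble) shows this is equivalent to $\abs{r_i}^{m_j}=\abs{r_j}^{m_i}$ — crucially, independent of $c$. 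Hence if (abs) fails there is no binomial multiple of any degree, which gives one direction of the claimed equivalence. If (abs) holds, then $c=2$ satisfies (sign) automatically since $\operatorname{sign}(r_i)^{2a_i}=1$ for every $i$; so a binomial multiple does exist, the least admissible $c$ is $1$ (when $\operatorname{sign}(r_i)^{a_i}$ is the same for all $i$) or else $2$, and by the dictionary the minimal binomial multiple is $x^m-r_1^{m/m_1}$ with $m=M$ or $m=2M$ accordingly. The case distinction needs only the signs of the $r_i$ and the parities of the $a_i=M/m_i$ — equivalently, comparing $\operatorname{sign}(r_i^{m/m_i})$ over all $i$ with $m=M$, exactly as Algorithm~\ref{alg:binomialRationals} does — and is efficient, since one never forms the possibly huge power $r_i^{m/m_i}$, only its sign.

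The only place that genuinely needs care is this sign/parity bookkeeping: one must confirm that doubling the degree always removes the sign obstruction, so that $c\in\{1,2\}$ always suffices and no larger multiplier is ever needed, and at the same time that the absolute-value condition is truly $c$-free, so that a sign mismatch is the sole reason ever to exceed degree $M$. Everything else is a routine combination of the single-factor dictionary with Lemma~\ref{lem:bimultdeg} and the pairwise coprimality of the $f_i$.
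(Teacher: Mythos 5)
Your proposal is correct and takes essentially the same approach as the paper: both use Lemma~\ref{lem:bimultdeg} to force $\lcm(m_1,\ldots,m_u) \divs m$, both pass to roots $\alpha_i$ of the $f_i$ to reduce the existence question to agreement of the $r_i^{m/m_i}$, and both split that agreement into an absolute-value condition (shown to be $\abs{r_i}^{m_j}=\abs{r_j}^{m_i}$, independent of the multiplier) and a sign condition (which the doubled degree always repairs, so the minimal degree is $\lcm$ or twice the $\lcm$). Your ``single-factor dictionary'' and the explicit observation that the absolute-value condition is $c$-free are just slightly more explicit packaging of the same argument the paper gives.
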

\begin{proof} 
  Let $\alpha_i \in \CC$ be a root of $f_i$. For any candidate binomial multiple $x^m-r$ of $f$, we have (from Lemma~\ref{lem:bimultdeg}) that $m_i \divs m$.

  First, suppose that such a binomial multiple exists: $f \divs (x^m-r)$ with $r \in \QQ$. It is easily seen from $\alpha_i^m=r$ and $\alpha_i^{m_i}=r_i$ that $r_i^{m/m_i}=r$. Since this holds for any $f_i$, we see that $r_i^{m/m_i}=r=r_j^{m/m_j}$ for any $1 \leq i, j \leq u$. Thus $\abs{r_i}^{m_j}=\abs{r_j}^{m_i}$ must hold.

  Conversely, suppose that $\abs{r_i}^{m_j}=\abs{r_j}^{m_i}$ holds for
  every pair $1 \leq i,j \leq u$. We get that $\abs{\alpha_i}^{\ell
  m_im_j}=\abs{\alpha_j}^{\ell m_jm_i}$, and hence $\abs{\alpha_i^\ell
  }=\abs{\alpha_j^\ell }$ for \linebreak 
  $\ell =\lcm(m_1,\ldots,m_u)$. But
  $\alpha_i^\ell $ are all rational since $m_i \divs \ell $. Thus
  $\alpha_i^{2\ell }=\alpha_j^{2\ell }$ for every pair $i,j$ . Thus,
  there exists a binomial multiple of the original polynomial of degree
  $2\ell $.

  To check whether $\alpha_i^\ell =\alpha_j^\ell $ holds (or in other
  words if the degree of the binomial multiple is actually the lcm), it
  suffices to check whether the sign of each $\alpha_i^\ell $ is the
  same. This is equivalent to checking whether the sign of each
  $r_i^{\ell /m_i}$ is the same. Since we can explicitly compute $\ell $
  and all the $r_i$, the sign of each $r_i^{\ell /m_i}$ can be easily
  computed from the sign of $r_i$ and the parity of $\ell /m_i$.
  \QED
\end{proof}

The following comes directly from the previous lemma and the fact that
Algorithm~\ref{alg:binomialRationals} performs polynomially many
arithmetic operations.

\begin{theorem} 
  \label{thm:correctness-and-efficiency-ratbin} 
  Given a polynomial $f \in \QQ[x]$,
  Algorithm~\ref{alg:binomialRationals} outputs the least-degree
  binomial multiple $x^m-r_i^{m/m_i}$ (with $r_i$ and $m/m_i$ output
  separately) if one exists or correctly reports the lack of a
  binomial multiple otherwise. Furthermore, it runs in deterministic
  time $(d+\H(f))^{O(1)}$.
\end{theorem}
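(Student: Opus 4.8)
The plan is to assemble the correctness claim from the structural lemmas already proved, and then audit the algorithm line by line for the running-time bound. First I would establish correctness of the per-factor step. By Theorem~\ref{thm:binomdegbound}, if the irreducible factor $f_i$ has any binomial multiple, then its least-degree binomial multiple has degree $m_i \leq d_i\cdot(\ceil{3d_i\ln\ln d_i}+7)$; moreover $m_i \geq d_i$ since a binomial multiple of $f_i$ must have degree at least $\deg f_i$ (its roots all have the same absolute value, so $f_i$ cannot have degree exceeding that of any multiple). Hence Step~\ref{step:bin-mult-for-irred-rat}, which tests exactly the values $k \in \{d_i,\ldots,d_i(\ceil{3d_i\ln\ln d_i}+7)\}$ for $x^k \rem f_i \in \QQ$, finds the minimal $m_i$ exactly when one exists and otherwise correctly reports \NONE. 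I would note that $x^k \rem f_i \in \QQ$ iff $f_i \divs (x^k - r_i)$ with $r_i = x^k\rem f_i$, so this is the right test.

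Next I would handle the combining step. If $f$ is not squarefree it has a repeated irreducible factor $f_i^2$; a root $\alpha_i$ of $f_i$ is then a multiple root of $f$, but $x^m - r$ has distinct roots in $\CC$ whenever $r \neq 0$ (and the $r=0$ case is excluded once we factor out $x^b$), so $f$ has no binomial multiple — justifying the early return. (This is Lemma~\ref{lem:multbound}, cited forward in the text.) With $f = x^b f_1\cdots f_u$ squarefree, Lemma~\ref{lem:bimultdeg} gives $\lcm(m_1,\ldots,m_u)\divs m$ for any binomial multiple $x^m-r$, and Lemma~\ref{lemma:rat-bin-combine-step} states that a binomial multiple exists iff $\abs{r_i}^{m_j}=\abs{r_j}^{m_i}$ for all pairs $i,j$, and that in that case the least degree is $\ell$ or $2\ell$ with $\ell=\lcm(m_1,\ldots,m_u)$, decided by whether all $r_i^{\ell/m_i}$ share a sign — which is exactly the pairwise test the \ForEach loop performs. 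So the algorithm returns $x^b(x^m - r_1^{m/m_1})$ with the correct $m$, and one checks $r_1^{m/m_1}=r_i^{m/m_i}$ for all $i$ so the choice of index $1$ is harmless. Finally, multiplying back by $x^b$ is legitimate since $x^b f_1\cdots f_u \divs x^b(x^m-r)$ iff $f_1\cdots f_u \divs (x^m - r)$ (as $x$ is coprime to each $f_i$).

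For the running time, I would walk through each line. Factoring $f$ over $\QQ$ via \cite{LLL82} costs $(d + \log\H(f))^{O(1)}$ bit operations, and the irreducible factors $f_i$ have height bounded polynomially in $\H(f)$ and degree at most $d$. For each $i$, Step~\ref{step:bin-mult-for-irred-rat} computes at most $d_i(\ceil{3d_i\ln\ln d_i}+7) = O(d^2\log\log d)$ successive remainders $x^k \rem f_i$; here the key subtlety to check is coefficient growth — computing $x^k \rem f_i$ by repeated squaring (or even $x\cdot(x^{k-1}\rem f_i)$ then reduce) keeps the intermediate coefficients of size $(d + \log\H(f))^{O(1)}$, since the remainder always has degree $<d_i$ and each reduction step inflates heights by at most a factor polynomial in $d_i$ and $\H(f_i)$, over polynomially many steps. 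The rational number $r_i = x^{m_i}\rem f_i$ thus has height $(d+\log\H(f))^{O(1)}$, though note $m_i$ itself is only $O(d^2\log\log d)$ so $m = \lcm(m_i)$ and the exponents $m/m_i$ are bounded by $d^{O(1)}$. The final \ForEach loop runs over $\binom{u}{2} = O(d^2)$ pairs; each comparison $\abs{r_i}^{m_j} = \abs{r_j}^{m_i}$ involves rationals of height $r_i^{O(d^2)}$, i.e.\ bit-length $(d+\log\H(f))^{O(1)}\cdot d^{O(1)} = (d+\H(f))^{O(1)}$, and equality of two such explicitly-given integers is checkable in polynomial time; likewise the sign checks are trivial. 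Summing, every step is $(d+\H(f))^{O(1)}$ bit operations, and the algorithm is plainly deterministic.

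The main obstacle I anticipate is the coefficient-size bookkeeping in Step~\ref{step:bin-mult-for-irred-rat}: one must argue that the iterates $x^k \rem f_i$ do not suffer exponential coefficient blowup over the $O(d^2\log\log d)$ iterations. This follows from the standard fact that reducing a polynomial of controlled degree modulo $f_i$ multiplies heights by a factor bounded by a polynomial in $d_i$ and $\H(f_i)$ per step — but since this is repeated polynomially (not constantly) many times, one needs the multiplicative factor per step to be polynomial and the number of steps polynomial, giving a total height that is $2^{(\log(d+\H(f)))^{O(1)}}$... which is \emph{not} polynomial. The correct fix, which I would use, is to observe that we need not iterate $x^k = x\cdot x^{k-1}$ naively; instead compute $x^{m_i}\rem f_i$ for the \emph{specific} candidate exponents via fast modular exponentiation, where the exponent has only $O(\log d)$ bits, so only $O(\log d)$ multiply-and-reduce steps are needed, each keeping heights polynomial in $d$ and $\H(f_i)$ — yielding $r_i$ of polynomial height. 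We still must scan $O(d^2\log\log d)$ candidate values of $k$, but the test "$x^k \rem f_i \in \QQ$" for each is then a single polynomial-size computation. (Alternatively, note $x^k\rem f_i\in\QQ$ iff $x\cdot(x^{k-1}\rem f_i)\rem f_i\in\QQ$ and that once the remainder stabilizes into low-height form it stays there; but the modular-exponentiation argument is cleanest.) With that settled, the rest of the timing analysis is the routine audit sketched above.
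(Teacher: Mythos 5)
Your proof is correct and follows the same logical backbone as the paper's (very terse) argument: correctness of the per-factor step comes from Theorem~\ref{thm:binomdegbound}, the non-squarefree early exit from Lemma~\ref{lem:multbound}, and the combination step from Lemmas~\ref{lem:bimultdeg} and~\ref{lemma:rat-bin-combine-step}; efficiency follows from counting operations. The paper simply says ``comes directly from the previous lemma and the fact that the algorithm performs polynomially many arithmetic operations,'' and you spell out exactly why. That is a reasonable expansion.

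One thing worth flagging: your worry about coefficient blowup in the naive iteration $r_k = x\cdot r_{k-1}\rem f_i$ is actually misplaced, and the quasi-polynomial figure $2^{(\log(d+\H(f)))^{O(1)}}$ you give is not what happens. If each reduction multiplies the height by a factor $C$ that is polynomial in $d$ and $\H(f_i)$, then after $N=O(d^2\log\log d)$ iterations the height is at most $C^N$, whose \emph{bit-length} is $N\log C = O(d^2\log\log d)\cdot O(\log d + \log\H(f))$ --- linear in $N$, not exponential. This is comfortably polynomial in $d$ and $\log\H(f)$, hence a fortiori in $d+\H(f)$, and the naive iteration works fine without any repair. (The quasi-polynomial shape $2^{\mathrm{polylog}}$ would arise only if the \emph{number of iterations} were itself only polylogarithmic in the input size, which is not the case here.) Your fallback --- fast modular exponentiation for each candidate exponent --- also achieves a polynomial bound, so the proof as written still lands on the correct conclusion; but it attributes a defect to the direct method that isn't actually there, and in fact the direct iteration is marginally cheaper than redoing a fast exponentiation for every candidate $k$. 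With that footnote, the argument is sound.
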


The constant coefficient of the binomial multiple cannot be output in
standard form, but must remain an unevaluated power; the next
theorem exhibits an infinite family of polynomials whose minimal
binomial multiples have exponentially sized degrees and heights.

\begin{theorem}
  \label{thm:bigmul}
  For any $d\geq 841$ there exists a polynomial $f\in\ZZ[x]$ of
  degree at most $d\log d$ and height $\H(f)\leq \exp(2d\log d)$
  whose minimal binomial multiple $x^m-a$ is such
  that $m>\exp(\sqrt{d})$ and $\H(a)>2^{\exp(\sqrt{d})}$.
\end{theorem}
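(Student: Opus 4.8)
The plan is to construct $f$ as a product of cyclotomic polynomials corresponding to primes of a carefully chosen shape. Recall from Fact~\ref{fact:risman} that an irreducible polynomial $f_i$ of degree $d_i$ has a binomial multiple of degree $m_i = n_i \cdot t_i$ with $n_i \divs d_i$ and $\phi(t_i) \divs d_i$; the largest such $m_i$ is obtained by taking $n_i = d_i$ and $t_i$ as large as possible with $\phi(t_i) \divs d_i$. For the cyclotomic polynomial $\Phi_k$ we have $\deg \Phi_k = \phi(k)$, and $\Phi_k \divs (x^k - 1)$, so its minimal binomial multiple has degree exactly $k$ (after checking $k$ is minimal, which follows from primitivity of a $k$-th root of unity). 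Thus the first key step is: choose distinct primes $p_1, \dots, p_s$ (or prime powers) with each $\phi(p_i) = p_i - 1$ smooth-ish but with the $p_i - 1$ sharing few common factors, set $f = \Phi_{p_1} \cdots \Phi_{p_s}$, and observe via Lemma~\ref{lem:bimultdeg} and Lemma~\ref{lemma:rat-bin-combine-step} that the minimal binomial multiple of $f$ has degree $m = \lcm(p_1, \dots, p_s)$ (the parity/sign subtlety only costs a factor of $2$, which is harmless here). Since the $p_i$ are distinct primes, $\lcm(p_1, \dots, p_s) = p_1 \cdots p_s$, which is roughly $\exp(\sum \log p_i)$ — exponentially large relative to $\deg f = \sum (p_i - 1)$.

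The second key step is the quantitative bookkeeping: we need $\deg f \leq d \log d$, $\H(f) \leq \exp(2 d \log d)$, $m > \exp(\sqrt d)$, and $\H(a) > 2^{\exp(\sqrt d)}$. A natural choice is to take the $p_i$ to be (roughly) all primes up to some bound $N$, so that $\sum_{p \leq N}(p-1) = O(N^2/\log N)$ by the prime number theorem, while $\prod_{p \leq N} p = e^{\theta(N)} = e^{(1+o(1))N}$ by the Chebyshev/PNT estimate on the primorial. Setting $N \approx \sqrt d \cdot \log d$ (or similar) makes $\deg f = O(N^2/\log N) \leq d \log d$ for $d$ large, while $m \geq \prod_{p\leq N} p \geq e^{N/2} > \exp(\sqrt d)$. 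For the height bound on $f$: the coefficients of $\Phi_k$ are at most $k^{O(k/\log k)}$-ish in general, but for $\Phi_p$ with $p$ prime the coefficients are all exactly $1$; taking only prime (not prime-power) indices keeps $\H(f)$ small — in fact $\H(f) = \H(\prod_p \Phi_p)$ is bounded by a product of $L^1$-norms, giving $\log \H(f) = O(\deg f) \leq 2 d \log d$. Finally $a = r_1^{m/m_1} = 1^{m/p_1} = 1$ naively — which is a problem. To force $\H(a)$ to be huge, one must instead use a nontrivial common value: replace the roots of unity by a common algebraic number, e.g. use the factors of $x^{p_i} - b$ for a fixed integer $b \geq 2$, so that $a = b^{m/p_i \cdot (\text{something})}$... — more cleanly, take $f$ to be (an irreducible factor of, or a product involving) $x^{p_i} - 2$ so that the binomial multiple is $x^m - 2^{m/p_i \cdot \deg\text{-adjustment}}$ and $\log \H(a) \asymp m \asymp \exp(\sqrt d)$. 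The $841 = 29^2$ threshold presumably just makes the PNT-type estimates and the $\ln\ln$ bounds from Lemma~\ref{lem:totient-bound} valid in the relevant range.

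The main obstacle is the simultaneous juggling of three competing size constraints against a \emph{single} parameter $d$: we need the \emph{sum} of the degrees (or prime-powers) to stay at most $d\log d$, while the \emph{lcm} of those same quantities blows up past $\exp(\sqrt d)$, and meanwhile the combining conditions of Lemma~\ref{lemma:rat-bin-combine-step} ($\abs{r_i}^{m_j} = \abs{r_j}^{m_i}$) must actually hold so that a binomial multiple of the product genuinely exists — this is why one wants all factors to share a common "constant term base" like $2$ (or $1$). I expect the delicate point to be verifying that the chosen family really does satisfy the Risman/combine compatibility and that the minimal degree is exactly (or within a factor $2$ of) $\prod p_i$, rather than something smaller collapsing the construction; the PNT estimates $\sum_{p\leq N} p \sim N^2/(2\log N)$ and $\log\prod_{p\leq N} p \sim N$ are standard and I would simply cite them, choosing $N$ as a function of $d$ to balance everything, and then discharge the small-$d$ cases by the explicit threshold $d \geq 841$.
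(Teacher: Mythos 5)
Your overall framework matches the paper's: take a product of cyclotomic polynomials $\Phi_{p_i}$ for many small primes $p_i$, so that $\deg f \approx \sum (p_i-1)$ stays polynomially bounded while the lcm $m = \prod p_i = \exp(\vartheta(p_\ell))$ explodes by Chebyshev-type estimates. The paper takes $\ell = \sqrt{2d}$, $g = \prod_{i\leq\ell}\Phi_{p_i}$, and uses \cite{RosSch62} to bound $\deg g$, $\H(g)$, and $\vartheta(p_\ell)$ --- the same bookkeeping you sketch, with a slightly different parametrization. You correctly identify the real difficulty: with $g$ alone, the constant $a$ in the binomial multiple $x^m - a$ is just $1$, which gives no height blow-up.

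However, your proposed fix for that difficulty does not work. You suggest replacing the cyclotomic factors by factors drawn from $x^{p_i}-2$, but $x^{p_i}-2$ is already irreducible by Eisenstein, so the candidate polynomial would be $\prod_i (x^{p_i}-2)$ with minimal data $m_i = p_i$, $r_i = 2$. The combining condition of Lemma~\ref{lemma:rat-bin-combine-step}, namely $\abs{r_i}^{m_j} = \abs{r_j}^{m_i}$, then requires $2^{p_j} = 2^{p_i}$ for all $i,j$, which fails for distinct primes. So $\prod_i (x^{p_i}-2)$ has \emph{no} binomial multiple at all, and the construction collapses. The paper's actual fix is a single global scaling: set $f = g(2x)$. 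Then every root of $f$ is $\zeta/2$ for some root of unity $\zeta$, so each irreducible factor has $m_i = p_i$ and $r_i = 1/2^{p_i}$; now $\abs{r_i}^{m_j} = 1/2^{p_i p_j} = \abs{r_j}^{m_i}$ automatically, compatibility holds, and the minimal binomial multiple becomes $x^m - 1/2^m$ with $m = \prod p_i$, whose height is $2^m > 2^{\exp(\sqrt{d})}$. The scaling only multiplies $\H(g)$ by a factor $2^{\deg g}$, which the degree bound keeps within $\exp(2d\log d)$. This single substitution is the missing idea; everything else in your sketch is essentially the paper's argument.
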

\begin{proof}
  We construct the family from a product of cyclotomic polynomials.
  Let $p_i\in\NN$ be the $i$\textsuperscript{th} largest prime, and let
  $\Phi_{p_i}=(x^{p_i}-1)/(x-1)\in\ZZ[x]$ be the $p_i$\textsuperscript{th} cyclotomic
  polynomials (whose roots are the primitive $p_i$\textsuperscript{th} roots of unity).
  This is well known to be irreducible in $\QQ[x]$.

  Let $\ell=\sqrt{2d}$ and $g=\prod_{1\leq i\leq\ell} \Phi_{p_i}$.
  Then, using the fact easily derived from \cite{RosSch62}, Theorem 3, that $i\log
  i<p_i<1.25 i\log i$ for all $i\geq 25$ and verifying that $(p_i - 1) \leq 1.5 i \log{i}$ mechanically for smaller values of $i$,
  \[
  \deg g=\sum_{1\leq i\leq\ell} (p_i-1)\geq \sum_{1\leq i\leq\ell}
  i = \frac{l(l+1)}{2} \geq d,
  \]
  and
  \[
  \deg g = \sum_{1\leq i\leq\ell} (p_i-1) \leq \sum_{1\leq i\leq\ell}
  1.5i\log i \leq 1.5\left(\frac{\ell^2+\ell}{2}\log\ell\right)
  \leq d\log d.
  \]
   The degree $m$ of the minimal binomial multiple is the
  lcm of the order of the roots, and hence equal to the product of
  primes less than or equal to $p_\ell$. This is
  $\exp(\vartheta(p_\ell))$ (where $\vartheta$ is the Chebyshev theta
  function), and for $\ell\geq 41$
  \[
  m\geq \exp(\vartheta(p_\ell))\geq \exp(\vartheta(\ell)) 
  \geq\exp\left(\ell \left(1-\frac{1}{\log\ell}\right)\right)
  \geq \exp\left(\sqrt{d}\right),
  \]
  for $d\geq 841$, where the bounds on $\vartheta$ are derived from
  \cite{RosSch62} Theorem 4.

  Now let $f=g(2x)$, so the minimal binomial multiple of $f$ is
  $x^m-1/2^m$.  We have that
  \[
  \H(g)\leq \prod_{1\leq i\leq\ell}(1+p_i) \leq 2^\ell \prod_{1\leq
    i\leq\ell} p_i
  \leq \exp(2\ell\log\ell)
  \]
  and
  \[
  \H(f)\leq 2^{\deg(g)}\H(g)
  \leq 2^{d\log d}\exp(d\log d+2\sqrt{2d}\log\sqrt{2d})
  \leq \exp(2d\log d)
  \]
  for all $\geq 841$. 
  \QED
\end{proof}


\section{Computing $t$-sparse multiples over $\QQ$} 
\label{sec:rat-gen}

We examine the problem of computing $t$-sparse multiples of
rational polynomials, for any fixed positive integer $t$. As with other
types of polynomial computations, it seems that cyclotomic polynomials
behave quite differently from cyclotomic-free ones. Accordingly, we first
examine the case that our input polynomial $f$ consists only of
cyclotomic or cyclotomic-free factors. Then we see how to combine them, in the
case that none of the cyclotomic factors are repeated.

Specifically, we will show that, given any rational polynomial $f$
which does not have repeated cyclotomic factors, and a height bound
$c\in\NN$, we can compute a sparsest multiple of $f$ with height at
most $c$, or conclude that none exists, in time polynomial in the size
of $f$ and $\log c$ (but exponential in $t$).

First, notice that multiplying a polynomial by a power of $x$ does not
affect the sparsity, and so without loss of generality we may assume all
polynomials are relatively prime to $x$; we call such polynomials
\emph{non-original} since they do not pass through the origin.

\subsection{The cyclotomic case}

Suppose the input polynomial $f$ is a product of cyclotomic factors, and
write the complete factorization of $f$ as
\begin{equation}
\label{eqn:cycfac}
f=\Phi_{i_1}^{e_i}\cdot \Phi_{i_2}^{e_2} \cdots \Phi_{i_k}^{e_k},
\end{equation}
where $\Phi_j$ indicates the $j^\textrm{th}$ cyclotomic polynomial,
the $i_j$'s are all distinct, and the $e_i$'s are positive integers.

Now let $m = \lcm(i_1,\ldots,i_k)$. Then $m$ is the least integer such that
$\Phi_{i_1}\cdots\Phi_{i_k}$ divides $x^m-1$. Let
$\ell=\max_i e_i$, the maximum multiplicity of any factor of $f$. This means
that $(x^m-1)^\ell$ is an $(\ell+1)$-sparse multiple of $f$. To prove that
this is in fact a sparsest multiple of $f$, we first require the following
simple lemma.
Here and for the remainder, for a univariate polynomial $f\in\F[x]$,
we denote by $f'$ the first derivative with respect to $x$, that is,
$\frac{\rm d}{{\rm d}x}f$.

\begin{lemma}\label{lem:multbound}
  Let $h\in\QQ[x]$ be a $t$-sparse and non-original polynomial,
  and write
  $h=a_1 + a_2x^{d_2} + \cdots + a_tx^{d_t}$.
  Assume the complete factorization of $h$ over $\QQ[x]$ is
  $h = a_t h_1^{e_1} \cdots h_k^{e_k}$, 
  with each $h_i$ monic and irreducible.
  Then $\max_i e_i \leq t-1$.
\end{lemma}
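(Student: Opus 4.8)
The plan is to bound the multiplicity of any irreducible factor of a $t$-sparse polynomial by relating high multiplicity to the vanishing of several derivatives, and then counting how sparse those derivatives can be. Suppose $h_i^{e}$ divides $h$ with $e = \max_j e_j$. Then $h_i$ divides $h, h', h'', \ldots, h^{(e-1)}$, since each derivative reduces the multiplicity of $h_i$ by exactly one (here we use $\op{char}\F=0$, so no multiplicity is lost spuriously). The key observation is that differentiating a sparse non-original polynomial lowers its sparsity: if $h=a_1+a_2x^{d_2}+\cdots+a_tx^{d_t}$ with $0<d_2<\cdots<d_t$, then $h'=a_2d_2x^{d_2-1}+\cdots+a_td_tx^{d_t-1}$ has at most $t-1$ nonzero terms, and more generally $h^{(k)}$ is $(t-k)$-sparse (still with the convention that we drop terms whose coefficient becomes zero; in characteristic zero the surviving terms all keep nonzero coefficients). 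So $h^{(t-1)}$ is $1$-sparse, i.e.\ a monomial $cx^{d_t-t+1}$ for some nonzero $c$, and $h^{(t)}$ and beyond are $0$.

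First I would make precise the claim that $h^{(k)}$ is $(t-k)$-sparse for $0\le k\le t$, which is immediate from the power rule and the fact that $\tbinom{d}{1}\tbinom{d-1}{1}\cdots = d(d-1)\cdots(d-k+1)$ is a nonzero rational whenever $d\ge k$ (and the term vanishes identically once $d<k$, but such a low-degree term can only be the constant or near-constant terms, of which there is at most one surviving past each differentiation). Then I would argue: if $e\ge t$, then $h_i$ divides $h^{(t-1)}$, which is a nonzero monomial $cx^{e'}$; but $h_i$ is monic, irreducible, and — since $h$ is non-original and $h_i$ divides $h$ — $h_i$ is not $x$, so $h_i$ cannot divide any power of $x$. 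This contradiction forces $e\le t-1$.

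The one point that needs a little care — and the main obstacle — is the bookkeeping around terms that disappear under differentiation: a priori one might worry that $h^{(k)}$ has fewer than $t$ terms in a way that breaks the induction, but in fact fewer terms only helps, and the only term that can vanish under a single differentiation is a constant term (degree $0$), of which there is at most one. So after at most one differentiation the polynomial has no constant term, all remaining exponents are $\ge 1$, and thereafter each differentiation strictly and cleanly reduces the term count by exactly one until it reaches zero. Thus after $t-1$ differentiations at most one term survives, after $t$ differentiations none do, and the divisibility argument above goes through, giving $\max_i e_i\le t-1$.
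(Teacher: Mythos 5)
The central claim --- that $h^{(k)}$ is $(t-k)$-sparse, so that $h^{(t-1)}$ is a monomial --- is false, and the heuristic you give to justify it is inverted. In characteristic zero, differentiation deletes a term only when that term is constant; once there is no constant term, every surviving exponent drops by one and every surviving coefficient is multiplied by a nonzero integer, so the term count is \emph{unchanged}. After the constant term has been consumed, iterated differentiation does not reduce sparsity at all, and your sentence ``thereafter each differentiation strictly and cleanly reduces the term count by exactly one'' is exactly backwards. Concretely, take $h = 1 + x^5 + x^{10}$ with $t=3$: then $h'' = 20x^3 + 90x^8$, still two terms rather than a monomial, so the step ``$h_i$ divides the monomial $h^{(t-1)}$, contradiction'' does not go through.

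The paper repairs precisely this defect by replacing the plain derivative with the normalized \emph{sparse derivative} $\tilde h = h'/x^{d_2-1}$. Since $h$ is non-original, its irreducible factor $h_i$ is coprime to $x$, so $h_i^{e_i-1}$ still divides $\tilde h$; and $\tilde h$ is $(t-1)$-sparse \emph{with a nonzero constant term}, i.e.\ again non-original, so an induction on $t$ applied to $\tilde h$ genuinely reduces sparsity at each step. Your argument could be rescued by dividing out the maximal power of $x$ after each differentiation before differentiating again; without that re-normalization, the derivative chain simply does not collapse to a monomial.
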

\begin{proof}
  Without loss of generality, assume $h$ is \emph{exactly} $t$-sparse, and each $a_i\neq 0$.

  The proof is by induction on $t$. If $t=1$ then $h=a_1$ is a constant, so
  $\max_i e_i=0$ and the statement holds. Otherwise, assume
  the statement holds for $(t-1)$-sparse polynomials.
  
  Write the so-called ``sparse derivative'' $\tilde{h}$ of $h$ as
  \[
  \tilde{h} = \frac{h'}{x^{d_2-1}} =
    a_2 d_2  + a_3 d_3 x^{d_3-d_2} + \cdots +
    a_{t-1}d_{t-1}x^{d_{t-1}-d_2}.
    \] 
  For any $i$ with $e_i>0$, we know that $h_i^{e_i-1}$ divides
  $\frac{d}{dx} h$, and $h_i$ is relatively prime to $x^{d_2-1}$ since
  the constant coefficient of $h$ is nonzero. Therefore $h_i^{e_i-1}$
  divides $\tilde{h}$.
  By the inductive hypothesis, since $\tilde{h}$ is $(t-1)$-sparse and 
  non-original, $e_i-1 \leq t-2$, and therefore
  $e_i \leq t-1$. Since $i$ was chosen arbitrarily, $\max_i e_i \leq
  t-1$.
  \QED
\end{proof}

An immediate consequence is the following:

\begin{corollary}\label{cor:cyclmul}
  Let $f\in\QQ[x]$ be a product of cyclotomic polynomials, written as in
  \eqref{eqn:cycfac}. Then $$h=(x^{\lcm(i_1,\ldots,i_k)}-1)^{\max_i e_i}$$
  is a sparsest multiple of $f$.
\end{corollary}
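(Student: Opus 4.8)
The plan is to derive Corollary~\ref{cor:cyclmul} directly from Lemma~\ref{lem:multbound} together with the elementary facts about cyclotomic polynomials recalled just before the statement. First I would observe that $h=(x^m-1)^{\ell}$, where $m=\lcm(i_1,\dots,i_k)$ and $\ell=\max_i e_i$, is indeed a multiple of $f$: each $\Phi_{i_j}$ divides $x^m-1$ because $i_j\mid m$, so $\Phi_{i_j}^{e_j}$ divides $(x^m-1)^{e_j}$, which divides $(x^m-1)^{\ell}$ since $e_j\leq\ell$; hence $f=\prod_j\Phi_{i_j}^{e_j}$ divides $(x^m-1)^{\ell}$. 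Next I would bound its sparsity: expanding by the binomial theorem, $(x^m-1)^{\ell}=\sum_{j=0}^{\ell}\binom{\ell}{j}(-1)^{\ell-j}x^{mj}$ has exactly $\ell+1$ nonzero terms, so $h$ is $(\ell+1)$-sparse.

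It then remains to show no multiple of $f$ can be $\ell$-sparse, i.e.\ that the sparsity of any multiple is at least $\ell+1$. Suppose $g\in\QQ[x]$ is a $t$-sparse multiple of $f$ with $t\leq\ell$. Since multiplying by a power of $x$ does not change sparsity, and $f$ is non-original (each $\Phi_{i_j}$ with $i_j\geq 1$ is relatively prime to $x$; note that the factorization \eqref{eqn:cycfac} implicitly excludes the factor $x$, consistent with the running assumption that inputs are non-original), we may replace $g$ by $g/x^{v}$ where $x^v\,\|\,g$ to assume $g$ is non-original and still $t$-sparse. Now $f\mid g$, and in particular the irreducible factor $\Phi_{i^*}$ realizing the maximum multiplicity $\ell=e_{i^*}$ satisfies $\Phi_{i^*}^{\ell}\mid g$. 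So in the complete factorization of $g$ over $\QQ[x]$, some irreducible factor occurs with multiplicity at least $\ell$. But Lemma~\ref{lem:multbound} applied to the non-original $t$-sparse polynomial $g$ gives that every irreducible factor occurs with multiplicity at most $t-1\leq\ell-1$, a contradiction. Hence every multiple of $f$ has sparsity at least $\ell+1$, and $h$ attains this, so $h$ is a sparsest multiple.

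I do not anticipate a serious obstacle here; the argument is essentially a bookkeeping exercise combining the divisibility of $x^m-1$ by the $\Phi_{i_j}$ with the multiplicity bound of Lemma~\ref{lem:multbound}. The one point requiring a little care is the reduction to the non-original case: one must confirm that passing from an arbitrary multiple $g$ to $g/x^v$ preserves both $t$-sparsity and divisibility by $f$ (the latter because $f$ is coprime to $x$, so $f\mid g$ forces $f\mid g/x^v$), after which Lemma~\ref{lem:multbound} applies verbatim. I would also note explicitly, for completeness, that $h$ is itself non-original, so the comparison is between objects of the same type and the conclusion "$h$ is a sparsest multiple" is meaningful up to the harmless factor of a power of $x$.
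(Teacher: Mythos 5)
Your proof is correct and follows essentially the same route as the paper's: exhibit $(x^m-1)^\ell$ as an $(\ell+1)$-sparse multiple, then rule out any $\ell$-sparse (or sparser) multiple by reducing to the non-original case and invoking Lemma~\ref{lem:multbound} to contradict the required multiplicity $\ell$ of the irreducible factor realizing $\max_i e_i$. Your write-up is merely more explicit about the divisibility of $x^m-1$ by each $\Phi_{i_j}$, the binomial expansion giving exactly $\ell+1$ terms, and the harmlessness of dividing out $x^v$; these are details the paper takes for granted, but the underlying argument is identical.
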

\begin{proof}
  Clearly $h$ is a multiple of $f$ with exactly $\max_i e_i+1$ nonzero terms.
  By way of contradiction, suppose a $(\max_i e_i)$-sparse multiple of 
  $f$ exists; call it $\bar{h}$. 
  Without loss of generality, we can assume that $\bar{h}$
  is non-original. Then from 
  Lemma~\ref{lem:multbound}, the maximum multiplicity of any factor of $\bar{h}$
  is $\max_i e_i - 1$. But this contradicts the fact that each $\Phi_i^{e_i}$
  must divide $\bar{h}$. Therefore the original statement is false, and
  every multiple of $f$ has at least $\max_i e_i + 1$ nonzero terms.
  \QED
\end{proof}

\subsection{The cyclotomic-free case}

We say a polynomial $f\in\QQ[x]$ is \emph{cyclotomic-free} if it contains
no cyclotomic factors. Here we will show that a sparsest multiple of a
cyclotomic-free polynomial must have degree bounded by a polynomial
in the size of the input and output.

First we need the following elementary
lemma. 

\begin{lemma}\label{lem:deriv}
  Suppose $f,h\in\QQ[x]$ with $f$ irreducible, and $k$ is a positive integer.
  Then $f^k|h$ if and only if $f|h$ and $f^{k-1}|h'$.
\end{lemma}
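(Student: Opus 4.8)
The plan is to prove both directions by straightforward manipulation of factorizations, using the fact that $\QQ[x]$ is a UFD and that an irreducible polynomial over a field of characteristic zero is coprime to its own derivative. First I would handle the easy direction: if $f^k \mid h$, write $h = f^k q$ for some $q \in \QQ[x]$. Then certainly $f \mid h$, and differentiating gives $h' = k f^{k-1} f' q + f^k q' = f^{k-1}(k f' q + f q')$, so $f^{k-1} \mid h'$. This uses nothing beyond the product rule and needs no hypothesis on $f$ other than that it is a polynomial.

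For the converse, suppose $f \mid h$ and $f^{k-1} \mid h'$; I want to conclude $f^k \mid h$. Let $j$ be the exact multiplicity of $f$ in $h$, so $h = f^j q$ with $f \nmid q$; since $f \mid h$ we have $j \geq 1$, and the goal is to show $j \geq k$. Differentiating, $h' = f^{j-1}(j f' q + f q')$. The key point is that $f \nmid (j f' q + f q')$: indeed $f \mid f q'$, so if $f$ divided the whole bracket it would divide $j f' q$; but $f$ is irreducible, $j \neq 0$ in $\QQ$ (characteristic zero), $f \nmid q$ by choice of $j$, and $f \nmid f'$ because $\deg f' < \deg f$ with $f' \neq 0$ (again using characteristic zero, $f$ non-constant — the constant case is trivial since then $f^k \mid h$ automatically). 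Hence $f \nmid j f' q$, a contradiction. So $f$ appears in $h'$ with multiplicity exactly $j-1$, and the hypothesis $f^{k-1} \mid h'$ forces $j - 1 \geq k - 1$, i.e. $j \geq k$, which gives $f^k \mid h$.

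I expect the only subtlety — and it is minor — is keeping track of the characteristic-zero hypotheses that make $f' \neq 0$ and $j \neq 0$ in the coefficient field; over $\QQ$ these are automatic, so there is no real obstacle. The degenerate case $\deg f = 0$ should be dispatched at the outset (it cannot occur if $f$ is required to be a nonconstant irreducible, but it is cleanest to note it). I would also remark that $h = 0$ is fine (both sides hold vacuously), so no nonvanishing assumption on $h$ is needed.
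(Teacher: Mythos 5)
Your proof is correct and follows essentially the same approach as the paper: both differentiate $h = f^\ell g$ (with $\ell$ the exact multiplicity of $f$ in $h$) to get $h' = f^{\ell-1}(\ell f' g + f g')$, and both conclude from $f \nmid g$, $f \nmid f'$ (characteristic zero), and the irreducibility of $f$ that the multiplicity of $f$ in $h'$ is exactly $\ell - 1$, forcing $\ell \geq k$. You are somewhat more explicit than the paper about why $f \nmid \ell f'$ (nonvanishing derivative and $\ell \neq 0$ in $\QQ$), but there is no substantive difference.
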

\begin{proof}
  The $\Rightarrow$ direction is straightforward.

  For the $\Leftarrow$ direction, suppose $f|h$ and $f^{k-1}|h'$.
  Let $\ell$ be the maximum multiplicity of $f$ in $h$, and write
  $h=f^\ell g$ with $g\in\QQ[x]$ relatively prime to $f$.

  We can write $h' = f^{\ell-1}\left( fg' + \ell f' g\right)$. Now, by way of
  contradiction, assume that $k > \ell$. Then $f$ divides $fg' + \ell f'g$, and
  therefore $f$ divides $\ell f' g$. But this is impossible from the assumption
  that $f$ is irreducible and relatively prime to $g$. Therefore $k \leq \ell$,
  and $f^k | f^\ell | h$.
  \QED
\end{proof}

The following technical lemma
provides the basis for our degree bound on the
sparsest multiple of a non-cyclotomic polynomial.

\begin{lemma}\label{lem:gap}
  Let $f,h_1,h_2,\ldots,h_\ell \in\QQ[x]$ be non-original polynomials,
  where $f$ is irreducible and non-cyclotomic with degree $d$,
  and each $h_i$ satisfies 
  $\deg h_i \leq u$ and $\H(h_i)\leq c$.
  Also let $k,m_1,m_2,\ldots,m_\ell$ be positive integers such that
  \[
  f^k|(h_1 x^{m_1} + h_2 x^{m_2} + \cdots + h_\ell x^{m_\ell}).
  \]
  Then $f^k$ divides each $h_i$ whenever every ``gap length'', for
  $1\leq i<\ell$, satisfies
  \begin{equation}\label{eqn:gap1}
    m_{i+1} - m_i - \deg h_i \geq
    \frac{1}{2} d \cdot \ln^3 (3d) \cdot 
    \ln\left(u^{k-1} c \left(t-1\right)\right).
  \end{equation}
\end{lemma}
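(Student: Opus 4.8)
The plan is to argue by induction on $k$, using Lemma~\ref{lem:deriv} to peel off one factor of $f$ at a time. For the base case $k=1$, I need to show that if $f \mid (h_1 x^{m_1} + \cdots + h_\ell x^{m_\ell})$ and the gaps are large enough, then $f \mid h_i$ for every $i$. The idea here is to pass to a root $\alpha$ of $f$ and exploit the fact that the consecutive blocks $h_i x^{m_i}$, evaluated at $\alpha$, live at wildly different absolute values when the gaps are large: if $\snorm{\alpha}\neq 1$ (which is where non-cyclotomicity must enter), then a large gap $m_{i+1}-m_i$ forces $\abs{h_{i+1}(\alpha)\alpha^{m_{i+1}}}$ to dominate or be dominated by $\abs{h_i(\alpha)\alpha^{m_i}}$, so the sum cannot vanish unless each $h_i(\alpha)=0$. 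Since this must hold for every conjugate root $\alpha$ of the irreducible $f$, we get $f\mid h_i$. Making this quantitative requires two ingredients: a lower bound on $\abs{\log\snorm{\alpha}}$ for roots $\alpha$ of a non-cyclotomic integer polynomial (a Mahler-measure / Lehmer-type lower bound — something like $\abs{\log\snorm{\alpha}} \geq c_0 / (d\ln^3(3d))$ following Dobrowolski-style estimates, which is exactly where the $d\ln^3(3d)$ factor in \eqref{eqn:gap1} comes from), together with elementary upper and lower bounds $\abs{h_i(\alpha)} \leq (u+1)c\,\max(1,\snorm{\alpha})^u$ and, when $h_i(\alpha)\neq 0$, a lower bound on $\abs{h_i(\alpha)}$ via its Mahler measure or resultant with $f$.

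For the inductive step, suppose the claim holds for $k-1$ and that $f^k \mid H$ where $H = \sum_i h_i x^{m_i}$. By Lemma~\ref{lem:deriv}, $f \mid H$ and $f^{k-1} \mid H'$. From $f\mid H$ and the base case, $f \mid h_i$ for all $i$; write $h_i = f\tilde h_i$. Differentiating, $H' = \sum_i (h_i' x^{m_i} + m_i h_i x^{m_i-1})$, which I want to rewrite as a new gapped sum $\sum_i \bar h_i x^{\bar m_i}$ with blocks $\bar h_i$ of degree at most $u$ (roughly $h_i' + m_i h_i/x$, after absorbing powers of $x$, so degree $\leq u$) and height bounded in terms of $u$, $c$, and the $m_i$ — here the $m_i$ can be as large as needed, but crucially they contribute only logarithmically, and $f^{k-1}\mid$ this sum. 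Then apply the induction hypothesis with exponent $k-1$: the required gap length becomes $\tfrac12 d\ln^3(3d)\ln(u^{k-2}\bar c(t-1))$ where $\bar c$ is the new height bound, and since $\bar c$ is polynomially related to $u^{?}c$ times something like $\max_i m_i$, the logarithm $\ln(u^{k-2}\bar c(t-1))$ is at most $\ln(u^{k-1}c(t-1))$ provided the original gap hypothesis \eqref{eqn:gap1} is in force (this is why the bound carries a $u^{k-1}$ rather than $u^{k-2}$). Conclude $f^{k-1} \mid \tilde h_i$ for each $i$, hence $f^k \mid h_i$.

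The main obstacle I expect is twofold and both parts sit in the base case. First, getting the absolute-value separation argument fully rigorous when $\snorm{\alpha}$ is very close to $1$ but $f$ is non-cyclotomic: I need a clean, explicitly constant Dobrowolski/Mahler lower bound of the shape $\max_{\alpha}\abs{\log\snorm{\alpha}} \gg 1/(d\ln^3(3d))$ valid for \emph{all} non-cyclotomic irreducible $f\in\ZZ[x]$ of degree $d$, and I must be careful that it is the \emph{largest}-modulus root that gets used, and that no root lies exactly on the unit circle (which follows from non-cyclotomicity plus Kronecker's theorem, but the quantitative version needs care). Second, bookkeeping the height growth under differentiation through the induction so that $\ln(\text{new height}\cdot\cdots)$ genuinely stays below $\ln(u^{k-1}c(t-1))$ — the factor $(t-1)$ presumably accounts for the number $\ell\leq t-1$ of blocks when a triangle-inequality bound on $\abs{H(\alpha)}=\abs{\sum_i h_i(\alpha)\alpha^{m_i}}$ is used, so I would track that constant explicitly rather than hide it in $O(\cdot)$.
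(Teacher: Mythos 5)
Your skeleton---induction on $k$, closed via Lemma~\ref{lem:deriv}---matches the paper, but the inductive step as you set it up has a genuine gap. You propose to fold the $m_i h_i x^{m_i-1}$ terms of the derivative into the blocks, producing $\bar h_i \approx xh_i' + m_i h_i$ with $\H(\bar h_i)$ of order $(u+m_i)c$. Since $m_i$ can be exponentially large in $d$ and $\log c$ (that is precisely what makes the lemma nontrivial), $\ln \bar c$ picks up a $\ln m_\ell$ term, and the induction hypothesis at level $k-1$ then requires a gap of $\tfrac12 d\ln^3(3d)\ln\bigl(u^{k-2}\bar c(t-1)\bigr)$, which is \emph{not} implied by \eqref{eqn:gap1} unless $m_\ell \lesssim u$---and the gap hypothesis forces $m_\ell$ to be large, not small. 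The paper avoids this entirely: from $f^{s}\mid H$ it first applies the induction hypothesis to $H$ itself at exponent $s-1$ to conclude $f^{s-1}\mid h_i$ for every $i$, so the $m_i h_i x^{m_i-1}$ terms of $H'$ are \emph{already} divisible by $f^{s-1}$ and can simply be discarded rather than absorbed into new blocks. What remains is the clean sum $\sum h_i'x^{m_i}$ with blocks of height at most $uc$ and unchanged exponents; a second application of the induction hypothesis gives $f^{s-1}\mid h_i'$, and Lemma~\ref{lem:deriv} finishes. This is exactly why the bound carries $u^{k-1}$: each of the $k-1$ differentiations costs precisely one factor of $u$ in height, and nothing else enters.

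For the base case $k=1$, the paper does not re-derive the analytic estimate: it runs an inner induction on $\ell$, splitting off $g_1=h_1x^{m_1}$ from $g_2 x^{m_2}$ and invoking \cite[Proposition 2.3]{Len99} to get $f\mid g_1$ and $f\mid g_2$. Your plan to re-prove this via a Dobrowolski-type bound is the hard analytic core of Lenstra's result and is not as routine as your sketch suggests: Dobrowolski lower-bounds $\log M(f)$, not $\max_\alpha\lvert\log\lvert\alpha\rvert\rvert$, and passing naively from one to the other costs an extra factor of $d$ (the largest root may be as small as $M(f)^{1/d}$), which would spoil the $d\ln^3(3d)$ in \eqref{eqn:gap1}; and a non-cyclotomic irreducible $f$ can have many conjugates exactly on the unit circle (Salem polynomials), so the domination argument only sees the off-circle roots and must be phrased to still conclude $f\mid h_i$. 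Invoking Lenstra, as the paper does, is both cleaner and necessary to land on the stated constant.
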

\begin{proof}
  The proof is by induction on $k$. For the base case, let $k=1$.
  Then we have a separate, inner induction on $\ell$. The inner base
  case, when $k=\ell=1$, is clear since $f$ is non-original. Now assume
  the lemma holds whenever $k=1$ and $1\leq\ell-1<r$ for some $r\geq 2$.
  Let $g_1=h_1x^{m_1}$ and $g_2 = h_2 + \cdots + h_\ell x^{m_r-m_2}$,
  so that $f \divs (g_1 + g_2 x^{m_2})$.
  Since $$m_2-\deg g_1 \geq \frac{1}{2}d\cdot \ln^3 (3d)\cdot \ln(c(t-1)),$$
  we can apply \cite[Proposition 2.3]{Len99} to conclude that
  $f\divs g_1$ and $f\divs g_2$. This means $f\divs h_1$ and, by the inner
  induction hypothesis, $f\divs h_i$ for $2\leq i\leq \ell$ as well. Therefore
  the lemma holds whenever $k=1$.

  Now assume the lemma holds whenever $\ell\geq 1$ and $1\leq k < s$, for
  some $s \geq 2$. Next let $\ell$ be arbitrary and $k=s$. So we write
  $f^s|(h_1 x^{m_1} + \cdots + h_\ell x^{m_\ell})$.
  
  The derivative of the right hand side is
  $$h_1' x^{m_1} + m_1 h_1 x^{m_1 - 1} + \cdots + h_\ell' x^{m_\ell} + 
    m_\ell h_\ell x^{m_\ell-1},$$ which must be divisible by $f^{s-1}$.
  But by the induction
  hypothesis, $f^{s-1}$ also divides each $h_i$, so we can remove all terms with
  $h_i$ from the previous formula and conclude that
  $f^{s-1}|\left( h_1'x^{m_1} + \cdots + h_\ell'x^{m_\ell} \right)$.

  Since each $\H(h_i)\leq c$ and $\deg h_i \leq u$, the height of
  the derivative satisfies $\H(h_i') \leq uc$. A second application of the 
  induction hypothesis therefore shows that each $h_i'$ is divisible by
  $f^{s-1}$. Since $s-1\geq 1$, we already know that each $h_i$ is divisible
  by $f$, and then applying Lemma~\ref{lem:deriv} completes the proof.
  \QED
\end{proof}

Our main tool in proving that 
Algorithm~\ref{alg:boundednc} is useful for
computing the sparsest multiple of a rational polynomial,
given only a bound $c$ on the height, in polynomial time in 
the size of $f$ and $\log c$,
is the following 
degree bound on the sparsest height-bounded
multiple of a rational polynomial. 

\begin{theorem} \label{thm:degbound}
  Let $f\in\QQ[x]$ with $\deg f=d$ be cyclotomic-free,
  and let $t,c\in\NN$ such that $f$ has a nonzero $t$-sparse multiple with
  height at most $c$.
  Denote by $n$ the smallest degree of any such multiple of $f$.
  Then $n$ satisfies
  \begin{equation}\label{eqn:degbound}
    n \leq 2 (t-1) B \ln B,
  \end{equation}
  where $B$ is the formula polynomially bounded by $d$, $\log c$, and $\log t$
  defined as
  \begin{equation}\label{eqn:B}
    B = \frac{1}{2} d^2 \cdot \ln^3(3d) \cdot 
    \ln\left( \hat{c} \left(t-1\right)^d \right),
  \end{equation}
  and $\hat{c} = \max(c,35)$.
\end{theorem}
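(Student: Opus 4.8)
The plan is to bound the degree $n$ of a smallest $t$-sparse, height-$\le c$ multiple $h$ of $f$ by showing that if $n$ were too large, then $h$ would contain a long ``gap'' between consecutive terms, which by Lemma~\ref{lem:gap} would force $f$ (indeed a power of $f$) to divide each low-degree ``block'' of $h$ separately, ultimately contradicting minimality of $n$ or non-cyclotomicity of $f$. Concretely, write $h = a_1 x^{e_1} + \dots + a_t x^{e_t}$ with $e_1 < \dots < e_t = n$. If every consecutive gap $e_{i+1}-e_i$ were small, then $n = e_t$ itself would be small (at most $(t-1)$ times the maximal gap plus lower-order terms), giving the bound directly. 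So the interesting case is when some gap is large; the argument must show a large gap is impossible.

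First I would set up the gap threshold: let $G = 2(t-1)B\ln B / (t-1) = 2B\ln B$ — or more carefully, choose the per-gap threshold so that $(t-1)$ times it is the claimed bound \eqref{eqn:degbound}, and observe that this threshold dominates the right-hand side of \eqref{eqn:gap1} in Lemma~\ref{lem:gap} with $u = n$ (or $u$ = the maximal degree of a block, which is at most $n$) and the stated height bound $c$. The key sub-step is to control the parameter $u$ in \eqref{eqn:gap1}: since $u$ appears only as $\ln(u^{k-1}c(t-1))$ and $k-1 \le t-2 \le d$ (using Lemma~\ref{lem:multbound} to bound the multiplicity of $f$ in $h$, and $t-1 \le d$ if $t$ is small relative to $d$, or otherwise folding $t$ into the $\hat c$ term), the $\ln u$ contribution is absorbed into the $d \ln(\hat c (t-1)^d)$ factor defining $B$ once $u \le n$ and $n$ is, say, at most $B^2$. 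This is a bootstrapping/self-referential bound: assume $n \le 2(t-1)B\ln B$ for contradiction-free closure, or argue that the minimal $n$ cannot be the first value to exceed the bound.

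Then, assuming $n > 2(t-1)B\ln B$, by pigeonhole some consecutive gap $e_{i+1}-e_i$ exceeds the threshold $G$; group the terms of $h$ into the low block $h_{\mathrm{lo}}$ (terms of degree $\le e_i$) and high block $h_{\mathrm{hi}}\cdot x^{e_{i+1}}$. Writing $h = h_{\mathrm{lo}} + h_{\mathrm{hi}} x^{e_{i+1}}$ — or, iterating over all large gaps, as a sum $h_1 x^{m_1} + \dots + h_\ell x^{m_\ell}$ of blocks separated by large gaps — Lemma~\ref{lem:gap} applies with $k$ the multiplicity of $f$ in $h$, yielding that $f^k$, hence $f$, divides each block $h_j$. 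But at least one block, namely the one containing the lowest-degree term $a_1 x^{e_1}$ (after dividing out the common power of $x$, which we may assume is trivial since $h$ is non-original), is itself a $t'$-sparse polynomial of strictly smaller degree than $n$ with $f$ as a factor — contradicting the minimality of $n$ unless that block is the whole of $h$ (i.e.\ there is only one block, i.e.\ no large gap), which is the case we excluded. One must handle the possibility that $f$ divides a block that is a multiple of $x$ or a constant: a constant block divisible by the non-unit $f$ is impossible, and a block that is a pure power of $x$ times a polynomial has $f$ dividing that polynomial, again of smaller degree.

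The main obstacle I anticipate is the self-referential nature of the degree bound: \eqref{eqn:gap1} depends on $u$, which is at most $n$, which is what we are bounding. The clean way around this is to observe that the lemma's gap condition, once $u$ is replaced by the crude upper bound $n$, still only contributes logarithmically ($\ln u^{k-1} \le (t-1)\ln n$), so it suffices to verify that $2(t-1)B\ln B$ is large enough that the threshold $2B\ln B$ exceeds $\tfrac12 d\ln^3(3d)\ln(n^{t-2}c(t-1))$ whenever $n \le 2(t-1)B\ln B$; this is a monotone inequality in $B$ that holds once $B \ge$ some absolute constant, which is exactly why $\hat c = \max(c,35)$ is chosen — to make $B$ large enough to start the induction. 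Verifying this inequality, and checking that $k-1 \le t-2$ together with $t-1 \le d$ (or the appropriate absorption when $t > d$) makes the exponent in \eqref{eqn:B} valid, are the routine calculations I would carry out in detail.
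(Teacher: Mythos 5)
Your proposal follows the same high-level strategy as the paper's proof — assume $n$ exceeds the bound, locate a large gap, invoke Lemma~\ref{lem:gap} to show $f$ divides each block of $h$ separately, and contradict minimality of $n$ — and you correctly identify the two delicate points (the self-referential choice of $u$ and the bound on $k$). But you leave both of them unresolved in ways that matter, and your proposed fixes are either circular or do not quite match the theorem as stated.

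On the choice of $u$: your primary suggestion is to take $u = n$, and then ``argue that the minimal $n$ cannot be the first value to exceed the bound.'' This is circular as written — the gap threshold in \eqref{eqn:gap1} would then depend on the very quantity you are trying to bound. Your parenthetical alternative (``or $u$ = the maximal degree of a block'') is the right move, and it is exactly what the paper does: split $h$ at \emph{every} gap exceeding $2B\ln B$, so that each block $h_i$ has at most $t$ terms with consecutive gaps all $\le 2B\ln B$, giving $\deg h_i \le (t-1)\cdot 2B\ln B$. One then takes $u = 2(t-1)B\ln B$ — this is numerically the claimed bound on $n$, but there is no circularity because it is an honest \emph{a priori} bound on the block degrees, not on $n$. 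With this $u$ fixed, the inequality $2B\ln B > \tfrac12 d\ln^3(3d)\ln(u^{d-1}\hat c(t-1))$ is a clean, unconditional computation (using $2B\ln B = B\ln(B^2) > B\ln(2B\ln B)$ and $B\ge 2.357$, which is where $\hat c\ge 35$ is used). You should commit to this version rather than the $u=n$ version.

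On the bound for $k$: you propose $k-1 \le t-2 \le d$ via Lemma~\ref{lem:multbound}, but $t-2\le d$ is not a hypothesis of the theorem, and your hedge about ``folding $t$ into the $\hat c$ term'' would change the shape of $B$. The right bound is simply $k\le d$: Lemma~\ref{lem:gap} is applied to each irreducible factor $p$ of $f$ in turn, with $k$ the multiplicity of $p$ in $f$, and since $\deg p\ge 1$ this multiplicity is at most $\deg f = d$. That is why the exponent in the $\ln(u^{d-1}\hat c(t-1))$ term — and hence the $(t-1)^d$ inside the definition of $B$ — is $d$, not $t$. (Also note that the lemma as stated requires its ``$f$'' to be irreducible; you write ``$f^k$, hence $f$, divides each block'' without making explicit that you are iterating over irreducible factors of the theorem's $f$, each taken to its full multiplicity.)
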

\begin{proof}
  Let $h$ be a $t$-sparse multiple of $f$ with degree $n$ and height
  $\H(h)\leq c$.
  Without loss of generality, assume $d\geq 1$, $t \geq 2$, and both $f$
  and $h$ are non-original.

  By way of contradiction, assume $n > 2 (t-1)B\ln B$.
  For any univariate polynomial define the \emph{gap lengths} to
  be the differences of consecutive exponents of nonzero terms.
  Split $h$ at every gap greater than $2B\ln B$ by writing
  $$h = h_1 x^{m_1} + h_2 x^{m_2} + \cdots + h_\ell x^{m_\ell},$$
  where each $h_i\in\QQ[x]$ has nonzero constant term and each gap length
  satisfies $m_{i+1}-m_i-\deg h_i > 2B\ln B$. Since we split $h$ at
  \emph{every} sufficiently large
  gap, and $h$ has at most $t$ nonzero terms, 
  each $h_i$ has degree at most $u=2(t-1)B\ln B$.

  We want to show that the gap length $2B\ln B$ is sufficiently large
  to apply Lemma~\ref{lem:gap}. For this, first notice that
  $2B\ln B = B \ln (B^2)$. Since $B$ is positive, $B^2 > 2B\ln B$, so
  the gap length is greater than $B\ln(2B\ln B)$.

  Since $\hat{c}\geq 35$, $B \geq 2.357$, and then
  \begin{align*}
  (d-1)\ln(2B\ln B) \cdot \ln(\hat{c}(t-1)^d) & > 
    \ln\left(\left(2B\ln B\right)^{d-1} \cdot \hat{c} (t-1)^d\right)
    \\
    & =  \ln\left(u^{d-1} \hat{c} \left(t-1\right)\right).
  \end{align*}

  Then from the definition of $B$ in \eqref{eqn:B}, the gap length 
  satisfies
  $$ 2B\ln B > B\ln(2B\ln B) > 
  \frac{1}{2} d \cdot \ln^3(3d) \cdot \ln\left(u^{d-1}\hat{c}\left(t-1\right)\right).
  $$

  Finally, notice that the maximum multiplicity of any factor of $f$ is at most
  $\deg f = d$.  Thus, using the notation of Lemma~\ref{lem:gap}, $d \geq k$.
  Therefore Lemma~\ref{lem:gap} applies to each factor of $f$ (to full
  multiplicity) and we conclude that $f$ divides each $h_i$.

  But then, since there is at least one gap and $\ell > 1$, $h_1$ is a
  multiple of $f$ with fewer terms and lower degree than $h$. This is a
  contradiction, which completes the proof.
  \QED
\end{proof}

In order to compute the sparsest multiple of a rational polynomial with
no cyclotomic or repeated factors, we therefore can simply call 
Algorithm~\ref{alg:boundednc} with the given height bound $c$
and degree bound as specified in
\eqref{eqn:degbound}. 

\subsection{Handling cyclotomic factors}

Suppose $f$ is any non-original rational polynomial with no repeated
cyclotomic factors. Factor $f$ as
$f = f_C \cdot f_D$, where $f_C$ is a squarefree product of cyclotomics
and $f_D$ is cyclotomic-free.
Write the factorization of $f_C$ as
$f_C = \Phi_{i_1} \cdots \Phi_{i_k}$, where $\Phi_n$ is
the $n$\textsuperscript{th} cyclotomic polynomial. Since every $i$\textsuperscript{th} root of unity is also
a $(mi)$\textsuperscript{th} root of unity for any $m\in\NN$, $f_C$ must divide the binomial
$x^{\lcm\{i_1,\ldots,i_k\}}-1$, which is in fact a sparsest multiple of
$f_C$ (Corollary \ref{cor:cyclmul}) and clearly has minimal height.

Then we will show that
a sparsest height-bounded multiple of $f$ is either of small degree, or 
can be constructed as
a sparsest height-bounded multiple of $f_D$ times 
the binomial multiple of $f_C$ specified above. Algorithm~\ref{alg:sqfree}
uses this fact to compute a sparsest multiple of any such $f$.

\begin{algorithm}[htbp]
\caption{Rational Sparsest Multiple}
\label{alg:sqfree}
\KwIn{Bounds $t,c\in\NN$ and $f\in\QQ[x]$ a non-original polynomial of
degree $d$ with no repeated cyclotomic factors}
\KwOut{$t$-sparse multiple $h$ of $f$ with
$\H(h)\leq c$, or \NONE}
  Factor $f$ as $f = \Phi_{i_1}\cdot \Phi_{i_2}\cdots \Phi_{i_k} \cdot f_D$,
    where $f_D$ is cyclotomic-free\label{alg:sqfree:factor}\;
  $n \gets$ degree bound from \eqref{eqn:degbound}\;
  $\hat{h} \gets$ $\floor{t/2}$-sparse 
  multiple of $f_D$ with $\H(\hat{h})\leq c$ and
  $\deg \hat{h} \leq n$, using Algorithm~\ref{alg:boundednc}\;
  $\tilde{h} \gets$ $t$-sparse multiple of $f$ with $\H(h)\leq c$ and
  $\deg h\leq n$, using Algorithm~\ref{alg:boundednc}\;
  \lIf{$\hat{h}=$\NONE and $\tilde{h}=$\NONE}
  {\KwRet{\NONE}}\;
  \lElseIf{$\hat{h}=$\NONE or 
  $\sparsity(\tilde{h}) \leq 2\cdot\sparsity(\hat{h})$}{\KwRet{$\tilde{h}$}}\;
  $m \gets \lcm\{i_1,i_2,\ldots,i_k\}$\;
  \KwRet{$\hat{h}\cdot (x^m-1)$} 
\end{algorithm}

\begin{theorem}\label{thm:ratgenalg}
  Let $f\in\QQ[x]$ be a degree-$d$ non-original polynomial with no repeated
  cyclotomic factors. Given $f$ and integers $c$ and $t$,
  Algorithm~\ref{alg:sqfree} correctly computes a $t$-sparse multiple $h$
  of $f$ satisfying $\H(h)\leq c$, if one exists. 
  The sparsity of $h$ will be minimal over all multiples with
  height at most $c$.
  The algorithm requires
  $(d \log c)^{O(t)} \cdot 2^{O(t\log t)} \cdot (\log \H(f))^{O(1)}$
  bit operations.
\end{theorem}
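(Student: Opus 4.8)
The plan is to verify correctness and then account for the running time, using the structural results already established. For correctness, I would first dispose of the trivial termination cases: if both calls to Algorithm~\ref{alg:boundednc} return \NONE, then in particular $f$ itself has no $t$-sparse multiple of degree at most $n$ and height at most $c$; I must argue that the degree bound $n$ from \eqref{eqn:degbound} is large enough that this implies no $t$-sparse height-$c$ multiple exists at all. This is the crux of the correctness argument and where I would spend the most effort. The idea, following the structure of Theorem~\ref{thm:degbound}, is: given \emph{any} $t$-sparse height-$c$ multiple $h$ of $f=f_C f_D$, split $h$ at its large gaps as in that proof. Lemma~\ref{lem:gap} applies to the cyclotomic-free part $f_D$ (to full multiplicity $\le d$), forcing $f_D$ to divide each block $h_i$. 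The cyclotomic part $f_C$ is squarefree and divides $x^m-1$; I would argue that if $h$ has a genuine gap then one block $h_i$, times at most one copy of $(x^m-1)$ to reinstate the cyclotomic roots, gives a multiple of $f$ of strictly smaller degree with at most as many terms, up to the factor-of-two bookkeeping for sparsity. Iterating, the minimal-degree $t$-sparse height-$c$ multiple has degree $\le n$, so the algorithm does not spuriously return \NONE.

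For the case analysis producing the output, I would argue as follows. Any $t$-sparse multiple $h$ of $f$ either has $f_D$ dividing it "block-by-block" after splitting — in which case, since $f_C$ contributes only a binomial $(x^m-1)$ of sparsity $2$ and minimal height, $h$ decomposes (up to lower-degree pieces) as $\hat h\cdot(x^m-1)$ for some $\lfloor t/2\rfloor$-sparse multiple $\hat h$ of $f_D$ — or $h$ is "short", of degree $\le n$, and is found directly by the second call to Algorithm~\ref{alg:boundednc}. Thus the true minimum sparsity of a height-$c$ multiple of $f$ is $\min\bigl(\sparsity(\tilde h),\ 2\,\sparsity(\hat h)\bigr)$, where $\tilde h,\hat h$ are the outputs of the two calls: the product $\hat h\cdot(x^m-1)$ has sparsity at most $2\sparsity(\hat h)$ and is a valid $t$-sparse multiple of $f$ of height $\le c$ (here $\H(\hat h\cdot(x^m-1))=\H(\hat h)\le c$ since $x^m-1$ has $\pm1$ coefficients and the supports do not overlap after scaling the exponents appropriately — I would note that one must shift $\hat h$ so that multiplication by $x^m-1$ causes no cancellation, which is possible since $\deg\hat h\le n < m$ for the relevant range, or handle the small-degree case separately via $\tilde h$). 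The comparison in the \textbf{lElseIf} line selects the sparser of the two constructions, and the minimality claim follows.

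For the running time: Algorithm~\ref{alg:sqfree} factors $f$ (polynomial time in $d+\log\H(f)$ via \cite{LLL82}), computes the degree bound $n$ from \eqref{eqn:degbound}, which is $(d\log c\log t)^{O(1)}$ by Theorem~\ref{thm:degbound}, and then makes two calls to Algorithm~\ref{alg:boundednc} with parameters $(t,n,c)$ and $(\lfloor t/2\rfloor, n, c)$. By Theorem~\ref{thm:a1}, each such call costs $(\log\H(f))^{O(1)}\cdot n^{O(t)}\cdot 2^{O(t\log t)}$ bit operations; substituting $n=(d\log c)^{O(1)}$ (absorbing the $\log t$ factor) gives $(d\log c)^{O(t)}\cdot 2^{O(t\log t)}\cdot(\log\H(f))^{O(1)}$. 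The final lcm and polynomial multiplication $\hat h\cdot(x^m-1)$ are dominated by these costs — note $m=\lcm(i_1,\ldots,i_k)$ can be exponential, but it is returned symbolically as an exponent, so only $\log m=(d)^{O(1)}$ bits are written. Summing the contributions yields the claimed bound. The main obstacle is the first part of the correctness argument: making fully rigorous the claim that splitting $h$ at large gaps, combined with Lemma~\ref{lem:gap} applied to $f_D$ and the squarefree-cyclotomic structure of $f_C$, always produces a strictly shorter multiple whose sparsity is controlled — in particular handling the interaction between the cyclotomic and cyclotomic-free parts and the factor-of-two slack, which is exactly why the algorithm searches for a $\lfloor t/2\rfloor$-sparse multiple of $f_D$ rather than a $(t-1)$-sparse one.
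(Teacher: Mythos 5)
Your proposal takes essentially the same route as the paper: a two-case split on whether a sparsest height-bounded multiple $h$ of $f$ has degree at most $n$, with the small-degree case handled directly by the $\tilde h$ call, and the large-degree case handled by gap-splitting $h$ into pieces, applying Lemma~\ref{lem:gap} to conclude that $f_D$ divides each piece, and then comparing $\sparsity\bigl(\hat h\cdot(x^m-1)\bigr)\le 2\,\sparsity(\hat h)\le\sparsity(h)$. The running-time accounting also matches the paper.

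Two small remarks on details you raise. You are right to flag the potential coefficient overlap in $\hat h\cdot(x^m-1)$ when $m\le\deg\hat h$ --- a point the paper's proof elides --- but your proposed fix is off: shifting $\hat h$ by a power of $x$ shifts both summands $x^m\hat h$ and $\hat h$ equally and so cannot remove the overlap, and $n<m$ need not hold (the worked example in Section~\ref{sec:ratgenex} has $n\approx 1.1\times 10^7$ while $m=30$). The robust fix is to replace $x^m-1$ by $x^{km}-1$ for the least $k$ with $km>\deg\hat h$; this is still a minimal-height binomial multiple of $f_C$, so $\hat h\cdot(x^{km}-1)$ keeps height $\H(\hat h)\le c$ and sparsity $\le 2\,\sparsity(\hat h)$. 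Also, your ``strictly smaller degree / iterating'' phrasing in Case~2 is misleading and unnecessary: the argument (as the paper has it, and as your final accounting uses) is a one-shot comparison of sparsities, not a degree descent, and indeed $\hat h\cdot(x^m-1)$ may well have \emph{larger} degree than $h$.
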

\begin{proof}
  Step~\ref{alg:sqfree:factor} can be accomplished in the stated complexity
  bound using \cite{LLL82}.
  The cost of the remaining steps follows from basic arithmetic and
  Theorem~\ref{thm:a1}.
  Define $h$ to be sparsest multiple of $f$ of least degree that
  satisfies $\H(h)\leq c$. We have two cases:

  \begin{description}
    \item[Case 1: $\deg h \leq n$.]
      Then the computed $\tilde{h}$ must equal $h$. Furthermore, since this is
      the sparsest multiple, either $\hat{h}$ does not exist or 
      the sparsity of $\hat{h}$ is greater than or equal to the sparsity of
      $\tilde{h}$. So $h=\tilde{h}$ is correctly returned by the algorithm 
      in this case.
    \item[Case 2: $\deg h > n$.]
      Then, using Lemma~\ref{lem:gap}, since $f_D\mid h$, $h$ can be written
      $h = h_1 + x^i h_2$, for some $i>\deg h_1$, and $f_D$ divides both
      $h_1$ and $h_2$. By Theorem~\ref{thm:a1}, $\sparsity(\hat{h})$ must
      then be less than or equal to each of $\sparsity(h_1)$ and 
      $\sparsity(h_2)$. But since 
      $\sparsity(h) = \sparsity(h_1)+\sparsity(h_2)$, this means that
      the sparsity of $\hat{h}\cdot (x^m-1)$ is less than or equal to
      the sparsity of $h$, and hence this is a sparsest multiple.
  \end{description}
  \QED
\end{proof}

\subsection{An example}\label{sec:ratgenex}

Say we want to find a sparsest multiple, with coefficients
at most $1000$ in absolute
value, of the following
polynomial over $\ZZ[x]$.
\[
f = {x}^{10}-5{x}^{9}+10{x}^{8}-8{x}^{7}+7{x}^{6}-4{x}^{5}
+4{x}^{4}+{x}^{3}+{x}^{2}-2x+4
\]

Note that finding the \emph{sparsest} multiple would correspond to
setting $t=10$ in the algorithm (since the least-degree 11-sparse multiple
is $f$ itself).
To accomplish this, we
first factor $f$ using \citep{LLL82} and identify cyclotomic factors:
\[
f = 
\underbrace{(x^2-x+1)}_{\Phi_6} \cdot 
\underbrace{(x^4-x^3+x^2-x+1)}_{\Phi_{10}} \cdot 
\underbrace{({x}^{4}-3{x}^{3}+{x}^{2}+6x+4)}_{f_D}.
\]

Next, we calculate a degree bound from Theorem~\ref{thm:degbound}. 
Unfortunately, this bound is not very tight (despite being polynomial in the
output size); using $t=10$, $c=1000$, and $f$ given above,
the bound is $n \leq 11\,195\,728$.
So for this example, we will use
the smaller (but artificial) bound of $n\leq 20$.

The next step is to calculate the sparsest 5-sparse multiple of $f_D$ and 
10-sparse multiple of $f$ with
degrees at most 20 and heights at most 1000. 
Using Algorithm~\ref{alg:boundednc}, these are respectively
\begin{align*}
\hat{h} =&\ {x}^{12}+259{x}^{6}+64 \\
\tilde{h} =&\ {x}^{11}-3{x}^{10}+12{x}^{8}-9{x}^{7}+10{x}^{6}-4{x}^{5}+9{x}^{4}
+3{x}^{3}+8.
\end{align*}

Since the sparsity of $\hat{h}$ is less than half that of $\tilde{h}$,
a sparsest multiple is 
\begin{align*}
  h =&\ ({x}^{12}+259{x}^{6}+64)\cdot(x^{\lcm(6,10)}-1) \\
 =&\ {x}^{42}+259{x}^{36}+64{x}^{30}-{x}^{12}-259{x}^{6}-64.
\end{align*}


\section{Sparse multiples over $\Fq$}
\label{sec:Fq-gen}

We prove that for any constant $t$, finding the minimal
degree $t$-sparse multiple of an $f \in \Fq[x]$ is harder than
finding orders of elements in $\Fqe$. 
Order finding is reducible to integer factorization and to discrete logarithm, 
but reductions in the
other direction are not known for finite fields \citep{AdlMcC94}.
However, at least for prime fields and assuming the
Extended Riemann Hypothesis, 
a fast algorithm for order finding in finite fields
would give an efficient procedure for computing primitive elements
\citep{Wan59,Sho92}.  The latter problem
is regarded as ``one of the most important
unsolved and notoriously hard problems in the computational theory of
finite fields'' \citep{vzGShp99}.

Formal problem definitions are as follows:
\begin{description}
\item[\tSpMulFqn:] Given a polynomial $f\in\Fq[x]$ and an integer
  $n\in\NN$,
  determine if there
  exists a (nonzero) 2-sparse multiple $h\in\Fq[x]$ of $f$ 
  with $\deg h \leq n$.
\item[\OrderFqe:] Given an element $a \in \Fqe^*$ and an integer $n<q^e$,
  determine if there exists a positive integer $m\leq n$ such that $a^m=1$.
\end{description}
The problem \OrderFqe is well-studied (see for instance
\cite{Meijer96}), and has been used as a primitive in several
cryptographic schemes. Note that an algorithm to solve \OrderFqe will
allow us to determine the \emph{multiplicative order} of any $a\in\Fqe^*$
(the smallest nonzero $m$ such that $a^m=1$) with essentially the
same cost (up to a factor of $O(e\log q)$) by using binary search.

The reduction from \OrderFqe to \tSpMulFqn works as follows: Given an
instance of \OrderFqe, we first check if the order $o_a$ of $a$ is
less than $t$ by brute-force. Otherwise, we construct the minimal
polynomial $g_{a^i}$ (over $\Fq$) for each $a^0,a^1,a^2,\ldots,a^{t-1}$. We only
keep distinct $g_{a_i}$, and call the product of these distinct
polynomials $f_{a,t}$. We then run the \tSpMulFqn subroutine to search
for the existence of a degree $n$, $t$-sparse multiple of the
polynomial $f_{a,t}$.

\begin{theorem} \label{theorem:hardnessoftsparse} Let $a \in \Fq$ be
  an element of order at least $t$. Then the least degree $t$-sparse
  multiple of $f_{a,t}$ is $x^{o_a}-1$ where $o_a$ is the order of
  $a$. \end{theorem}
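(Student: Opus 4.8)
The plan is to prove two halves. First, $x^{o_a}-1$ is itself a $t$-sparse multiple of $f_{a,t}$, so the minimal degree is at most $o_a$. Second, every nonzero $t$-sparse multiple of $f_{a,t}$ has degree at least $o_a$, with equality forcing the multiple to be a scalar multiple of $x^{o_a}-1$; together these identify the least-degree $t$-sparse multiple of $f_{a,t}$ as $x^{o_a}-1$.

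\textbf{The upper bound.} Since $a$ has an order it is nonzero, and $a^{o_a}=1$, so every power $a^i$ with $0\le i<t$ is a root of $x^{o_a}-1\in\Fq[x]$; hence its minimal polynomial $g_{a^i}$ over $\Fq$ divides $x^{o_a}-1$. As $f_{a,t}$ is the product of the \emph{distinct} irreducibles among $g_{a^0},\dots,g_{a^{t-1}}$, it also divides $x^{o_a}-1$, which is $2$-sparse and therefore $t$-sparse for $t\ge 2$.

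\textbf{The degree lower bound.} Let $h\ne 0$ be a $t$-sparse multiple of $f_{a,t}$. Because $a\ne 0$, no $g_{a^i}$ is divisible by $x$, hence $x\nmid f_{a,t}$; removing the largest power of $x$ from $h$ preserves its sparsity and does not increase its degree, so we may assume $h$ is non-original and write $h=c_1+c_2x^{d_2}+\cdots+c_sx^{d_s}$ with $s\le t$, $0=d_1<d_2<\cdots<d_s$, and all $c_j\ne 0$. Every $g_{a^i}$ divides $f_{a,t}$ and hence $h$, so $h(a^i)=0$ for $0\le i\le t-1$. Setting $\beta_j=a^{d_j}$, this says $(c_1,\dots,c_s)$ is a nonzero kernel vector of the $t\times s$ matrix with $(i,j)$-entry $\beta_j^{\,i}$; its first $s$ rows form the transpose of the Vandermonde matrix on the nodes $\beta_1,\dots,\beta_s$, which is nonsingular when these nodes are pairwise distinct. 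They therefore cannot all be distinct, so $\beta_j=\beta_k$ for some $j<k$, i.e. $a^{d_k-d_j}=1$, giving $o_a\mid(d_k-d_j)$ and hence $\deg h=d_s\ge d_k\ge d_k-d_j\ge o_a$.

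\textbf{The equality case.} If moreover $\deg h=o_a$, then $o_a\mid(d_k-d_j)$ with $0\le d_j<d_k\le o_a$ forces $d_j=0$, $d_k=o_a$, hence $j=1$, $k=s$ and $\beta_1=\beta_s=1$. Any further coincidence $\beta_{j'}=\beta_{k'}$ ($j'<k'$, $(j',k')\ne(1,s)$) would make $o_a$ divide $d_{k'}-d_{j'}$, a nonzero integer strictly less than $o_a$ by the ordering of the $d$'s --- impossible. Thus $1=\beta_1,\beta_2,\dots,\beta_{s-1}$ are pairwise distinct. Merging the two equal columns $1$ and $s$ turns the conditions $h(a^i)=0$ into $(c_1+c_s)+\sum_{j=2}^{s-1}c_j\beta_j^{\,i}=0$, so $(c_1+c_s,c_2,\dots,c_{s-1})$ lies in the kernel of the nonsingular transposed Vandermonde matrix on the $s-1$ distinct nodes $1,\beta_2,\dots,\beta_{s-1}$, forcing that vector to be zero. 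If $s\ge 3$ this gives $c_2=0$, contradicting $c_2\ne 0$; hence $s=2$, and then $h(1)=c_1+c_2=0$ together with $d_2=o_a$ yields $h=-c_1(x^{o_a}-1)$, so the least-degree $t$-sparse multiple is the monic polynomial $x^{o_a}-1$.

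\textbf{Expected obstacle.} The only delicate point is the bookkeeping in the equality case: upgrading ``some pair of nodes $\beta_j$ coincides'' to ``exactly the extreme pair coincides and no other pair does,'' and then performing the column merge so that what remains is a bona fide distinct-node Vandermonde system to which nonsingularity applies. The engine throughout is the equivalence $a^{d_k-d_j}=1\iff o_a\mid(d_k-d_j)$ played against the strict ordering $0=d_1<\cdots<d_s$; everything else is standard Vandermonde nonsingularity applied to the $\le t$ vanishing conditions supplied by the factors $g_{a^0},\dots,g_{a^{t-1}}$ of $f_{a,t}$ (the hypothesis that $a$ has order at least $t$ is what makes those $t$ powers, and hence the construction of $f_{a,t}$, genuinely meaningful).
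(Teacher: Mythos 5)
Your proof is correct and follows essentially the same route as the paper: both pass to the root-powers linear system from Section~\ref{sec:la} and invoke Vandermonde nonsingularity on the nodes $a^{d_j}$ to rule out any $t$-sparse multiple of degree below $o_a$. The one thing you add is the careful equality-case analysis (column merging, pinning $s=2$), which pins down $x^{o_a}-1$ as the unique minimal-degree multiple up to scaling; the paper stops once it has exhibited the multiple and excluded lower degrees, leaving uniqueness implicit.
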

\begin{proof} It is easy to see that $x^{o_a}-1$ is a multiple of the given polynomial. We need to prove that it is actually the least-degree $t$-sparse multiple.

By equation~\eqref{eq:rootPowers} in Section 2, a degree $n$ multiple $h$ of $f_{a,t}$ corresponds to the following set of linear equations:
\begin{equation*}
\underbrace{
\left[\begin{array}{ccccc}
    1 & 1 & 1 & \cdots & 1\\
    1 & a & a^2 &\cdots & a^{n-1}\\
    1 & a^2 &a^4 &\cdots & a^{2n-2}\\
    \vdots & \vdots & \vdots & \vdots & \vdots\\
    1 & a^t& a^{2t} & \cdots & a^{tn-t}\\
\end{array}\right]}_{A(f_{a,t},n)}
\left[\begin{array}{c}
    h_0 \\ h_1 \\ [5mm] \vdots \\ h_{n-1}
 \end{array}\right]
= 0.
\end{equation*}

To prove that no $t$-sparse multiple $h$ of degree less than $o_a$ exists, it suffices to show that any $t$ columns of $A(f_{a,t},o_a-1)$ are linearly independent. Consider the $(t\times t)$-matrix corresponding to some choice of $t$ columns:
\begin{equation*}
B=\left[\begin{array}{cccc}
    1 & 1 & \cdots & 1\\
    a^{i_1} & a^{i_2} &\cdots & a^{i_t}\\
    \vdots & \vdots & \vdots & \vdots\\
    a^{ti_1} & a^{ti_2} &\cdots & a^{ti_t}\\
\end{array}\right].
\end{equation*}

This Vandermonde matrix $\mathbf{B}$ has determinant $\prod_{1\leq j < k \leq t} (a^{i_k}-a^{i_j})$ which is nonzero since $i_j < i_k < o_a$ and hence $a^{i_j} \neq a^{i_k}$. Thus the least-degree $t$-sparse multiple of the given polynomial is $x^{o_a}-1$.
\QED
\end{proof}

Of cryptographic interest is the fact that the order-finding polynomials
in the reduction above are sufficiently dense in $\Fq[x]$ that the
reduction also holds in the average case. That is, an algorithm for
sparsest multiples that is polynomial-time on average would imply an
average case polynomial-time algorithm for order finding in $\Fqd$.

Next we give a probabilistic algorithm for finding the
least degree binomial multiple for polynomials $f \in \Fq$. This
algorithm makes repeated calls to an \OrderFqe (defined in the
previous section) subroutine. Combined with the hardness result of the
previous section (with $t$=2), this characterizes the
complexity of finding least-degree binomial multiples in terms of the
complexity of \OrderFqe, upto randomization.

Algorithm~\ref{alg:binomialFiniteFields} solves the binomial multiple
problem in $\Fq$ by making calls to an \OrderFqe procedure that
computes the order of elements in extension fields of $\Fq$. Thus
\BinSpMulFq reduces to \OrderFqe in probabilistic polynomial time.
Construction of an irreducible polynomial (required for finite field
arithmetic) as well as the factoring step in the algorithm make it
probabilistic.

\begin{algorithm}[htbp]
\caption{Least degree binomial multiple of $f$ over $\Fq$}
\label{alg:binomialFiniteFields}
\KwIn{$f\in\Fq[x]$}
\KwOut{The least degree binomial multiple $h$ of $f$}

Factor $f=x^bf_1^{e_1} \cdot f_2^{e_2} \cdot f_\ell^{e_{\ell}}$ for irreducible $f_1,\ldots,f_\ell\in\Fq[x]$, and set $d_i \gets \deg f_i$
\label{bff:factor}

\For{$i = 1,2,\ldots,\ell$} {
  $a_i \gets x \in\Fq[x]/(f_i)$, a root of $f_i$ in the extension
  $\mathbb{F}_{q^{d_i}}$\;
  Calculate $o_i$, the order of $a_i$ in  $\Fq[x]/(f_i)$.
}

$\displaystyle n_1 \gets \lcm(\{o_i/\gcd(o_i,q-1)\})$ for all $i$ such that $d_i > 1$\label{bff:n1}\;
$n_2 \gets \lcm(\{order(a_i/a_j)\})$ over all $1 \leq i,j \leq u$\label{bff:n2}\;
$n \gets \lcm(n_1,n_2)$\label{bff:n}\;
$\tilde{h} \gets (x^n-a_1^n)$\;
$e \gets \lceil \log_p{\max{e_i}}\rceil$, the smallest $e$ such that $p^e \geq e_i$ for all $i$\;
\KwRet{$h=x^b(x^n-a_1^n)^{p^e}$}

\end{algorithm}

\begin{theorem}\label{thm:binfqalg}
  Given $f\in\Fq[x]$ of degree $d$,
  Algorithm~\ref{alg:binomialFiniteFields} correctly computes a
  binomial multiple $h$ of $f$ with least degree.  It uses at most
  $d^2$ calls to a routine for order finding in $\Fqe$, for various
  $e\leq d$, and $d^{O(1)}$ other operations in $\Fq$.  It is
  probabilistic of the Las Vegas type.
\end{theorem}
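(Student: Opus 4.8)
The plan is to verify three things in turn: (i) that the polynomial $h$ returned by Algorithm~\ref{alg:binomialFiniteFields} is genuinely a binomial multiple of $f$; (ii) that no binomial multiple of $f$ has smaller degree; and (iii) the complexity and Las Vegas claims. Throughout I would work with the factorization $f = x^b f_1^{e_1}\cdots f_\ell^{e_\ell}$ computed in Step~\ref{bff:factor}, reducing everything to the behaviour of the roots $a_i$ in $\Fqd$. The key structural fact is that a binomial $x^N - c$ is divisible by $f_i^{e_i}$ if and only if, writing $p^e$ for the $p$-part of $N$ (i.e. $N = p^e N'$ with $p\nmid N'$), we have $p^e \geq e_i$ and $a_i^{N'} = c^{1/p^e}$ — more precisely, since the Frobenius $c\mapsto c^p$ is a bijection on $\Fq$, divisibility of $x^N-c$ by $f_i^{e_i}$ is equivalent to $f_i \divs (x^{N'} - c')$ for the unique $c'$ with $c'^{p^e} = c$, together with $e \geq \lceil \log_p e_i\rceil$. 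This reduces the whole problem, modulo the separable/inseparable split, to finding the least $N'$ (coprime to $p$) with $a_i^{N'}$ equal to a common element of $\Fq$ across all $i$.

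For correctness (i), I would argue that $f_i \divs (x^{N'} - a_1^{N'})$ holds exactly when $a_i^{N'} = a_1^{N'}$ (and similarly all $a_j^{N'}$ agree), i.e. $(a_i/a_j)^{N'} = 1$ for all $i,j$, which happens iff $\operatorname{order}(a_i/a_j) \divs N'$ — this is the role of $n_2$ in Step~\ref{bff:n2}. Separately, for the common value $a_1^{N'}$ to lie in $\Fq$ itself (so that $x^{N'}-a_1^{N'} \in \Fq[x]$), one needs $a_i^{N'}$ fixed by the Frobenius of $\Fqd/\Fq$, i.e. $a_i^{N'(q-1)} $ handled correctly; the quantity $o_i/\gcd(o_i,q-1)$ in Step~\ref{bff:n1} is exactly the least multiplier forcing $a_i^{N'}\in\Fq$, so $n_1 \divs N'$ captures this. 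Hence $n = \lcm(n_1,n_2)$ is the least $p$-coprime exponent that works, and multiplying by $p^e$ to absorb the repeated factors gives the returned $h$; one checks $h$ is indeed a multiple of $f$ and has the claimed degree.

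For minimality (ii), given any binomial multiple $x^M - c$ of $f$, split $M = p^{e'} M'$ with $p\nmid M'$. Since each $f_i^{e_i} \divs (x^M - c)$, Lemma~\ref{lem:deriv} (or a direct Frobenius argument) forces $p^{e'} \geq e_i$, hence $e' \geq e$; and the separable part forces $f_i \divs (x^{M'} - c')$, whence $a_i^{M'} = c' \in \Fq$ for all $i$, so by the analysis above $n_1 \divs M'$ and $n_2 \divs M'$, giving $n \divs M'$ and therefore $p^e n \divs M$ — so the returned degree is minimal. For (iii), the order computations are $\ell \leq d$ calls for the $o_i$ plus at most $\binom{\ell}{2} < d^2$ calls for the $\operatorname{order}(a_i/a_j)$, each in an extension of degree $\leq d$; factoring over $\Fq$ and constructing the needed irreducibles for field arithmetic are the only randomized steps, both Las Vegas, and everything else is $d^{O(1)}$ operations in $\Fq$.

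The main obstacle I expect is getting the interplay between $n_1$ and $n_2$ exactly right — in particular proving that $\lcm(n_1,n_2)$, and not something smaller, is the precise least $p$-coprime exponent: one must rule out the possibility that $a_i^{N'}$ and $a_j^{N'}$ land in $\Fq$ and agree "by accident" for some $N'$ not divisible by $n_1$ or $n_2$. This requires carefully tracking the order of $a_i/a_j$ versus the individual orders $o_i$ and their relation to $q-1$, and handling the edge cases $d_i = 1$ (where $a_i$ already lies in $\Fq$ and contributes nothing to $n_1$) and the case where the common value $c'$ could be zero or where some $a_i = a_j$.
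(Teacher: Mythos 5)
Your proposal follows essentially the same approach as the paper's proof: factor $f$, reduce to the squarefree part $\check{f}$ via the Frobenius decomposition $M = p^{e'} M'$ with $p \nmid M'$, and characterize $\check{f} \mid (x^{M'} - c)$ through the two divisibility conditions $n_1 \mid M'$ (forcing $a_i^{M'} \in \Fq$ for each $d_i > 1$) and $n_2 \mid M'$ (forcing the $a_i^{M'}$ to agree). The paper treats the irreducible case explicitly first and then the general reducible case, whereas you handle them uniformly from the $p$-power split at the outset; the substance is the same.

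Two remarks on the obstacles you flag. The worry that the $a_i^{M'}$ could land in $\Fq$ and agree ``by accident'' for some $M'$ not divisible by $n = \lcm(n_1,n_2)$ does not arise, because both constraints are biconditionals, not merely necessary conditions: $a_i^{M'} \in \Fq$ iff $a_i^{M'(q-1)} = 1$ iff $o_i \mid M'(q-1)$, and since $o_i/\gcd(o_i,q-1)$ and $(q-1)/\gcd(o_i,q-1)$ are coprime this is equivalent to $o_i/\gcd(o_i,q-1) \mid M'$; likewise $a_i^{M'} = a_j^{M'}$ iff $\operatorname{order}(a_i/a_j) \mid M'$. So $\lcm(n_1,n_2)$ is exactly the least admissible $p$-coprime exponent, and the edge case $d_i = 1$ simply contributes nothing to $n_1$ as you note. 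Second, for the repeated-factor part, you should not lean on Lemma~\ref{lem:deriv}: it is stated over $\QQ[x]$, and it is in any case not the right tool here. The argument the paper uses, and that you should prefer, is direct: write $x^M - c = (x^{M'} - c')^{p^{e'}}$ where $c'$ is the unique element of $\Fq$ with $c'^{p^{e'}} = c$; since $p \nmid M'$ and $c' \neq 0$, the polynomial $x^{M'} - c'$ is squarefree, so every irreducible factor of $x^M - c$ occurs with multiplicity exactly $p^{e'}$, and $f_i^{e_i} \mid (x^M - c)$ therefore forces $p^{e'} \geq e_i$ directly.
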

\begin{proof}
As a first step, the algorithm factors the given polynomial into
irreducible factors. Efficient probabilistic algorithms for factoring
polynomials over finite fields are well-known (\cite{MCA2003}).

First, suppose the input polynomial $f$ is irreducible, i.e. $\ell=e_1=1$
in Step~\ref{bff:factor}.
Then it has the form
$f=(x-a)(x-a^q)\cdots(x-a^{q^{d-1}})$ for some $a \in \Fqd$,
where $d=\deg f$. If $f=(x-a)$, the least-degree binomial multiple is $f$ itself. Therefore, assume that $d>1$. Let the least-degree binomial multiple (in $\Fq[x]$) be $x^n-\beta$

Since both $a$ and $a^q$ are roots of $(x^n-\beta)$, we have that $a^n=a^{nq}$ and $a^{n(q-1)}=1$. Thus, the order $o_a$ of $a$ divides $n(q-1)$. The minimal $n$ for which $o_a \mid n(q-1)$ is $n=\frac{o_a}{\gcd(o_a,q-1)}$. Since this $n$ ensures that $a^n=a^{nq}$, it also simultaneously ensures that each $a^{q^i}$ is also a root. 

Notice that this $n$ equals $n_1$ computed on Step~\ref{bff:n1}, and
$n_2$ computed on Step~\ref{bff:n2} will equal 1, so the algorithm is correct in
this case.

Now suppose the input polynomial $f$ is reducible.
The factorization step factors $f$ into irreducible factors $f=f_1^{e_1}f_2^{e_2}\cdots f_{\ell}^{e_{\ell}}$. Let $\check{f}=f_1f_2\cdots f_{\ell}$ denote the squarefree part of $f$.

Being irreducible, each $f_i$ has the form $f_i(x)=
(x-a_i)(x-a_i^q)\cdots(x-a_i^{q^{d_i-1}})$ for some $a_i \in \Fqd$,
and $d_i=\deg f_i$. We make two observations:

\begin{itemize}
\item[$\bullet$] If $\check{f}(x)\mid x^n-a$ for some $a \in \Fq$, we have that $a_i^n=a_j^n$ for all $1 \leq i,j \leq \ell$, and hence that $(\frac{a_i}{a_j})^n=1$. Thus $order(\frac{a_i}{a_j}) \mid  n$. The least integer satisfying these constraints
is $n_2$ computed on Step~\ref{bff:n2}.

\item[$\bullet$] As before for the case when the input polynomial is irreducible and of degree more than one: $d_i > 1$ implies that $\frac{o_i}{\gcd(o_i,q-1)} \mid n$ for $o_i$ the order of $a_i$.
The least integer satisfying these constraints is $n_1$ computed on Step~\ref{bff:n1}.
\end{itemize}

The minimal $n$ is the least common multiple of all the divisors obtained from the above two types of constraints, which is exactly the value computed on
Step~\ref{bff:n}. 
The minimal degree binomial multiple of $\check{f}$ is $x^n-a^n_1$. 

It is easily seen that for the smallest $e$ such that $p^e \geq e_i$,
$(x^n-a^n)^{p^e}$ is a binomial multiple of $f$.  We now show that it
is actually the minimal degree binomial multiple.  Specifically, let
$e$ be the smallest non-negative integer such that $p^e \geq
\max{e_i}$; we show that the minimal degree binomial multiple of $f$
is $(x^n-a^n_i)^{p^e}$ for $n$ obtained as above.  

Let the minimal degree binomial multiple of $f$ be
$x^{\hat{n}}-b$. Factor $\hat{n}$ as $\hat{n}=\check{n}p^c$ for
maximal $c$, and write $(x^{\hat{n}}-b)$ as
$(x^{\check{n}}-b^{1/p^c})^{p^c}$. The squarefree part of $f$,
$\check{f}$ divides $(x^{\check{n}}-b^{1/p^c})$, and hence (by
constraints on and minimality of $n$) $(x^n-a_1^n)\mid
(x^{\check{n}}-b^{1/p^c})$. Thus $\check{n} \geq n$.

Since $c$ is chosen maximally, $p$ does not divide $\check{n}$, and
hence $x^{\check{n}}-b^{1/p^c}$ is squarefree. Using this and the fact
that $f$ divides $(x^{\check{n}}-b^{1/p^c})^{p^c}$, it is seen that
$p^c \geq e_i$ holds for all $e_i$, and hence $p^c \geq p^e$. This,
along with $\check{n} \geq n$, completes the proof that
$(x^n-a^n_i)^{p^e}$ is the minimal degree binomial multiple of $f$,
which completes the proof of the theorem.
\QED
\end{proof}


\section{Conclusion and Open Problems} \label{sec:conclusion}

To summarize, we have presented an efficient algorithm to
compute the least-degree binomial multiple of any rational polynomial.
We can also compute $t$-sparse multiples of rational polynomials that do not
have repeated cyclotomic factors, for any fixed $t$, and given a bound on
the height of the multiple.

We have also shown that, even for fixed $t$,
finding a $t$-sparse multiple of a degree-$d$
polynomial over
$\Fq[x]$ is at least as hard as finding the orders of elements in $\Fqd$.
In the $t=2$ case, there is also a probabilistic reduction in the other direction, so
that computing binomial multiples of degree-$d$ polynomials over $\Fq[x]$ probabilisticly reduces to order
finding in $\Fqd$.

Several important questions remain unanswered. 
Although we have an unconditional algorithm to compute binomial multiples
of rational polynomials, computing $t$-sparse multiples for fixed $t\geq 3$
requires an a priori height bound on the output as well as the requirement
that the input contains no repeated cyclotomic factors. Removing these
restrictions is desirable (though not necessarily possible).

Regarding lower bounds, we know that computing $t$-sparse 
multiples over finite fields is at least
as hard as order finding, a result which is tight (up to randomization) for $t=2$, but for larger
$t$ we believe the problem is even harder. 
Specifically, we suspect that computing $t$-sparse
multiples is \NP-complete over both $\QQ$ and $\Fq$, 
when $t$ is a parameter in the input.


\subsection*{Acknowledgments}
The authors would like to thank John May, Arne Storjohann, and the anonymous referees
from ISAAC 2010 for their careful reading and useful observations on
earlier versions of this work.

\def\bibsep{4pt}

\bibliography{sparsemult}

\newcommand{\Gathen}{\relax}
\begin{thebibliography}{25}
\providecommand{\natexlab}[1]{#1}
\providecommand{\url}[1]{\texttt{#1}}
\expandafter\ifx\csname urlstyle\endcsname\relax
  \providecommand{\doi}[1]{doi: #1}\else
  \providecommand{\doi}{doi: \begingroup \urlstyle{rm}\Url}\fi

\bibitem[Adleman and McCurley(1994)]{AdlMcC94}
L.~M. Adleman and K.~S. McCurley.
\newblock Open problems in number-theoretic complexity. {II}.
\newblock In \emph{Algorithmic number theory ({I}thaca, {NY}, 1994)}, volume
  877 of \emph{Lecture Notes in Comput. Sci.}, pages 291--322. Springer,
  Berlin, 1994.

\bibitem[Ajtai et~al.(2001)Ajtai, Kumar, and Sivakumar]{AjtaiKS2001}
Mikl{\'o}s Ajtai, Ravi Kumar, and D.~Sivakumar.
\newblock A sieve algorithm for the shortest lattice vector problem.
\newblock In \emph{Symp. Theory of Computing (STOC'01)}, pages 601--610, 2001.

\bibitem[Aumasson et~al.(2007)Aumasson, Finiasz, Meier, and Vaudenay]{TCHo07}
J.-P. Aumasson, M.~Finiasz, W.~Meier, and S.~Vaudenay.
\newblock {TCHo}: a hardware-oriented trapdoor cipher.
\newblock In \emph{ACISP'07: Proceedings of the 12th Australasian conference on
  Information security and privacy}, pages 184--199, Berlin, Heidelberg, 2007.
  Springer-Verlag.
\newblock ISBN 978-3-540-73457-4.

\bibitem[Berlekamp et~al.(1978)Berlekamp, McEliece, and van Tilborg]{BerMcE78}
E.~R. Berlekamp, R.~J. McEliece, and H.~C. van Tilborg.
\newblock On the inherent intractability of certain coding problems.
\newblock \emph{IEEE Transactions on Information Theory}, 24\penalty0 (3),
  1978.

\bibitem[Brent and Zimmermann(2003)]{BreZim03}
R.~P. Brent and P.~Zimmermann.
\newblock Algorithms for finding almost irreducible and almost primitive
  trinomials.
\newblock In \emph{Primes and Misdemeanours: Lectures in Honour of the Sixtieth
  Birthday of Hugh Cowie Williams, Fields Institute}, page 212, 2003.

\bibitem[Didier and Laigle-Chapuy(2007)]{DidLai07}
F.~Didier and Y.~Laigle-Chapuy.
\newblock Finding low-weight polynomial multiples using discrete logarithms.
\newblock In \emph{Proc. IEEE International Symposium on Information Theory
  (ISIT 2007)}, pages 1036--1040, 2007.

\bibitem[Egner and Minkwitz(1998)]{EgnerMinkwitz98}
Sebastian Egner and Torsten Minkwitz.
\newblock Sparsification of rectangular matrices.
\newblock \emph{J. Symb. Comput.}, 26\penalty0 (2):\penalty0 135--149, 1998.

\bibitem[El~Aimani and \Gathen{von zur Gathen}(2007)]{AG2007}
Laila El~Aimani and Joachim \Gathen{von zur Gathen}.
\newblock Finding low weight polynomial multiples using lattices.
\newblock Cryptology ePrint Archive, Report 2007/423, 2007.
\newblock \url{http://eprint.iacr.org/2007/423.pdf}.

\bibitem[Emiris and Kotsireas(2005)]{EmiKot05}
I.~Z. Emiris and I.~S. Kotsireas.
\newblock Implicitization exploiting sparseness.
\newblock In \emph{Geometric and algorithmic aspects of computer-aided design
  and manufacturing}, volume~67 of \emph{DIMACS Ser. Discrete Math. Theoret.
  Comput. Sci.}, pages 281--–297, 2005.

\bibitem[\Gathen{von zur Gathen} and Gerhard(2003)]{MCA2003}
Joachim \Gathen{von zur Gathen} and J.~Gerhard.
\newblock \emph{Modern Computer Algebra}, chapter~14, pages 367--380.
\newblock Cambridge University Press, New York, NY, USA, 2003.
\newblock ISBN 0521826462.

\bibitem[\Gathen{von zur Gathen} and Shparlinski(1999)]{vzGShp99}
Joachim \Gathen{von zur Gathen} and I.~Shparlinski.
\newblock Constructing elements of large order in finite fields.
\newblock In M.~Fossorier, H.~Imai, S.~Lin, and A.~Poli, editors, \emph{Applied
  Algebra, Algebraic Algorithms and Error-Correcting Codes}, volume 1719 of
  \emph{Lecture Notes in Computer Science}, pages 730--730. Springer Berlin /
  Heidelberg, 1999.

\bibitem[Giesbrecht et~al.(2010)Giesbrecht, Roche, and Tilak]{GieRocTil:2010}
Mark Giesbrecht, Daniel~S. Roche, and Hrushikesh Tilak.
\newblock Computing sparse multiples of polynomials.
\newblock In \emph{Proc. 21st International Symposium on Algorithms and
  Computation (ISAAC 2010)}, pages 266--278, 2010.

\bibitem[Guruswami and Vardy(2005)]{GurVar05}
Venkatesan Guruswami and Alexander Vardy.
\newblock Maximum-likelihood decoding of reed-solomon codes is {NP}-hard.
\newblock In \emph{SODA '05: Proceedings of the sixteenth annual ACM-SIAM
  symposium on Discrete algorithms}, pages 470--478, 2005.

\bibitem[Lenstra et~al.(1982)Lenstra, Lenstra, and Lov{\'a}sz]{LLL82}
A.~K. Lenstra, H.~W. Lenstra, Jr., and L.~Lov{\'a}sz.
\newblock Factoring polynomials with rational coefficients.
\newblock \emph{Math. Ann.}, 261\penalty0 (4):\penalty0 515--534, 1982.

\bibitem[Lenstra(1999)]{Len99}
H.~W. Lenstra, Jr.
\newblock Finding small degree factors of lacunary polynomials.
\newblock In \emph{Number theory in progress, {V}ol. 1
  ({Z}akopane-{K}o\'scielisko, 1997)}, pages 267--276. de Gruyter, Berlin,
  1999.

\bibitem[Meijer(1996)]{Meijer96}
A.~R. Meijer.
\newblock Groups, factoring, and cryptography.
\newblock \emph{Math. Mag.}, 69\penalty0 (2):\penalty0 103--109, 1996.

\bibitem[Regev(2004)]{Reg04}
Oded Regev.
\newblock A simply exponential algorithm for svp (ajtai-kumar-sivakumar).
\newblock online:
  \url{http://www.cs.tau.ac.il/~odedr/teaching/lattices_fall_2004/}, 2004.
\newblock {S}cribe: Michael Khanevsky.

\bibitem[Risman(1976)]{Risman76}
Lawrence~J. Risman.
\newblock On the order and degree of solutions to pure equations.
\newblock \emph{Proc. Amer. Math. Soc.}, 55\penalty0 (2):\penalty0 261--266,
  1976.

\bibitem[Rosser and Schoenfeld(1962)]{RosSch62}
J.~B. Rosser and L.~Schoenfeld.
\newblock Approximate formulas for some functions of prime numbers.
\newblock \emph{Ill. J. Math.}, 6:\penalty0 64--94, 1962.

\bibitem[Sadjadpour et~al.(2001)Sadjadpour, Sloane, Salehi, and Nebe]{SadEA01}
H.R. Sadjadpour, N.J.A. Sloane, M.~Salehi, and G.~Nebe.
\newblock Interleaver design for turbo codes.
\newblock \emph{IEEE Journal on Selected Areas in Communications}, 19\penalty0
  (5):\penalty0 831 --837, 2001.

\bibitem[Shoup(1992)]{Sho92}
Victor Shoup.
\newblock Searching for primitive roots in finite fields.
\newblock \emph{Math. Comp.}, 58\penalty0 (197):\penalty0 369--380, 1992.
\newblock \doi{10.2307/2153041}.

\bibitem[Storjohann(2000)]{Sto00}
Arne Storjohann.
\newblock \emph{Algorithms for Matrix Canonical Forms}.
\newblock PhD thesis, Swiss Federal Institute of Technology Z{\"{u}}rich, 2000.

\bibitem[Tilak(2010)]{Tilak:2010}
Hrushikesh Tilak.
\newblock Computing sparse multiples of polynomials.
\newblock Master's thesis, University of Waterloo, 2010.

\bibitem[Vardy(1997)]{Vardy97}
Alexander Vardy.
\newblock The intractability of computing the minimum distance of a code.
\newblock \emph{IEEE Transactions on Information Theory}, 43\penalty0
  (6):\penalty0 1757--1766, 1997.

\bibitem[Wang(1959)]{Wan59}
Yuan Wang.
\newblock On the least primitive root of a prime.
\newblock \emph{Acta Math. Sinica}, 9:\penalty0 432--441, 1959.

\end{thebibliography}

\appendix
\section{Finding short $l_\infty$ vectors in lattices}
\label{app:aks}

In Section~\ref{sec:aks}, we presented Algorithm~\ref{alg:aksMod} 
to find the shortest
$l_\infty$ vector in the image of an integer matrix. 
This appendix
is devoted to proving the correctness of this algorithm,
culminating in the proof of Theorem~\ref{theorem:aksWorks!}.
Again, the results here are due to the presentation of
\cite{AjtaiKS2001} by \cite{Reg04}, with modifications to
accommodate the infinity norm.

For any lattice $\L$, define $s(\L)=\min_{v\in\L}\norm{v}_2$ to be the
least $l_2$ norm of any vector in $\L$.  If $\L$ satisfies $2\leq
s(\L) < 3$, and $B$ is a basis for $\L$,
then we will show that the main for loop in
Steps~\ref{step:aksforstart}--\ref{step:aksforend} of 
Algorithm~\ref{alg:aksMod} finds a vector in $\L$
with minimal $l_\infty$ norm, with high probability.  
The for loop on line~\ref{step:aksforstart} adapts this to work for any
lattice by scaling.
More precisely, given
a lattice $\L$, we first run the algorithm of \cite{LLL82} to get an
approximation $\lambda$ for the shortest $l_2$ vector in $\L$
satisfying $s(\L) \leq \norm{\lambda}_2\leq 2^n s(\L)$. 
For each $k$ from $1$ to $2n$, we then run the for loop with basis
$B_k$ for the lattice $(1.5^k/\norm{\lambda}_2)\cdot \L$.
For some $k$ in this range, $2 \leq s(B_k) < 3$ must hold, and we will
show that for this $k$, the vector $v_k$ set on
Step~\ref{step:aksforend} is the $l_\infty$ shortest vector in the image
of $B_k$ with high probability. 
For every $k$, $v_k$ is a vector in the
image of $B_k$, and hence it suffices to output the shortest $l_\infty$ vector
among $\{(\norm{\lambda}_2/1.5^k)v_k\}$ on Step~\ref{step:aksret}.

We will now prove that the vector $v_k$ set on Step~\ref{step:aksforend}
is with high probability the shortest $l_\infty$ vector in the image of
$B$, when $B$ is a basis for
a lattice $\L$ such that $2 \leq s(\L) < 3$.

To find the shortest $l_\infty$ vector in a lattice, it suffices to
consider all lattice vectors of $l_2$ norm at most $\sqrt{n}$ times
the norm of the shortest $l_2$ vector. Algorithm~\ref{alg:aksMod}
achieves this by running the main body of the loop with different
values of $\gamma$. In a particular iteration of the outermost loop,
with high probability, the algorithm encounters all lattice vectors
$v$ with $l_2$ norm satisfying $(2/3)\cdot\norm{v}_2 \leq \gamma <
\norm{v}_2$. Call all such $v$ \emph{interesting}. By iterating over a
suitable range of $\gamma$, it returns the shortest $l_\infty$ vector
among all the \emph{interesting} vectors, which with high probability
is the shortest $l_\infty$ vector in the lattice.

For a particular iteration of the loop (with a fixed $\gamma$), the
algorithm uniformly samples a large number of vectors from an
appropriately sized ball. In fact, the algorithm works even if an
almost-uniform sampling over rational vectors with bit lengths bounded
by $(\log \norm{B} + n)^{O(1)}$ is performed. This is because the size
of sufficiently small lattice vectors is only a polynomial in the size
of the basis vectors. For the rest of this subsection, ``arithmetic
operations'' means operations with rational numbers of this size.

After sampling, the algorithm performs a series of \emph{sieving}
steps to ensure that at the end of these steps the algorithm is left
with lattice vectors of sufficiently small $l_2$ norm. Using a
probabilistic argument, it is argued that all \emph{interesting}
vectors are obtained.

The following lemma proves the correctness of the sieving steps. These
correspond to Steps \ref{step:sievingStart} to \ref{step:sievingEnd}
of the algorithm. At the end of this sieving, the algorithm produces a
set $J$ of size at most $5^n$.

\begin{lemma}
  \label{lemma:sieveLemma} 
  Given $S\subseteq\{1,\ldots,m\}$ such that for all $i\in S$,
  $y_i\in\RR^n$ and $\norm{y_i}_2 \leq r$, Steps
  \ref{step:sievingStart}--\ref{step:sievingEnd} efficiently compute
  the following: a subset $J \subseteq S$ of size at most $5^n$ and a
  mapping $\eta: S\setminus J \rightarrow J$ such that
  $\norm{y_i-y_{\eta_i}}_2 \leq r/2$.
\end{lemma}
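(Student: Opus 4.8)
The plan is to analyze the inner \texttt{while} loop at Step~\ref{step:sievingStart} and show that it terminates with the desired properties after $O(\log r_0)$ iterations. First I would observe the loop invariant: at the start of each iteration every $i\in S$ has $\norm{y_i}_2\le r$. The construction of $J$ greedily picks ``representatives'': an index $i$ is added to $J$ exactly when no previously-chosen $j\in J$ is within $l_2$-distance $r/2$ of $y_i$; otherwise $i$ is removed from $S$ (placed in the complement) and assigned a pointer $\eta_i=j$ to such a representative. By construction, $\norm{y_i-y_{\eta_i}}_2\le r/2$ for every $i\in S\setminus J$, which is exactly the mapping property claimed. The size bound is a packing argument: any two distinct $j,j'\in J$ satisfy $\norm{y_j-y_{j'}}_2 > r/2$, so the open balls of radius $r/4$ around the points $\{y_j : j\in J\}$ are pairwise disjoint; since every $y_j$ has norm at most $r$, all these balls lie inside $\ball{0}{r+r/4}=\ball{0}{5r/4}$. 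Comparing volumes, $\abs{J}\le (5r/4)^n/(r/4)^n = 5^n$, giving the bound $\abs{J}\le 5^n$ that is needed.

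Next I would verify that the reassignment $y_i \gets y_i + x_{\eta_i} - y_{\eta_i}$ followed by $r\gets r/2 + \gamma$ restores the invariant for the next pass. For $i\in S\setminus J$ (the indices surviving into the next iteration), $\norm{y_i - y_{\eta_i}}_2\le r/2$ and $\norm{x_{\eta_i}}_2\le\gamma$ (the $x$'s were sampled from $\ball{0}{\gamma}$), so by the triangle inequality the new value has norm at most $r/2+\gamma$, matching the updated $r$. Crucially, the quantity $x_i-y_i$ is unchanged by this update when we think of it as a lattice vector (the offset $x_{\eta_i}-y_{\eta_i}$ lies in $\L$ by Lemma~\ref{lemma:xMinusyIsInLattice}), which is why the sieving makes the $y_i$ smaller while preserving the associated lattice points; this is recorded for use in the subsequent analysis but is not strictly needed for the statement of this lemma.

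Finally, termination: the recurrence $r\mapsto r/2+\gamma$ has fixed point $2\gamma$, and starting from $r_0$ the value of $r$ after $k$ steps is $2\gamma + (r_0-2\gamma)/2^k$, so after $O(\log(r_0/\gamma)) = O(\log r_0)$ iterations we have $r\le 2\gamma+1$ and the loop exits; note the loop condition is exactly $r>2\gamma+1$. Each iteration does $O(\abs{S}^2)$ distance comparisons on vectors of the stated bit-length, so the whole sieving phase is efficient (polynomial in $m$ and the bit-size). I expect the main point requiring care is the volume/packing argument for $\abs{J}\le 5^n$ — getting the containing and excluded radii right ($5r/4$ and $r/4$) so that the ratio is exactly $5^n$ — together with making sure the invariant is stated and maintained correctly across the $r$-update; the termination count and the cost bound are then routine.
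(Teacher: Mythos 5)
Your proof is correct and matches the paper's argument: the mapping property follows directly from the greedy construction, and the bound $\abs{J}\le 5^n$ comes from the same packing argument (disjoint balls of radius $r/4$ around the chosen $y_j$, all contained in a ball of radius $5r/4$, so a volume ratio of $5^n$). The additional discussion of the $r$-update, the invariant across iterations, and termination is accurate but goes beyond the scope of this lemma, which the paper treats as covering a single sieving pass; those points are handled separately in the subsequent lemmas.
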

\begin{proof} 
  Initially the set $J$ is empty. The algorithm iterates over the
  points $y_i$ with $i\in S$, adding $i$ to $J$ only if $min_{j \in
    J}(\norm{y_j-y_i}_2)>r/2$. For $i \notin J$, it sets $\eta_i$ to a
  $j\in J$ such that $\norm{y_j-y_i}_2\leq r/2$.

  It is clear that this procedure runs in polynomial time. To see that
  the size of $J$ is at most $5^n$, note that all the balls of radius
  $R/4$ and centered at $y_j$ for $j \in J$ are disjoint by
  construction of $J$. Also, these balls are contained in a ball of
  radius $R+R/4$ since $\norm{y_i}_2 \leq R$. Thus the total number of
  disjoint balls, and hence the size of $J$, can be bounded above by
  comparing the volumes: $|J| \leq (  (5R/4)/(R/4))^n = 5^n$.
  \QED
\end{proof}

The algorithm views every sampled vector $x_i$ as a perturbation of a
lattice vector $x_i-y_i$ for some $y_i$. The idea is the following:
initially $y_i$ is calculated so that $x_i$ is a perturbation of some
large lattice vector. Iteratively, the algorithm either obtains
shorter and shorter lattice vectors corresponding to $x_i$, or
discards $x_i$ in some sieving step. At all stages of the algorithm,
$x_i-y_i$ is a lattice vector. The following two lemmas concretize
these observations.

\begin{lemma} 
  \label{lemma:xMinusyIsInLattice} 
  $\{y_i\}$ can be found efficiently in
  Step~\ref{step:initialEstimateForY}; and $\{x_i-y_i\} \subseteq
  \mathcal{L}$.
\end{lemma}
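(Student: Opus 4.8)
The plan is to realize the operation $x_i\mapsto x_i\bmod\mathcal{P}(B)$ of Step~\ref{step:initialEstimateForY} as an explicit linear-algebra computation, from which membership in $\L$ is immediate. First I would pin down the deferred definition: with $B=[b_1,\ldots,b_d]$ a basis for $\L$, let $\mathcal{P}(B)=\bigl\{\sum_{j=1}^{d}c_j b_j : 0\le c_j<1\bigr\}$ be the (half-open) fundamental parallelepiped — the ``parallelogram'' of $B$. Any point $x$ in the real span of $B$ has a \emph{unique} representation $x=\sum_j c_j b_j$ with $c_j\in\RR$, since the $b_j$ are linearly independent. Writing $c_j=\floor{c_j}+\{c_j\}$ for the integer and fractional parts, set $y=\sum_j\{c_j\}\,b_j$. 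Then $y\in\mathcal{P}(B)$ by construction, and $x-y=\sum_j\floor{c_j}\,b_j$ is a $\ZZ$-linear combination of the $b_j$, hence lies in $\L$. Taking $x=x_i$ and $y=y_i$ gives at once the second assertion $\{x_i-y_i\}\subseteq\L$.

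Next I would argue efficiency. Recovering the coefficient vector $\mathbf{c}$ amounts to solving the full-column-rank linear system $B\mathbf{c}=x_i$ (equivalently, applying a fixed left inverse of $B$), which is done in polynomial time by Gaussian elimination or Cramer's rule; standard bounds on solutions of rational linear systems show that the numerators and denominators of the $c_j$, and therefore the entries of $y_i$ and of the lattice vector $x_i-y_i$, have bit-length $(\log\norm{B}+n)^{O(1)}$, which is exactly the size regime declared for ``arithmetic operations'' in the surrounding text. Extracting $\floor{c_j}$ and $\{c_j\}$ is immediate, so $y_i$ — and simultaneously the lattice vector $x_i-y_i$ — is produced within the claimed cost.

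The one wrinkle, and the step I expect to require the most care, is that the sampled $x_i$ lives in $\RR^n$ whereas $B$ need only span a rank-$d$ sublattice with $d\le n$. I would dispatch this by first replacing $x_i$ with its orthogonal projection onto $\op{span}(B)$ — equivalently, by carrying out the whole reduction inside $\op{span}(B)$, which is where all of the algorithm's geometry (the balls $\ball{0}{\gamma}$, the sieving, etc.) is meant to take place — after which the computation of $\mathbf{c}$ and the argument above go through verbatim. Beyond this bookkeeping there is no genuine obstacle: the lemma is essentially the assertion that reduction modulo a fundamental domain is well-defined and efficiently computable, and the ``lies in the lattice'' conclusion falls out of the integrality of the $\floor{c_j}$.
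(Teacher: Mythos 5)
Your proposal is correct and takes essentially the same route as the paper: write $x_i$ as a rational linear combination of the basis vectors and truncate each coefficient modulo $1$, so that $y_i\in\mathcal{P}(B)$ by construction and $x_i-y_i$ is a $\ZZ$-linear combination of the $b_j$, hence lies in $\L$. Your added remark about the dimension mismatch ($d\le n$) flags a detail the paper's own proof glosses over --- it writes $\sum_{i=1}^{n}\alpha_i b_i$ although there are only $d$ basis vectors --- and projecting to $\op{span}(B)$ (equivalently, carrying out the whole algorithm inside that subspace) is the appropriate fix.
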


\begin{proof}
  For a fixed $x_i$, $y_i$ is set to $(x_i \bmod \mathcal{P}(B))$
  where $\mathcal{P}(B)$ denotes all vectors contained in the
  parallelogram $\big\{\sum_{i=1}^n \alpha_i b_i \ |\ 0 \leq \alpha_i
  < 1\big\}$, with $b_i$ being the given basis vectors. Thus $y_i$ is
  the unique element in $\mathcal{P}(B)$ such that $y_i=x_i-v$ for $v
  \in \mathcal{L}$. From this definition of $y_i$, we get that
  $x_i-y_i \in \mathcal{L}$ for every $i$.

  To calculate $y_i$ efficiently, simply represent $x_i$ as a rational
  linear combination of the basis vectors $\{b_i\}$ and then truncate
  each coefficient modulo $1$.
  \QED
\end{proof}

\begin{lemma} 
  $Y_\gamma \subseteq \mathcal{L} \cap \ball{0}{3\gamma+1}$.
\end{lemma}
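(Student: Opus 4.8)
The plan is to establish two invariants that the sieving loop in Steps~\ref{step:sievingStart}--\ref{step:sievingEnd} preserves, and then read off the conclusion from the loop's exit condition. Fix an index $i$ that survives to the final set $S$, so that $x_i-y_i\in Y_\gamma$, and track the value stored in $y_i$ as it is updated. Recall that $x_i$ is never modified, and $\norm{x_i}_2\leq\gamma$ since it was drawn from $\ball{0}{\gamma}$.

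First I would show that $x_i-y_i\in\mathcal{L}$ holds at every stage. The initial assignment on Step~\ref{step:initialEstimateForY} gives this by Lemma~\ref{lemma:xMinusyIsInLattice}. Each later update replaces $y_i$ by $y_i+(x_{\eta_i}-y_{\eta_i})$ where $\eta_i\in J$, and $J$ was a subset of $S$ from before the current iteration, so $x_{\eta_i}-y_{\eta_i}$ is itself a lattice vector by the invariant applied one step earlier. Hence the new value of $x_i-y_i$ differs from the old one by a lattice vector and remains in $\mathcal{L}$. This already gives $Y_\gamma\subseteq\mathcal{L}$.

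Second I would prove by induction on the sieving iterations that $\norm{y_i}_2\leq r$ at the start of each iteration, where $r$ is the current value of the loop variable. For the base case $y_i\in\mathcal{P}(B)$, so $\norm{y_i}_2\leq\sum_j\norm{b_j}_2\leq n\max_j\norm{b_j}_2=r_0$. For the inductive step, Lemma~\ref{lemma:sieveLemma} supplies $\norm{y_i-y_{\eta_i}}_2\leq r/2$, so after the update $\norm{y_i}_2=\norm{(y_i-y_{\eta_i})+x_{\eta_i}}_2\leq r/2+\gamma$, which is exactly the value $r$ is reset to. The same estimate shows the loop terminates: the map $r\mapsto r/2+\gamma$ has fixed point $2\gamma$ and strictly decreases $r$ while the guard $r>2\gamma+1$ holds, so eventually $r\leq 2\gamma+1$.

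Finally I would combine the invariants. When the inner while loop exits, the guard has failed, so $r\leq 2\gamma+1$ and therefore $\norm{y_i}_2\leq 2\gamma+1$. The triangle inequality then gives $\norm{x_i-y_i}_2\leq\norm{x_i}_2+\norm{y_i}_2\leq 3\gamma+1$, that is, $x_i-y_i\in\ball{0}{3\gamma+1}$; together with $x_i-y_i\in\mathcal{L}$ this yields $Y_\gamma\subseteq\mathcal{L}\cap\ball{0}{3\gamma+1}$. There is no substantial obstacle here; the only point needing care is aligning the invariant $\norm{y_i}_2\leq r$ with the precise update rule $r\gets r/2+\gamma$ so that the bound is maintained exactly rather than drifting over iterations.
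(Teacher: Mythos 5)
Your proof is correct and follows essentially the same route as the paper's: establish the lattice-membership invariant for $x_i-y_i$ from Lemma~\ref{lemma:xMinusyIsInLattice} and the update rule, establish the norm invariant $\norm{y_i}_2\leq r$ by induction on the sieving iterations, and conclude from the loop's exit condition $r\leq 2\gamma+1$ via the triangle inequality. Your explicit framing as a double loop-invariant argument and your remark on termination make the bookkeeping cleaner, but the underlying ideas match the paper's proof exactly.
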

\begin{proof}
  By Lemma~\ref{lemma:xMinusyIsInLattice}, $(x_i - y_i) \in
  \mathcal{L}$ for all $i\in S$ before the start of the loop. It needs
  to be proved that the same holds after the loop, and furthermore,
  all the resulting lattice vectors lie in $\ball{0}{3\gamma+1}$.
  Whenever the algorithm modifies any $y_i$, it sets it to
  $y_i+x_{\eta(i)}-y_{\eta(i)}$; and thus a lattice vector $(x_i-y_i)$
  changes into $(x_i-y_i)-(x_{\eta(i)}-y_{\eta(i)})$. Since both of
  the terms are lattice vectors, so is their difference.  Thus
  $Y_\gamma \subseteq \mathcal{L}$.
 
  We will now show that the invariant $\norm{y_i}_2 \leq r$ is
  maintained at the end of every iteration. This suffices to prove
  that $x_i-y_i \in \ball{0}{3\gamma+1}$ because $x_i \in
  \ball{0}{\gamma}$ and $\norm{y_i}_2 \leq 2\gamma+1$ by the loop
  termination condition.

  Initially, $y_i= \sum_{j=1}^n \alpha_j b_j$ for some coefficients
  $\alpha_j$ satisfying $0 \leq \alpha_j < 1$. Thus $\norm{y}_2 \leq
  \sum_j \norm{b_j}_2 \leq n \max_j \norm{b_j}_2$, the initial value
  of $r$. Consider now the result of the change $y_i \rightarrow
  y_i+x_{\eta_i}-y_{\eta_i}$. We have that
  $\norm{y_i+x_{\eta_i}-y_{\eta_i}}_2 \leq
  \norm{y_i-y_{\eta_i}}_2+\norm{x_{\eta_i}}_2$. The first of these
  terms is bounded by $r/2$ because of choice of $\eta_i$ in
  Lemma~\ref{lemma:sieveLemma}. From $\norm{x_i}_2 \leq \gamma$, we
  get that $\norm{y_i}_2 \leq r/2+\gamma$. Since the value of $r$ gets
  updated appropriately, the invariant $\norm{y_i}_2 \leq r$ is
  maintained at the end of the loop.
  \QED
\end{proof}

The following crucial lemma says that $Y_\gamma$ can be used to
compute all \emph{interesting} vectors:

\begin{lemma} 
  \label{lemma:mainLemma} 
  Let $v \in \mathcal{L}$ be a lattice vector such that
  $(2/3)\cdot\norm{v}_2 \leq \gamma < \norm{v}_2$.  Then, with
  probability at least $1-1/2^{O(n)}$, $\exists w\in\mathcal{L}$ such
  that $Y_\gamma$ contains both $w$ and $w\pm v$.
\end{lemma}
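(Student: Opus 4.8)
The plan is to follow the standard AKS/Regev perturbation argument, adapted to our setting. The key observation is that the sampled points $x_i$ are chosen uniformly from $\ball{0}{\gamma}$, and each one is written as $x_i = (x_i - y_i) + y_i$ with $x_i - y_i \in \L$. The heart of the matter is a coupling argument: I will show that replacing a sampled point $x$ by $x - v$ (reflected appropriately) does not change the distribution of the resulting lattice vector $x - y$, because the map $x \mapsto x - v$ is measure-preserving and the sieving process is deterministic given the $y_i$'s. More precisely, first I would partition $\ball{0}{\gamma}$ into the region $C_0 = \ball{0}{\gamma} \cap (\ball{0}{\gamma} + v) \cap (\ball{0}{\gamma} - v)$ and its complement; since $\norm{v}_2 \le (3/2)\gamma$, a volume computation in fixed dimension $n$ shows $\Vol(C_0)/\Vol(\ball{0}{\gamma}) \ge 2^{-O(n)}$, so a constant (in $n$, exponentially small) fraction of samples land in $C_0$.

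Next I would set up the core swapping argument. For any sample $x_i$ landing in $C_0$, both $x_i + v$ and $x_i - v$ also lie in $\ball{0}{\gamma}$, and all three reduce to the \emph{same} $y_i$ modulo $\mathcal{P}(B)$ (since $v \in \L$), hence yield the three lattice vectors $x_i - y_i$, $(x_i - y_i) + v$, $(x_i - y_i) - v$, which form the desired triple $w, w\pm v$. The subtlety — and this is where the real work lies — is that the sieving steps might \emph{discard} $x_i$ before it reaches $Y_\gamma$. To handle this, I would argue that the sieving decisions depend only on the sequence of $y$-values being processed, and that swapping $x_i \leftrightarrow x_i \pm v$ among the samples in $C_0$ leaves the multiset of $y$-values unchanged and hence leaves the set $S$ of survivors (as a set of indices, up to relabeling) unchanged; therefore, conditioned on a given index surviving, its associated lattice vector is equally likely to be $x_i - y_i$ or $(x_i - y_i) \pm v$. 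Since $m = \ceil{2^{(7+\ceil{\log\gamma})n}\log r_0}$ is chosen large enough that the expected number of survivors from $C_0$ is $\ge 2^{\Omega(n)}$ (using $|J| \le 5^n$ from Lemma~\ref{lemma:sieveLemma} and the geometric shrinking of $r$ so that the total discarded is at most $5^n \cdot \log r_0$), a Chernoff/union bound gives that with probability $1 - 1/2^{O(n)}$ some surviving index realizes each of the three values simultaneously, producing the triple.

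The main obstacle I anticipate is making the symmetry argument fully rigorous: the sieving loop processes indices in a fixed order and its branching is genuinely adaptive, so one must check carefully that conditioning on the survival pattern, the lattice vector attached to a surviving index is uniformly distributed over $\{x-y, (x-y)+v, (x-y)-v\}$ for samples originally in $C_0$. The clean way to do this is to observe that the entire execution — which indices end up in $J$ at each sieving stage, and the final $Y_\gamma$ — is a deterministic function of the tuple $(y_1,\ldots,y_m)$ alone, and that the involution swapping a random $C_0$-sample $x_i$ with $x_i + v$ (or $x_i - v$) fixes every $y_j$; hence it fixes the execution, while permuting which concrete lattice vector sits at index $i$. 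Running this involution independently for each $C_0$-index and combining over the $2^{\Omega(n)}$ survivors gives the claimed probability bound. The counting — that $m$ dominates $5^n \log r_0$ by an exponential factor in $n$ — is routine from the stated choice of $m$.
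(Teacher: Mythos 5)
Your overall architecture mirrors the paper's: uniform samples, $y_i$ invariance under lattice shifts, a symmetry/involution argument whose coin flips can be postponed past the deterministic sieving, and an exponential counting argument that survivors outnumber discards. That part is on the right track. However, there is a fatal geometric error at the very first step.

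Your region
\[
C_0 \;=\; \ball{0}{\gamma}\;\cap\;\bigl(\ball{0}{\gamma}+v\bigr)\;\cap\;\bigl(\ball{0}{\gamma}-v\bigr)
\;=\; \ball{0}{\gamma}\cap\ball{v}{\gamma}\cap\ball{-v}{\gamma}
\]
is \emph{empty} under the hypotheses of the lemma. The balls $\ball{v}{\gamma}$ and $\ball{-v}{\gamma}$ have centers at distance $2\norm{v}_2$, so they intersect only if $\norm{v}_2\le\gamma$. But the lemma assumes $\gamma<\norm{v}_2$, so they are disjoint, $\Vol(C_0)=0$, and your claimed volume bound $\Vol(C_0)/\Vol(\ball{0}{\gamma})\ge 2^{-O(n)}$ is false. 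Consequently no sample ever lands in $C_0$ and there is nothing to swap. The condition $\norm{v}_2\le(3/2)\gamma$ that you invoke only guarantees the pairwise intersections $\ball{0}{\gamma}\cap\ball{\pm v}{\gamma}$ are nonempty, not the triple intersection. The correct construction, which the paper uses, is the pair of disjoint caps $C_1=\ball{0}{\gamma}\cap\ball{v}{\gamma}$ and $C_2=\ball{0}{\gamma}\cap\ball{-v}{\gamma}$ together with the \emph{one-sided} bijection $x\mapsto x-v$ from $C_1$ to $C_2$ (and its inverse $x\mapsto x+v$). Each cap has volume fraction $\ge 2^{-2n}$, and a point in $C_1$ gets coin-flipped with its image in $C_2$, giving $w$ versus $w-v$ (symmetrically $w$ versus $w+v$). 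You cannot demand that \emph{both} $x+v$ and $x-v$ stay inside $\ball{0}{\gamma}$; only one direction is available for each sample.

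A secondary gap: even after fixing the regions, you jump from ``$2^{\Omega(n)}$ good survivors'' to ``some surviving index realizes each value.'' You still need the pigeonhole step that concentrates many of those survivors on a \emph{single} lattice vector $w$; without it, the independent coin flips could be spread thinly over distinct $w$'s and you would not get $w$ and $w\pm v$ simultaneously. The paper supplies this via the cardinality bound $|Y_\gamma|\le(3\gamma+2)^n$ (a packing argument using $s(\L)\ge 2$), which combined with the survivor count yields one $w$ carrying at least $2^{3n}$ good points. You should state and use the analogue of that packing bound explicitly. Finally, be a little more careful with the claim that the execution is ``a deterministic function of $(y_1,\ldots,y_m)$ alone'': swapping a sample that becomes a pivot \emph{does} alter subsequent $y_j$'s via the update $y_j\gets y_j+x_{\eta_j}-y_{\eta_j}$. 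The paper sidesteps this by postponing each coin flip until the first moment the corresponding $x_i$ is actually read, and noting that a surviving (non-pivot) index has its flip decided only at the final formation of $Y_\gamma$, where it is an independent fair coin.
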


Using this lemma, we can prove our main theorem, which we restate from
Section~\ref{sec:aks}:
\begin{theorem*} {\upshape (Theorem~\ref{theorem:aksWorks!})}\\
  \aksworks
\end{theorem*}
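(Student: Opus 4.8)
The plan is to assemble Theorem~\ref{theorem:aksWorks!} from the building blocks already in place: the reduction to the ``fortunate'' scaled case, Lemma~\ref{lemma:sieveLemma}, Lemma~\ref{lemma:xMinusyIsInLattice}, the containment $Y_\gamma\subseteq\L\cap\ball{0}{3\gamma+1}$, and the crucial covering statement Lemma~\ref{lemma:mainLemma}. First I would establish correctness for a single fixed scaled lattice $\L$ with $2\le s(\L)<3$. Any shortest $l_\infty$ vector $v^\ast$ of $\L$ has $\norm{v^\ast}_2\le\sqrt n\cdot\norm{v^\ast}_\infty\le\sqrt n\,\snorm{v^\ast\text{ coords}}$, and more to the point, since the shortest $l_2$ vector has norm $<3$, the shortest $l_\infty$ vector has $l_2$ norm at most $3\sqrt n$ (its $l_\infty$ norm is at most that of the shortest $l_2$ vector, which is $<3$, so each coordinate is $<3$). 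Hence $2\le\norm{v^\ast}_2\le 3\sqrt n$, and as $\gamma$ ranges over $3/2,\,(3/2)^2,\dots$ up to $3\sqrt n+1$ there is some iteration with $(2/3)\norm{v^\ast}_2\le\gamma<\norm{v^\ast}_2$, i.e. $v^\ast$ is \emph{interesting}. By Lemma~\ref{lemma:mainLemma}, with probability $1-1/2^{O(n)}$ there is $w\in\L$ with $w,w\pm v^\ast\in Y_\gamma$, so $v^\ast\in W_\gamma=\{v-w : v,w\in Y_\gamma,\ v\ne w\}$. Since every element of $W_\gamma$ is a difference of two lattice vectors in $\ball{0}{3\gamma+1}$, hence a genuine lattice vector, the shortest $l_\infty$ vector of $\bigcup_\gamma W_\gamma$ — which is what Step~\ref{step:aksforend} records as $v_k$ — is a lattice vector of $l_\infty$ norm at most $\norm{v^\ast}_\infty$, therefore exactly a shortest $l_\infty$ vector.

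Next I would handle the outer scaling loop. Run LLL to get $\lambda$ with $s(\L)\le\norm{\lambda}_2\le 2^n s(\L)$. For $k=1,\dots,2n$ the inner body is run on a basis $B_k$ of $(1.5^k/\norm{\lambda}_2)\cdot\L$, whose shortest-vector length is $s(B_k)=1.5^k s(\L)/\norm{\lambda}_2$. As $k$ increases by one this quantity multiplies by $1.5$, starting at $s(\L)/\norm{\lambda}_2\le 1$ and ending at $1.5^{2n}s(\L)/\norm{\lambda}_2\ge 1.5^{2n}/2^n\ge 2$ (indeed $(1.5^2/2)^n=(1.125)^n\ge 2$ for $n$ large, with the small cases absorbed into the $1/2^{O(n)}$ failure bound or checked directly); since consecutive values differ by a factor $1.5<2$, some $k^\ast$ satisfies $2\le s(B_{k^\ast})<3$. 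For that $k^\ast$ the single-lattice analysis applies, so $v_{k^\ast}$ is, with high probability, a shortest $l_\infty$ vector of $(1.5^{k^\ast}/\norm{\lambda}_2)\cdot\L$; rescaling by $\norm{\lambda}_2/1.5^{k^\ast}$ (Step~\ref{step:aksret}) gives a shortest $l_\infty$ vector of $\L$. For the other $k$, each $v_k$ is still a lattice vector of the scaled lattice, so each rescaled candidate lies in $\L$, and taking the $l_\infty$-shortest among all candidates can only help. A union bound over the $2n$ iterations keeps the total failure probability $1/2^{O(n)}$.

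Finally, the complexity. The dominant costs per $(\gamma,k)$-iteration are sampling $m=\ceil{2^{(7+\ceil{\log\gamma})n}\log r_0}$ points — with $\gamma=O(\sqrt n)$ this is $m=2^{O(n)}\cdot\mathrm{poly}(\log\norm{U})$ — and the sieving loop, which runs $O(\log r_0)=O(n+\log\norm{U})$ times and each pass is $O(m\cdot|J|)$ vector operations with $|J|\le 5^n$ by Lemma~\ref{lemma:sieveLemma}; all arithmetic is on rationals of bit-length $(\log\norm{U}+n)^{O(1)}$ as noted in the text, because small lattice vectors have bit-size polynomial in that of $B$. Forming $W_\gamma$ from $Y_\gamma$ costs $|Y_\gamma|^2=2^{O(n)}$ operations. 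Multiplying by the $O(n)$ values of $\gamma$, the $O(n)$ values of $k$, and the initial LLL call (which is $\norm{U}^{O(1)}$), the total is $2^{O(n\log n)}\cdot\norm{U}^{O(1)}$ bit operations; the extra $\log n$ in the exponent comes from the $5^n$ sieve bound combined with $m=2^{O(n)}$ and accounting for the bit-length of the numbers. The main obstacle is Lemma~\ref{lemma:mainLemma}, but that is stated as available, so the real work in this proof is bookkeeping: verifying that the scaling loop does land in the window $2\le s(B_k)<3$, that every recorded $v_k$ is genuinely a lattice vector (so that outputting the overall $l_\infty$-minimum never returns something spurious), and that the parameter $m$ and the $5^n$ sieve bound combine to the claimed running time while keeping all intermediate rationals of controlled size.
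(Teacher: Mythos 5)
Your proof follows the same route as the paper's: reduce via LLL and the $1.5^k$ scaling loop to the ``fortunate'' case $2\le s(\L)<3$; argue that a shortest $l_\infty$ vector $v^\ast$ then satisfies $2\le\norm{v^\ast}_2<3\sqrt n$, so some iterate $\gamma$ makes $v^\ast$ \emph{interesting}; invoke Lemma~\ref{lemma:mainLemma} to place $v^\ast\in W_\gamma$ with high probability; and note that each rescaled $v_k$ is a genuine lattice vector, so the $l_\infty$-minimum taken at the end is never spurious. The assembly of the lemmas and the reasoning about the scaling window match the paper.

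The one genuine slip is in the complexity accounting. You claim that $m=\ceil{2^{(7+\ceil{\log\gamma})n}\log r_0}$ equals $2^{O(n)}\cdot\mathrm{poly}(\log\norm{U})$, and that the $\log n$ in the final exponent ``comes from the $5^n$ sieve bound combined with $m=2^{O(n)}$.'' Neither step holds: with $\gamma=\Theta(\sqrt n)$ we have $\log\gamma=\Theta(\log n)$, so $(7+\ceil{\log\gamma})\,n=\Theta(n\log n)$ and $m$ itself is already $2^{\Theta(n\log n)}\log r_0$; and $5^n\cdot 2^{O(n)}$ is still $2^{O(n)}$, not $2^{O(n\log n)}$, so your stated source cannot produce the extra $\log n$. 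Your bottom-line bound $2^{O(n\log n)}\cdot\norm{U}^{O(1)}$ is correct, but its origin is the sample size $m$, exactly as the paper's proof identifies.
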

\begin{proof}
  Define $B_k$ to be the basis $B$ set on Step~\ref{step:akssetB} at
  iteration $k$ through the for loop on line~\ref{step:aksforstart}.
  For correctness, consider the iteration $k$ such that the lattice $\L$
  of $B_k$ satisfies $2\leq s(\L) < 3$, which we know must exist from
  the discussion above.
  
  Denote by $v_\infty$ the shortest nonzero vector in $\L$
  under the $l_\infty$ norm. We have that $l_2(v_\infty) \leq
  \sqrt{n}\cdot l_\infty(v_\infty) \leq \sqrt{n}\cdot l_\infty(v) \leq
  \sqrt{n}\cdot l_2(v)$, for any nonzero vector $v\in\L$.  Hence, the
  $l_2$ norm of the shortest $l_\infty$ vector is at most $\sqrt{n}$
  times the $l_2$ norm of the shortest $l_2$ vector.

  Since the length $s(\mathcal{L})$ of the shortest $l_2$ vector is
  assumed to satisfy $2 \leq s(\L) < 3$, we have that the $l_2$ norm of
  $v_\infty$ satisfies $\norm{v_\infty}_2 < 3 \sqrt{n}$.  Therefore at
  least one iteration of the while loop on line~\ref{step:akswhile} has
  $(2/3)\cdot \norm{v_\infty}_2 \leq \gamma < \norm{v_\infty}_2$, and
  by Lemma~\ref{lemma:mainLemma}, with high probability some
  $Y_\gamma$ contains $w$ and $w\pm v_\infty$ for some $w\in\L$. Since
  the algorithm computes the differences of the vectors in $Y_\gamma$,
  it sets $v_k$ to $v_\infty$ on Step~\ref{step:aksforend}
  with high probability.

  For the cost analysis, consider a single iteration of the 
  while loop on line~\ref{step:akswhile}.
  The value of $r_0$ is bounded by $(n\cdot
  \norm{U})^{O(1)}$. The value of $m$ is bounded by $2^{O(n\log
    \gamma)}\log r_0$, which is in turn bounded by $2^{O(n\log
    n)}\cdot \norm{U}^{O(1)}$ because $\gamma \in O(\sqrt{n})$.  Since
  the number of sieving steps is $O(\log r_0) \in O(m)$, the total
  cost of a single iteration of the while loop
  is $m^{O(1)}$. The total number of
  iterations of the while loop is $O(\log n) \in O(m)$, and there are
  exactly $2n\in O(m)$ iterations of the outer for loop.
  Each arithmetic operation costs $(n\cdot \norm{U})^{O(1)}\in O(m)$, so
  the total cost is $m^{O(1)}$, which gives the stated bound.
  \QED
\end{proof}

To prove Lemma~\ref{lemma:mainLemma}, a probabilistic argument will be
employed. The proof can be broken into three steps. First, we identify a
set of \emph{good} points from the sampled points, and argue that this
set is large. Next, we argue that there must exist a lattice point which
corresponds to numerous \emph{good} points. Finally, we argue that an
imaginary probabilistic step does not essentially change the behaviour
of the algorithm. Combined with the existence of a lattice point
corresponding to many \emph{good} points, this imaginary step allows
us to argue that the algorithm encounters both $w$ and $w\pm v$ for an
appropriate \emph{interesting} $v$.

Let $v$ be an \emph{interesting} lattice vector. That is,
$(2/3)\cdot d \leq \gamma < d$ for $d=\norm{v}_2$. For the iteration
where the algorithm uses a value of $\gamma$ in this range, we will
denote by $C_1$ the points in the set $\ball{v}{\gamma}\cap
\ball{0}{\gamma}$.  Similarly, $C_2=\ball{-v}{\gamma} \cap
\ball{0}{\gamma}$. By choice of $\gamma$, $C_1$ and $C_2$ are
disjoint. We will call the points in $C_1 \cup C_2$ \emph{good}. The
following lemma shows that probability of sampling a \emph{good} point
is large.

\begin{lemma} 
  \label{lemma:probGoodPoint} 
  $\mathbf{Pr}[x_i \in C_1] \geq 2^{-2n}$.
\end{lemma}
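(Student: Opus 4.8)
The plan is to compute the probability directly as a ratio of Euclidean volumes. Since $x_i$ is sampled uniformly from $\ball{0}{\gamma}$, we have
\[
\mathbf{Pr}[x_i\in C_1]=\frac{\Vol\bigl(\ball{0}{\gamma}\cap\ball{v}{\gamma}\bigr)}{\Vol\bigl(\ball{0}{\gamma}\bigr)},
\]
so it suffices to bound the volume of the intersection of the two balls from below by $4^{-n}$ times the volume of $\ball{0}{\gamma}$.

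The one geometric idea needed is that the intersection $\ball{0}{\gamma}\cap\ball{v}{\gamma}$ contains a (nondegenerate) ball of radius $\gamma/4$ centered at the midpoint $v/2$. Indeed, because $v$ is \emph{interesting} we have $(2/3)\norm{v}_2\leq\gamma$, i.e.\ $\norm{v}_2\leq(3/2)\gamma$; hence any $x$ with $\norm{x-v/2}_2\leq\gamma/4$ satisfies, by the triangle inequality,
\[
\norm{x}_2\leq\norm{x-v/2}_2+\tfrac12\norm{v}_2\leq\tfrac{\gamma}{4}+\tfrac{3\gamma}{4}=\gamma,
\]
and symmetrically $\norm{x-v}_2\leq\gamma$, so $x\in\ball{0}{\gamma}\cap\ball{v}{\gamma}$. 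In particular $\gamma/4>0$, so this inscribed ball is nonempty.

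It then remains only to compare volumes of $n$-dimensional balls, which scale as the $n$-th power of the radius: writing $V_n$ for the volume of the unit $n$-ball, $\Vol(\ball{v/2}{\gamma/4})=(\gamma/4)^nV_n$ and $\Vol(\ball{0}{\gamma})=\gamma^nV_n$, so their ratio is exactly $(1/4)^n=2^{-2n}$. Combining this with the displayed identity for $\mathbf{Pr}[x_i\in C_1]$ yields $\mathbf{Pr}[x_i\in C_1]\geq 2^{-2n}$. There is essentially no hard step here; the only point requiring care is extracting the radius $\gamma/4$, which relies on the upper bound $\norm{v}_2\leq(3/2)\gamma$ coming from $\gamma\geq(2/3)d$, and on checking this keeps the inscribed ball from degenerating.
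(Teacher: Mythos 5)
Your proof is correct and follows essentially the same route as the paper: both lower-bound the intersection $\ball{0}{\gamma}\cap\ball{v}{\gamma}$ by an inscribed ball centered at $v/2$ and then compare volumes by the $n$-th-power scaling of ball volumes. The paper writes the inscribed radius as $\gamma-d/2$ and then uses $\gamma\geq(2/3)d$ to bound it by $\gamma/4$, whereas you go directly to radius $\gamma/4$ and verify containment via the triangle inequality; this is a cosmetic difference, not a different argument.
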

\begin{proof} 
  The radius of both $\ball{0}{\gamma}$ and $\ball{v}{\gamma}$ is
  $\gamma$. The distance between the centers is $d=\norm{v}_2$. Thus
  the intersection contains a sphere of radius $\gamma-d/2$ whose
  volume gives a lower bound on the volume of $C_1$.  Comparing with
  the volume of $\ball{0}{\gamma}$ and using the fact that $\gamma
  \geq (2/3)\cdot d$, we get that
  \[
  \mathbf{Pr}[x_i \in C_1] \geq
  \frac{\Vol(\ball{0}{\gamma-d/2})}{\Vol(\ball{0}{\gamma})} \geq
  \left(\frac{\gamma/4}{\gamma}\right)^n=2^{-2n}.
  \]
  \QED
\end{proof}

Informally, the following lemma says that if $S$ is large at the end
of the inner loop, the set $\{x_i-y_i\}$ has many repetitions and
hence is never very large.

\begin{lemma} 
  \label{lemma:numLatticeRemaining} 
  $|Y_\gamma| \leq (3\gamma+2)^n$.
\end{lemma}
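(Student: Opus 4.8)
The plan is to reuse the volume-packing argument already employed in the proof of Lemma~\ref{lemma:sieveLemma}, now applied to the lattice $\L$ itself rather than to a set of perturbed points. We have already shown that $Y_\gamma \subseteq \L \cap \ball{0}{3\gamma+1}$, so it suffices to bound the number of \emph{distinct} points of $\L$ lying in the $l_2$-ball of radius $3\gamma+1$ about the origin.

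First I would invoke the standing hypothesis of this portion of the appendix: $B$ is a basis for a lattice $\L$ satisfying $2 \leq s(\L) < 3$, so in particular any two distinct points of $\L$ are at $l_2$-distance at least $2$. Equivalently, the (open) balls of radius $1$ centered at the points of $\L$ are pairwise disjoint. Then, for each $v \in Y_\gamma$ we have $\norm{v}_2 \leq 3\gamma+1$, hence $\ball{v}{1} \subseteq \ball{0}{3\gamma+2}$. Since these balls are pairwise disjoint and all contained in $\ball{0}{3\gamma+2}$, comparing volumes yields
\[
  |Y_\gamma| \cdot \Vol\big(\ball{0}{1}\big) \leq \Vol\big(\ball{0}{3\gamma+2}\big),
\]
and since $\Vol(\ball{0}{R}) = R^n\,\Vol(\ball{0}{1})$ for every $R>0$, this gives exactly $|Y_\gamma| \leq (3\gamma+2)^n$.

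There is essentially no obstacle here: this is the standard sphere-packing bound on the number of lattice points inside a ball, identical in spirit to the counting carried out in Lemma~\ref{lemma:sieveLemma}. The only point requiring a moment's care is to remember that we are working in the ``fortunate'' regime $s(\L) \geq 2$, which is precisely what makes a packing radius of $1$ (independent of the basis) legitimate; without that assumption the bound as stated would fail.
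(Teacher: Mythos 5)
Your proof is correct and follows essentially the same sphere-packing argument as the paper: use $s(\L)\geq 2$ to pack disjoint unit balls around the points of $Y_\gamma$, note they all lie inside $\ball{0}{3\gamma+2}$, and compare volumes. The paper's proof is identical in substance.
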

\begin{proof} 
  The points in $\mathcal{L}$ are separated by a distance of at least
  $2$ since we assumed $s(\mathcal{L}) \geq 2$. Hence balls of radius
  $1$ around each lattice point are pairwise disjoint. If we consider
  only the balls corresponding to points in $Y_\gamma$, all of them
  are contained in a ball of radius $3\gamma+2$ since $Y_\gamma
  \subseteq\ball{0}{3\gamma+1}$ by
  Lemma~\ref{lemma:xMinusyIsInLattice}. Thus the total number of
  points in $Y_\gamma$ is at most
  $\Vol(\ball{0}{3\gamma+2})/\Vol(\ball{0}{1})=(3\gamma+2)^n$.
\end{proof}

The following lemma argues that there must be a lattice point
corresponding to many \emph{good} points.

\begin{lemma} 
  \label{lemma:lotsOfGoodPoints}
  With high probability, there exists $w \in Y_\gamma$ and $I
  \subseteq S$ such that $|I| \geq 2^{3n}$, and for all $i \in I$,
  $x_i \in C_1 \cup C_2$ and $w = x_i-y_i$.
\end{lemma}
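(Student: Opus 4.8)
The plan is to combine the lower bound on the probability of sampling a \emph{good} point (Lemma~\ref{lemma:probGoodPoint}) with a concentration inequality to show that roughly a $2^{-2n+1}$ fraction of the $m$ samples lie in $C_1\cup C_2$, then to observe that the sieving loop discards only negligibly many indices, and finally to pigeonhole the surviving good indices over the comparatively small set $Y_\gamma$. This establishes the first two ``steps'' needed later for Lemma~\ref{lemma:mainLemma}.

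First I would record a symmetry: since the sampling distribution on $\ball{0}{\gamma}$ is invariant under $x\mapsto -x$, which carries $C_1$ onto $C_2$, Lemma~\ref{lemma:probGoodPoint} gives $\mathbf{Pr}[x_i\in C_1\cup C_2]=2\,\mathbf{Pr}[x_i\in C_1]\geq 2^{-2n+1}$ for each $i$, the union being disjoint because $\gamma<\norm{v}_2$. Since the $x_i$ are independent, a Chernoff bound (Chebyshev would also do) shows that, except with probability $2^{-\Omega(m2^{-2n})}$ --- which is far below $1/2^{O(n)}$, as $m\geq 2^{(7+\ceil{\log\gamma})n}\log r_0$ and $\ceil{\log\gamma}\geq 1$ force $m2^{-2n}\geq 2^{6n}$ --- at least $m2^{-2n}$ of the samples are good. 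Call this set of indices $G\subseteq\{1,\dots,m\}$.

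Next I would bound how many elements of $G$ are deleted from $S$ during sieving (the inner \textbf{while} loop, i.e.\ Steps~\ref{step:sievingStart}--\ref{step:sievingEnd} iterated). The loop variable $r$ begins at $r_0$ and is updated by $r\gets r/2+\gamma$ until $r\leq 2\gamma+1$, so the loop runs $O(\log r_0)$ times; by Lemma~\ref{lemma:sieveLemma} each pass removes at most $5^n$ indices, hence at most $5^n\cdot O(\log r_0)$ indices are removed overall. Because $m2^{-2n}\geq 2^{6n}\log r_0$ dominates $5^n\cdot O(\log r_0)$ for large $n$, at least $\tfrac12 m2^{-2n}$ good indices survive; that is, $|G\cap S|\geq\tfrac12 m2^{-2n}$ when the loop halts. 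Each $i\in G\cap S$ determines a lattice vector $x_i-y_i\in Y_\gamma$, and by Lemma~\ref{lemma:numLatticeRemaining} together with $\gamma\geq 3/2$ we have $|Y_\gamma|\leq(3\gamma+2)^n\leq 2^{(\ceil{\log\gamma}+2)n}$. Pigeonholing $G\cap S$ along $i\mapsto x_i-y_i$ produces some $w\in Y_\gamma$ and a subset $I\subseteq G\cap S$ with $x_i-y_i=w$ for every $i\in I$ and
\[
|I|\ \geq\ \frac{\tfrac12 m2^{-2n}}{(3\gamma+2)^n}\ \geq\ \frac{\tfrac12\cdot 2^{(7+\ceil{\log\gamma})n}\log r_0\cdot 2^{-2n}}{2^{(\ceil{\log\gamma}+2)n}}\ =\ 2^{3n-1}\log r_0\ \geq\ 2^{3n},
\]
using $\log r_0\geq 2$, which holds in the relevant iteration since $r_0\geq n\cdot s(\L)\geq 2n$. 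This $w$ and $I$ are exactly what the lemma requires.

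I expect the main obstacle to be the bookkeeping in the middle step: one must account for the index losses summed over all $O(\log r_0)$ sieving passes, not just one, and check that the slack deliberately built into $m$ --- both the $2^{7n}$ factor and the $\log r_0$ factor --- suffices to absorb those losses \emph{and} the pigeonhole division by $|Y_\gamma|$ while still leaving at least $2^{3n}$ indices. The symmetry remark and the concentration estimate are routine once they are set up.
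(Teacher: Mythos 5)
Your proof is correct and follows the same outline as the paper's: lower-bound the number of \emph{good} samples using Lemma~\ref{lemma:probGoodPoint}, subtract the at-most-$5^n$-per-pass sieving losses over the $O(\log r_0)$ passes, and pigeonhole the surviving good indices over $Y_\gamma$ via Lemma~\ref{lemma:numLatticeRemaining}.

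Two small points where you are actually tighter than the paper. First, the paper works with expectations only and never names a concentration inequality; you correctly observe that a Chernoff (or Chebyshev) bound is what converts the expected count $m\cdot 2^{-2n}$ into a high-probability statement, with failure probability $2^{-\Omega(m2^{-2n})}$, which is amply absorbed by the $1/2^{O(n)}$ budget. Second, you bound $|Y_\gamma|\leq(3\gamma+2)^n\leq 2^{(\ceil{\log\gamma}+2)n}$, which is valid for every $\gamma\geq 3/2$ encountered in the loop; the paper instead invokes $3\gamma+2\leq 4\gamma$, which requires $\gamma\geq 9/4$ and hence silently excludes the first ($\gamma=3/2$) iteration. (The paper's final displayed quotient also has an exponent mismatch with the preceding line --- $2^{(4+\ceil{\log\gamma})n}$ versus the just-derived $2^{(2+\ceil{\log\gamma})n}$ --- whereas your arithmetic closes cleanly to $2^{3n-1}\log r_0\geq 2^{3n}$ via $\log r_0\geq 2$.) None of these change the substance; your proposal is a slightly more careful rendering of the same argument.
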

\begin{proof} 
  Since $\mathbf{Pr}[x_i \in C_1 \cup C_2]$ is at least $2^{-2n}$ by
  Lemma~\ref{lemma:probGoodPoint}, and the number of points sampled is
  $\ceil{2^{(7+\ceil{\log(\gamma)})n} \log r_0}$, the expected number
  of \emph{good} points sampled at the start is at least
  $2^{(5+\ceil{\log(\gamma)})n} \log r_0$. The loop performs $\log
  r_0$ iterations removing (by Lemma~\ref{lemma:sieveLemma}) at most
  $5^n$ points per iteration. The total number of \emph{good} points
  remaining in $S$ after the sieving steps is
  $(2^{(5+\ceil{\log(\gamma)})n}-5^n) \log r_0 \geq
  2^{(2+\ceil{\log(\gamma)})n}\log r_0$ since $5^n \leq 2^{3n}$.

  By Lemma~\ref{lemma:numLatticeRemaining}, $|Y_\gamma| \leq
  (3\gamma+2)^n$. Since $3\gamma+2 \leq 4\gamma$ for $\gamma \geq
  (3/2)^2$, $|Y_\gamma| \leq 2^{(2+\log(\gamma))n}$. Hence, there
  exists a $w \in Y_\gamma$ corresponding to at least
  $2^{(4+\ceil{\log(\gamma)})n}\log r_0/2^{(2+\log(\gamma))n}
  \geq 2^{3n}$ \emph{good} points.
  \QED
\end{proof}

The final step in the analysis is to argue that for such a $w \in
Y_\gamma$, we must also have that $w\pm v \in Y_\gamma$ with high
probability for an \emph{interesting} $v \in \mathcal{L}$.

\noindent\textit{Proof of Lemma~\ref{lemma:mainLemma}}

Consider the iteration where $\gamma$ satisfies $(2/3)\cdot\norm{v}
\leq \gamma < \norm{v}$ for an \emph{interesting} lattice vector $v$.

It can be easily seen that $x \in C_1$ if and only if $x-v \in
C_2$. Consider an imaginary process performed just after sampling all
the $x_i$. For each $x_i \in C_1$, with probability $1/2$, we replace
it with $x-v \in C_2$. Similarly, for each $x \in C_2$, we replace it
with $x+v \in C_1$. (This process cannot be performed realistically
without knowing $v$, and is just an analysis tool.) The definition of
$y_i$ is invariant under addition of lattice vectors $v \in
\mathcal{L}$ to $x_i$, and hence the $y_i$ remain the same after this
process.

Since the sampling was done from the uniform distribution and since
$(x \in C_1) \leftrightarrow (x-v \in C_2)$ is a bijection, this
process does not change the sampling distribution.

We may postpone the probabilistic transformation $x_i \leftrightarrow
(x_i-v)$ to the time when it actually makes a difference. That is,
just before the first time when $x_i$ is used by the algorithm. The
algorithm uses $x_i$ in two places. For $i \in J$ during the
\emph{sieving} step, we perfom this transformation immediately after
computation of $J$. Another place where $x_i$ is used is the
computation of $Y_\gamma$. We perform this transformation just before
this computation.

In the original algorithm (without the imaginary process), by
Lemma~\ref{lemma:lotsOfGoodPoints}, there exists a point $w \in
Y_\gamma$ corresponding to at least $2^{3n}$ \emph{good} points. Let
$\{x_i\}$ be this large set of \emph{good} points. With high
probability, there will be many $x_i$ which remain unchanged, and also
many $x_i$ which get transformed into $x_i\pm v$. Thus, $Y_\gamma$
contains both $w$ and $w \pm v$ with high probability.  \QED


\end{document}